\newcommand{\p}{\partial}
\newcommand{\dd}{{\rm d}}
\begin{document}

\title{Surface gravity of compact non-degenerate horizons under the dominant energy condition}
\titlerunning{Surface gravity of compact non-degenerate horizons}

\author{Sebastian Gurriaran\thanks{{\'E}cole Normale Sup\'erieure (ENS) Paris-Saclay \email{sebastian.gurriaran@ens-paris-saclay.fr}} and Ettore Minguzzi\thanks{Dipartimento di Matematica e Informatica ``U. Dini'', Universit\`a degli
Studi di Firenze,  Via S. Marta 3,  I-50139 Firenze, Italy. \email{ettore.minguzzi@unifi.it}}}

\institute{}

%
%
%

\date{}

\maketitle

\begin{abstract}
\noindent
We prove that under the dominant energy condition any non-degene\-ra\-te smooth compact  totally geodesic horizon admits a smooth tangent vector field of constant non-zero surface gravity. This result generalizes previous work by Isenberg and Moncrief, and by Bustamante and Reiris to the non-vacuum case, the vacuum case being given a largely independent proof. Moreover, we prove that any such achronal non-degenerate horizon is actually a Cauchy horizon bounded on one side by a chronology violating region.
\end{abstract}


\section{Introduction}

In this work we are going to study the surface gravity over compact null hypersurfaces.
This study is motivated by the strong cosmic censorship conjecture. Roughly speaking, this conjecture states that under reasonable physical assumptions, spacetimes have to be globally hyperbolic. This means that no complete spacelike hypersurface $S$ should have a non-empty  Cauchy horizon $H^+(S)$.

Naturally, one could hope to start proving the compact case of the conjecture, which means to show that under reasonable conditions compact Cauchy horizons cannot form.
Notice that the cosmic censorship conjecture is framed in the context of general, non-necessarily vacuum spacetimes. The rigidity case of this problem asks to prove that for vacuum spacetimes (the vacuum condition can be regarded as a non-generic  condition) compact Cauchy horizons can form but just in special circumstances.\footnote{As another example of a rigidity case, consider Hawking-Penrose's singularity theorem. Its rigidity case is the Lorentzian splitting theorem, which states that under the assumptions of H.-P.'s theorem, provided we drop the genericity condition,   there can be a complete timelike line, but only under a very special circumstance: the spacetime splits as a product.}

The precise formulation of this rigidity case  was given by Isenberg and Moncrief \cite{moncrief83}.
They conjectured that every smooth compact null hypersurface (hence generated by lightlike geodesics) in a (electro-)vacuum spacetime is actually Killing, namely, there exists a Killing vector field tangent to the hypersurface \cite{moncrief83}. They were able to  prove the result under the additional assumptions \cite{isenberg85}, (a)  the metric and horizon are analytic,\footnote{Now we know that the analiticity of the horizon follows from that of the metric \cite{minguzzi14d}.} and (b) the horizon generators are closed; but conjectured that (a) and (b) could be dropped.

In a subsequent remarkable paper \cite{moncrief08} they introduced the ribbon argument. With it they were able to prove, under the assumption of the existence of a certain atlas  \cite[Eq.\ (4.1)]{moncrief08} (which implies the existence of an integrable distribution), under analyticity,  and still in the vacuum  case, that if the horizon admits a future incomplete generator, then all generators are actually future incomplete. Unfortunately, the existence of a foliation (integrable distribution) transverse to the generators is quite a non-trivial matter, the more so in the analytic case, a fact which excluded some types of geodesic dynamics from their analysis.
This problem was  addressed by Bustamante and Reiris in \cite{reiris21}.
In the smooth category they were able to replace the coordinate slices (foliations) of \cite{moncrief08,moncrief20} with  a  `horizontal' not-necessarily-integrable distribution by introducing the `horizontal exponential map'.

Let $n$ be a future-directed lightlike vector field tangent to the horizon $H$. Here the surface gravity is the ($n$-dependent) function $\kappa: H \to \mathbb{R}$
\[
\nabla_n n=\kappa n.
\]
Isenberg and Moncrief were able to show \cite{moncrief83}, under analyticity, that the  surface gravity $\kappa$ has to be constant and that the cases $\kappa\ne 0$ (non-degenerate) and $\kappa =0$ (degenerate) should be studied separately.

The former would correspond to generators incomplete in one direction and the latter to complete generators. Actually, they showed that any compact analytic null hypersurface with constant non-zero surface gravity and closed generators is  a Cauchy horizon \cite{moncrief83}.

The analytic limitation prevented the proof of their original conjecture. In \cite{friedrich99} Friedrich, R\'acz and Wald proved the existence of the Killing vector on one side of the  horizon. They worked in the smooth category and under a vacuum assumption but still imposed the closure of the generators. In a subsequent work R\'acz showed that, under the same assumptions, the constancy of surface gravity could be proved already for the non-vacuum case provided the dominant energy condition was imposed \cite[Prop.\ 4.1]{racz00}.

In a series of recent papers Petersen \cite{petersen18b,petersen19} and Petersen and R\'acz \cite{petersen18} have  shown that in the vacuum smooth  case and for smooth horizons it is possible to prove the existence of a Killing vector field tangent to the horizon provided a {\em smooth} choice for $n$ exists such that $\kappa=-1$. Finally, Bustamante and Reiris \cite{reiris21} proved, by using a modified ribbon argument and the horizontal exponential map mentioned above, that in the vacuum smooth case it is indeed possible to choose $\kappa=-1$. This solved the Isenberg--Moncrief conjecture in the non-degenerate case.

Of course, these results have  very important consequences, as they fix considerably the possible topologies and symmetries of a spacetime admitting a compact Cauchy horizon \cite{isenberg92,moncrief20,bustamante21}.

The purpose of this work is to study the constancy of surface gravity over compact null hypersurfaces without the vacuum assumption.
We shall be able to generalize most of the previous results to the non-vacuum case under a dominant energy condition. This result can be obtained rather easily, provided some key observations are made.



Under the dominant energy condition we shall  prove that compact Cauchy horizons admitting an incomplete geodesic generator (i.e.\ non-degenerate) do admit a smooth field $n$ with constant surface gravity normalized to $-1$.
In a sense what we show in this paper is that these compact horizons  are rigid.
The Einsten equations in vacuum serve just   to propagate this rigidity well outside the horizon.
Already in the vacuum case our proof is alternative that  of \cite{reiris21}. Given the importance of the results in \cite{reiris21} we find it useful to present a different approach and proof, in fact most of the paper will be devoted to it.

The main difference with respect to \cite{reiris21} is that, while they work with a vector field $Z$ defined just on the ribbon and proportional to $n$, we only work with a smooth vector field $n$ which is globally defined over $H$.  Compactness arguments become particularly transparent in our approach and we do not need to introduce the horizontal exponential map or some frame bundle. The ribbon argument is used just once, and then the ribbon is not mentioned anymore, contrary to \cite{reiris21} where it is present through the whole proof.

Letting $\Lambda(p)$ be the geodesic length of the generator starting at $p$ with tangent $n(p)$, our approach not only allows us to prove that $\Lambda$ is smooth, it also allows us to compute the first and second covariant derivatives of $\Lambda$, and to show that the latter satisfies an elliptic PDE (in the vacuum case). The smoothness of $\Lambda$, which is key to show the existence of a smooth $n$ with constant $\kappa$, follows then immediately by a bootstrap argument. The non-vacuum case requires a slightly longer analysis.

Another difference is that, while they basically prove constancy of $\kappa$ and smoothness of $\Lambda$ at once, we first prove that $n$ can be chosen so that  $\kappa$ is  negative and bounded away from zero, and then use this property to show that the integral expression for  $\Lambda$ can be differentiated. In the former step we are also able to obtain a rather clear connection between the divergent behavior of the integral $\int \kappa(s)\dd s$ along one generator and the incompleteness of the generators (cf.\ the equivalence between 1 and 4 in Def.\ \ref{der}).

Although longer, our proof provides some fine details on the analysis of the problem and also leads to some interesting equations in the non-vacuum case (cf.\ Thm.\ \ref{puy}, Sec.\ \ref{nnn}, Eq.\ (\ref{cqy})).

{
\begin{remark}
Anticipating some of the notation to be introduced later on, we might more precisely outline the proof strategy as follows. If every generator of the horizon starting with velocity $n$  is future incomplete, and if the finite affine length $\Lambda(p)$ depends smoothly on the starting point $p$ of the generator, then the  vector field $K=\Lambda n$ is easily shown to have surface gravity $-1$.
Unfortunately,  the affine length has expression (Lemma \ref{cpx}) $\Lambda=\int_0^\infty \exp (\int_0^\rho \kappa)d \rho$, and it is difficult to control convergence and regularity under change of generator.

Our strategy consists in  showing that there is a sufficiently large constant $C>0$, independent of the generator, such that, if we take instead the approximating expression  $n'= (\int_0^C \exp (\int_0^\rho \kappa) d \rho ) n$ we get {\em negative} surface gravity $\kappa'$ for $n'$ (Thm. \ref{bhq}) provided just one generator is future incomplete.
It turns out that we just need $C$ as in the result of Proposition \ref{viu} for this to work. Thus we follow the steps:
\begin{enumerate}
\item If a generator is incomplete then over the same curve ($s$-parametrized, with $n=d/ds$)  we have  $\int_0^\infty \kappa(s)\dd s =-\infty$.
\item Using the ribbon argument, if the equation  $\int_0^\infty \kappa(s)\dd s =-\infty$ is true for one generator, then it is true for every generator.
\item Using this fact, and compactness, on a non-degenerate horizon we can get a global $C>0$ such that for $\rho\geq C$, and for every generator, $\int_0^\rho \kappa(s) \dd s <0$.
\item The vector field $n'=(\int_0^C \exp (\int_0^s\kappa) d s ) n$ has negative surface gravity, and hence every generator is future incomplete.
\item  $\Lambda'$, the affine length obtained using $n'$, is now finite. Its smoothness follows from the fact that $\Lambda'$ has an integral expression, together with its derivatives, which involves the exponential  $e^{\int_0^\rho \kappa'(s)\dd s}$ with $\kappa'$ upper bounded by a negative constant (by continuity and compactness). Finally, the field $\Lambda' n'$ is smooth and has surface gravity $-1$.
\end{enumerate}
\end{remark}

}

Continuing with the description of the paper, in Section \ref{nnn} we are able to prove, in any dimensions, a result previously obtained by Moncrief and Isenberg under assumptions of analyticity and closure of generators, namely that in the non-degenerate case horizons are in fact Cauchy horizons bounded by regions of chronology violation, cf.\ \cite{moncrief20} and \cite[Cor.\ 2.13]{petersen19} for the   vacuum case. We also argue that the  horizon classifications by Rendall \cite{rendall98}, and by Bustamante and Reiris \cite{bustamante21} can be extended to the non-vacuum case.

It is pleasing that the constancy of surface gravity on (non-degenerate) compact horizons can  be proved by just imposing the dominant energy condition as this result becomes completely analogous to that on the constancy of surface gravity on (non-extremal) Killing  horizons in Black Holes physics \cite{bardeen73} \cite[Sec.\ 12.5]{wald84} \cite[Thm.\ 7.1]{heusler96} \cite[Thm.\ 4.3.12]{chrusciel20}. In fact, the latter result can be obtained from the former by using the usual compactification trick as in \cite[Sect.\ 2]{friedrich99} (the other direction in not possible). As mentioned, this type of results appeared in the literature but only under the assumption of closure of generators \cite{racz00}.

Finally, it is worth mentioning that this work is much self contained and that all computations are coordinate independent. For instance, we do not need to introduce Gaussian null coordinates. Hopefully, this approach, which we find
particularly
efficient, will be appreciated by some readers.

Let us introduce some notations and conventions. The spacetime $(M,g)$ is a smooth connected time-oriented  Lorentzian manifold of dimension $n+1$. The signature of $g$ is $(-, +, \cdots, +)$. The symbol of inclusion is reflexive, $X\subset X$. The compact null hypersurface or the compact Cauchy horizon is denoted $H$,  smooth future-directed lightlike vector fields tangent to $H$ are denoted $n$ (this should not cause confusion with the manifold dimension), their integral parameter is denoted with $s$ or $\rho$, and their flow is denoted $\varphi_s$.
 The reader is referred to \cite{minguzzi18b} for all the other conventions adopted without mention in this work.

\section{Mathematical preliminaries}

We recall that the {\em null (energy) convergence condition} is: for every null vector $X\in TM$, $\textrm{Ric}(X,X) \ge 0$. By {\em ($\Lambda$-)dominant energy condition} we shall understand the following property: Let $\Lambda\in \mathbb{R}$ be a constant (the cosmological constant). The endomorphism $T: TM \to TM$ given by
\begin{equation} \label{fir}
T:\ X\mapsto \{\textrm{Ric}(X,\cdot)-[\frac{1}{2} R+\Lambda] g(X,\cdot)\}^\sharp
\end{equation}
sends the future timelike causal cone into the past causal cone (and, by continuity, the future causal cone into the past causal cone). Since the scalar product of a past causal vector with a future causal vector is non-negative, the dominant energy condition implies the null convergence condition \cite{hawking73}. Of course, under the Einstein's equations, $T$ can be interpreted as the stress-energy tensor.

A $C^2$  hypersurface is known to be null iff locally achronal and ruled by lightlike geodesics \cite{kupeli87,galloway00} \cite[Thm.\ 6.7]{minguzzi18b}. This property suggests how to define the notion of
$C^0$ null hypersurface \cite{galloway00}. Precisely a {\em past $C^0$ null hypersurface } $H$ is a locally achronal topological
embedded hypersurface such that for every $p\in H$ there is a past inextendible lightlike geodesic contained in $H$  with future endpoint $p$. Past $C^0$ null hypersurface will also be called  {\em future horizons}.
The  most notable example of future horizon is the future Cauchy horizon $H^+(S)$ of a partial Cauchy hypersurface  $S$.

Horizons can be highly irregular \cite{chrusciel98}, and this also extends to compact Cauchy horizons \cite{budzynski03}. However, under the null energy condition we have \cite[Thm.\ 18]{minguzzi14d} and \cite[Thm.\ 1.18]{larsson14}

\begin{theorem} \label{mai}
Suppose that the null convergence condition holds. Let $H$ be a compact
achronal
\footnote{Incidentally, with Prop.\ \ref{ptg} we shall prove that {\em achronal} can be dropped.}
future horizon whose past inextendible generators are past complete.  Then  $H$ is an embedded  smooth null hypersuface
which is analytic if the metric is analytic. Moreover, $H$ is totally geodesic and it is generated by inextendible lightlike geodesics.
\end{theorem}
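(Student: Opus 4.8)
The plan is to control the null generators of $H$ through the Raychaudhuri equation and the null convergence condition, and then to upgrade the resulting rigidity to genuine smoothness using compactness. First I would set up the \emph{regularity substrate}: an achronal $C^0$ future horizon is locally the graph of a semi-convex function, as for any achronal boundary, hence by Alexandrov's theorem it admits a second-order Taylor expansion at almost every point. At such points one can define the null Weingarten map of $H$ along the generator through $p$, in particular its trace, the expansion $\theta$, and its trace-free part, the shear $\sigma$. The non-differentiability set of $H$ is contained in the set of future endpoints of generators, so smoothness will ultimately amount to showing that \emph{no generator has a future endpoint}.

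Second, I would prove the barrier inequality $\theta\le 0$. Fix $p\in H$; by definition there is a past-inextendible generator $\gamma$ with future endpoint $p$, affinely parametrized by $s$ increasing to the future with $n=\dd/\dd s$. The Raychaudhuri equation
\[
\frac{\dd \theta}{\dd s} = -\,\textrm{Ric}(n,n) - \sigma^2 - \tfrac{1}{n-1}\theta^2
\]
together with the null convergence condition gives $\dd\theta/\dd s \le -\tfrac{1}{n-1}\theta^2$. If $\theta(p)>0$, then integrating this inequality \emph{into the past} forces $1/\theta$ to reach $0$ at a finite affine distance, i.e.\ a focal point of $\gamma$ in the past; beyond a focal point the generator would enter the chronological past of $H$ and leave the achronal set, contradicting past-inextendibility, while past-completeness guarantees that enough affine parameter is available to reach the focal point. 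Hence $\theta\le 0$, which at non-differentiable points I would phrase in Galloway's support sense by touching $H$ from the appropriate side with smooth null hypersurfaces.

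Third comes the rigidity step, where compactness enters. The expansion governs the transverse area element along $n$ via $\mathcal{L}_n\epsilon=\theta\,\epsilon$, so $\theta\le 0$ makes cross-sectional areas non-decreasing into the past. The generators being past-complete, the past flow $\varphi_s$ ($s\le 0$) is defined on the compact $H$, and a Poincar\'e-type recurrence argument then prevents the areas of the iterated images of a cross-section from growing without bound; this is compatible with $\theta\le 0$ only if $\theta\equiv 0$ (equivalently, the flow-average of $\theta$ vanishes while $\theta\le 0$). Feeding $\theta\equiv 0$ back into the Raychaudhuri equation and using the null convergence condition forces $\sigma\equiv 0$ and $\textrm{Ric}(n,n)\equiv 0$, so the full null Weingarten map vanishes and $H$ is \textbf{totally geodesic}.

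Finally, with $\theta\equiv 0$ there is no focusing, so no generator acquires a future (or past) endpoint: the generators are inextendible lightlike geodesics that foliate $H$. This eliminates the non-differentiability set and, with semi-convexity, yields a $C^1$ ruling; a bootstrap using the smooth dependence of geodesics on initial data then promotes this to $C^\infty$, and analytic dependence gives analyticity when the metric is analytic. The \textbf{main obstacle} is the interplay between low regularity and the classical focusing argument: one must run the Raychaudhuri inequality and the area/recurrence estimate in the merely semi-convex (Alexandrov) setting, i.e.\ justify $\theta\le 0$ in a support sense at \emph{every} point and deduce $\theta\equiv 0$ from a vanishing flow-average despite the a priori measure-zero singular set. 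Closing this regularity gap, rather than performing the formal computation, is where the real work lies.
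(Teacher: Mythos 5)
This theorem is not proved in the paper at all: it is imported from the literature, namely from \cite{minguzzi14d} and \cite{larsson14} (the text only remarks that the horizon version is ``immediate from the proofs'' of the Cauchy-horizon statements there). So the comparison is really with those references, and your sketch does reproduce their overall strategy: semi-convexity of achronal horizons and Alexandrov points, a support-sense Raychaudhuri inequality giving $\theta\le 0$ via past completeness and the focusing/achronality contradiction, a compactness argument forcing $\theta\equiv 0$, and then vanishing of shear and absence of generator endpoints. Two points deserve flagging. First, the step that kills $\theta<0$ is not really Poincar\'e recurrence but an area-theorem argument: the past flow along generators maps $H$ into itself, the cross-sectional area element is non-decreasing to the past, and the finiteness of the total area of the compact $H$ forces $\theta=0$ almost everywhere (in the Alexandrov sense); your phrasing gestures at this but the monotonicity-plus-finite-total-area mechanism is the actual engine. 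Second, and more seriously, the final upgrade to $C^\infty$ (and analyticity) is where most of the work in \cite{minguzzi14d,larsson14} lies, and ``a bootstrap using smooth dependence of geodesics on initial data'' does not capture it: one must first show that, since no generator has an endpoint, $H$ is locally both a past and a future horizon, hence simultaneously semi-convex and semi-concave and therefore $C^{1,1}$, and only then propagate higher regularity along the generators using the transport (Riccati) equations with $\theta=\sigma=0$. You correctly identify the low-regularity formulation of $\theta\le 0$ as a genuine obstacle, but the $C^{1,1}\Rightarrow C^\infty$ stage is a second, independent obstacle that your outline leaves essentially unaddressed.
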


This result was phrased in terms of Cauchy horizons in \cite{minguzzi14d,larsson14} (but Theorem \ref{mai} is immediate from the proofs).
Indeed, the important application to future Cauchy horizons was possible because of the proof that future Cauchy horizons (also non-smooth ones) have past complete generators \cite{minguzzi14,krasnikov14} (the original proof  by Hawking  \cite{hawking73} relied on a smoothness assumption).


It should be mentioned that for $C^2$ null hypersurfaces the relationship between the total geodesic property and the second fundamental form is somewhat non-trivial. The second fundamental form is defined as an endomorphism $X\mapsto b(X):= \overline{\nabla_Xn}$ on the quotient bundle $TH/n$, where the overline denotes projection under that quotient. It can be shown that it vanishes iff $\nabla_XY\in TH$ for any two vector fields tangent to the horizon. This equivalence goes back to Kupeli \cite[Thm.\ 30]{kupeli87}.

\section{Smooth null hypersurfaces}
In this paper we shall be mostly interested in the null hypersurfaces that
result from the application of Thm.\ \ref{mai}. Since we shall not use all the properties implied by that theorem, we shall assume throughout the paper and without further notice  that

\begin{itemize}
\item[($\star$)] {\em $H$ is an embedded compact connected smooth totally geodesic null hypersurface.}
\end{itemize}
By a well known result, smoothness implies that $H$ is generated by lightlike geodesics \cite{beem98,chrusciel98b,minguzzi14}. The total geodesic property implies that the expansion and shear vanish, $\theta=\sigma=0$, and then the Raychaudhuri equation implies $R(n,n)=0$ over $H$.

Since $H$ is totally geodesic  there is a 1-form $\omega: H\to T^*H$ such that
\[
\nabla_Xn=\omega(X) n
\]
where $X\in TH$.


\begin{definition}
The function $\kappa: H \to \mathbb{R}$, determined by $\nabla_nn=\kappa n$, i.e.\ $\kappa:=\omega(n)$,
is called {\em surface gravity}.
\end{definition}

Of course, the surface gravity depends on the chosen smooth field $n$. Our problem with be to determine if a smooth choice of vector field with constant surface gravity exists.
 \vspace{0.2cm}

We provide a couple of  examples of vacuum spacetimes admitting a compact Cauchy horizon, and show how to conveniently calculate their surface gravity. For more examples, with classification results for topologies and dynamical behavior of generators, see \cite{bustamante21,kroenke21}.
\begin{example}[Quotient Schwarzschild spacetime]
Let us consider, for some constant $m>0$, the
Schwarzschild spacetime  in ingoing Eddington-Finkelstein coordinates. This is the manifold $\tilde M=(0,\infty)_r\times \mathbb{R}_v
\times S^2$ endowed with the metric
\[
g = 2drdv -
(1-\frac{2m}{r})
dv^2 + r^2g_{{S}^2},
\]
and time orientation provided by the (future-directed) lightlike vector $T:=-\p_r$.
Observe that this is a coordinate patch that does not cover the whole Kruskal-Szekeres maximal extension, e.g.\  the white hole horizon is excluded.  The black hole horizon is the region $r=2m$, the interior of the black hole is the region $r<2m$, and the exterior is the region $r>2m$. The vector field $n:=\p_v$ is Killing, tangent to the horizon and timelike on the exterior of the black hole. Let $\varphi$ be the flow of $n$, and for some constant $c>0$ let us identify every  $p\in \tilde M$ with $\varphi_{nc}(p)$ for every $n\in \mathbb{Z}$. The spacetime $M$  obtained through this identification is a vacuum spacetime which admits the compact Cauchy horizon $H=\{r=2m\}$ of topology $S^2 \times S^1$. The Killing vector $n$  is tangent to $H$.

The fastest way to calculate the surface gravity is via Proposition \ref{bkq} that we shall prove later on (but the reader can already check its simple proof). The vector field $T=-\p_r$ commutes with $n$ and is such that $g(T,n)=-1=:-\frac{1}{a}$ thus the surface gravity is
\[
\kappa=\frac{a}{2} \p_T g(n,n) \vert_H=-\frac{1}{2} \p_r (\frac{2m}{r}-1) \vert_H=\frac{1}{4m}.
\]
The identification sends the exterior $r>2m$ into a region of chronology violation that lies in the causal past of $H$, while a compact partial Cauchy hypersurface $S$ with past Cauchy horizon $H^-(S)=H$ can be found in the causal future of $H$ (a general theorem establishing these features will be proved in Theorem \ref{qkw}, see also the references cited in that section). It should be noted that the generators of $H$ are past incomplete because in the covering $\tilde M$ the white hole and hence the bifurcation region of the horizon present in the Kruskal-Szekeres maximal extension, that they would intersect, has been cut out \cite{boyer69}.
\end{example}

\begin{example}(Misner spacetime)
The Misner vacuum spacetime is given by the manifold $M=\mathbb{R}_t\times S^1_x\times T^2$ endowed with the metric
\[
g=-2dtdx + tdx^2 + g_{T^2},
\]
and time orientation provided by the (future-directed) lightlike vector $T:=\p_t$.
The region of chronology violation is given by the open set $t<0$, while $H=\{t=0\}$ is a null hypersurface of topology $T^3$ which is the past Cauchy horizon of the spacelike partial Cauchy hypersurfaces $S_c=\{t=c\}$, $c>0$.
Note that $n:=\p_x$ is Killing, tangent to $H$, timelike in the region $t<0$ and spacelike in the region $t>0$.

Since $g(n,T)=-1=:-\frac{1}{a}$ we have, again by Proposition \ref{bkq}, that  the surface gravity is
\[
\kappa=\frac{a}{2} \p_T g(n,n) \vert_H=\frac{1}{2} \p_t (t) = \frac{1}{2}.
\]
Affinely parametrized geodesics on the horizon have the form $\lambda\mapsto (t(\lambda),x(\lambda), q)=(0,2\log(\frac{1}{2} (\lambda-\bar \lambda))+cnst,q)$, $q\in T^2$, and hence are past incomplete.
\end{example}

{}\vspace{0.2cm}

We noted that there are many possible choices for $n$. We observe
\begin{lemma}
For any function $f$, redefined $n'= e^f n$, we have that $\omega'=\omega+\dd f$.
\end{lemma}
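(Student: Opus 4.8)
The plan is to prove this by a direct application of the Leibniz rule for the covariant derivative, using the defining relation $\nabla_X n=\omega(X)\,n$ that holds on the totally geodesic horizon $H$. The only content of the statement is that the totally geodesic structure is preserved under the rescaling $n'=e^f n$, with the associated $1$-form transforming affinely by the exact term $\dd f$.

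Concretely, I would fix an arbitrary $X\in TH$ and expand
\[
\nabla_X n'=\nabla_X(e^f n)=(Xf)\,e^f n+e^f\,\nabla_X n .
\]
Substituting the defining relation $\nabla_X n=\omega(X)\,n$ into the second term and writing $Xf=\dd f(X)$ in the first, I obtain
\[
\nabla_X n'=\dd f(X)\,e^f n+\omega(X)\,e^f n=\bigl(\dd f(X)+\omega(X)\bigr)\,n',
\]
where in the last step I used $e^f n=n'$. Since $X\in TH$ was arbitrary, this simultaneously confirms that $n'$ again satisfies a relation of the form $\nabla_X n'=\omega'(X)\,n'$ (so the totally geodesic property and the very existence of $\omega'$ are reconfirmed for the rescaled field) and identifies the $1$-form as $\omega'(X)=\omega(X)+\dd f(X)$, i.e.\ $\omega'=\omega+\dd f$.

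There is essentially no obstacle here: the argument is a one-line Leibniz computation, and the key observation is simply that $e^f$ is nowhere zero, so that factoring $e^f n=n'$ back out is legitimate and the coefficient of $n'$ is read off unambiguously. The only point worth stating explicitly for the reader is that the result holds pointwise for every $X\in TH$, so the identity $\omega'=\omega+\dd f$ is an equality of $1$-forms on all of $H$.
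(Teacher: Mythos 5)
Your proof is correct and coincides with the paper's own argument: both expand $\nabla_X(e^f n)$ by the Leibniz rule, substitute $\nabla_X n=\omega(X)\,n$, and read off $\omega'=\omega+\dd f$ from the coefficient of $n'=e^f n$. No issues.
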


\begin{proof}
\[
\omega'(X) e^fn=\nabla_X (e^f n)=e^f X(f) n+e^f \nabla_X n=[\p_X  f+\omega(X) ] e^f n
\]
$\square$ \end{proof}
It can be observed that this is the typical gauge transformation of the potential for gauge theories based on the commutative group $(\mathbb{R},+)$, cf.\ \cite{kobayashi63}.
Indeed we have a half-line bundle $L\to H$, the fiber of the point $p$ being given by the future-directed lightlike vectors tangent to $H$ at $p$.
$L$ is diffeomorphic to $H\times \mathbb{R}$, via the map $e^s n(p) \to  (p, s)$, where the vector field $p\to n(p)$ provides the global section. This shows that the bundle is trivial (as it is always the case for the contractible fiber $\mathbb{R}$).

Furthermore, on it we have an Abelian connection, indeed
\[
\tilde \omega:=\dd s+\omega
\]
is a connection on $L$ (it is invariant under vertical translations and gives 1 on $\p/\p s$). Let $\Lambda(p)\in (0,\infty]$ be the affine length of the geodesic $\gamma$ such that $\gamma(0)=p$, $\dot\gamma(0)=n$. Observe that under a  change of section (gauge transformation)
\begin{align}
s'&=s-f,  \label{ok1} \\
n'&=e^f n, \label{ok2} \\
\omega'&=\omega + \dd f, \label{ok3}\\
\Lambda'&=e^{-f}\Lambda,  \label{ok4} \\
\kappa'&= e^f(\kappa + \p_n f), \label{ok5}
\end{align}
 while
\begin{align}
\tilde \omega:=&\dd s+\omega,\\
\Omega:=&\dd \omega,\\
K:=&\Lambda n , \\
U:=& \kappa \Lambda+\p_n \Lambda,
\end{align}
are left unchanged, so are independent of the section (an important invariant built from the second derivatives of $\Lambda$  will be introduced later on, cf.\ Eq.\ (\ref{kkv})).
 Here we have extended the half-line bundle $L$ into a bundle $\bar L$ so as to include the infinite future lightlike vector, and so that $K$ is a section of $\bar L\to H$ (as it can be $\Lambda=\infty$).  By Eq.\ (\ref{ok4}) at those $p$ where $\Lambda(p)$ is finite, $K(p)$ is that unique tangent such that the future-directed geodesic generator $\gamma(0)=p$, $\dot \gamma(0)=K$, has affine length one. In our paper $n$ will be always smooth so  $K$ will be finite and smooth provided we can prove that $\Lambda$ is finite and smooth. This is the strategy of \cite{moncrief20,reiris21} which we shall also follow.

\begin{lemma} \label{cpx}
Let $x:\mathbb{R}\to H$ , $s \mapsto x(s)$,  be an integral curve of $n$, with $x(0)=p$.
The geodesic $\gamma$ starting from $x(\tau)$ with tangent $\dot \gamma=n$ has future affine length
\begin{equation} \label{bud}
\Lambda(x(\tau))=\int_\tau^\infty e^{\int_\tau^\rho \kappa(x(s))\dd s} \dd \rho.
\end{equation}
Thus $\Lambda$ is finite at one point of the integral curve iff it is finite everywhere over it, and in this case it satisfies the differential equation
\begin{equation} \label{biw}
1+\kappa \Lambda+\p_n \Lambda=0.
\end{equation}
In particular, for $\kappa$ constant, we have $\Lambda<\infty$ iff $\kappa<0$, in which case $\Lambda=
- \kappa^{-1}$.
\end{lemma}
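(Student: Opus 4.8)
The plan is to affinely reparametrize the generator through $x(\tau)$ and read off its length from the reparametrization factor. Because $H$ is totally geodesic and $n$ is lightlike and tangent to $H$, the integral curve $x$ of $n$ traces a null geodesic; it fails to be affinely parametrized only because $\nabla_n n=\kappa n$ need not vanish. Along $x$ I would write the affinely parametrized tangent as $V=h\,n$ for a positive function $h=\dd s/\dd\lambda$, where $\lambda$ denotes the affine parameter, and impose the geodesic equation. Since
\[
\nabla_V V=\nabla_{hn}(hn)=h\,(\p_n h+\kappa h)\,n,
\]
the curve is affinely parametrized precisely when $\p_n h+\kappa h=0$ along $x$, i.e.\ $\dd h/\dd s=-\kappa(x(s))\,h$.

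Solving this linear ODE with the normalization $h(\tau)=1$ dictated by the requirement $\dot\gamma(0)=n$ at the starting point gives $h(s)=e^{-\int_\tau^s\kappa(x(s'))\dd s'}$, hence $\dd\lambda/\dd s=h^{-1}=e^{\int_\tau^s\kappa}$. Integrating over the (complete, by compactness of $H$) forward range $s\in[\tau,\infty)$ yields the affine length
\[
\Lambda(x(\tau))=\int_\tau^\infty\frac{\dd\lambda}{\dd s}\,\dd s=\int_\tau^\infty e^{\int_\tau^\rho\kappa(x(s))\dd s}\,\dd\rho,
\]
which is (\ref{bud}). For the equivalence of finiteness along the curve, I would use additivity of the inner integral: for $\tau_1\le\tau_2$ and $\rho\ge\tau_2$ one has $e^{\int_{\tau_1}^\rho\kappa}=e^{\int_{\tau_1}^{\tau_2}\kappa}\,e^{\int_{\tau_2}^\rho\kappa}$, so that $\Lambda(x(\tau_1))$ equals a finite integral over $[\tau_1,\tau_2]$ plus the positive finite factor $e^{\int_{\tau_1}^{\tau_2}\kappa}$ times $\Lambda(x(\tau_2))$. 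Thus $\Lambda$ is finite at one point of the curve iff it is finite at every point.

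Assuming $\Lambda$ finite, I would obtain (\ref{biw}) by differentiating (\ref{bud}) in $\tau$ with the Leibniz rule. The lower endpoint of the outer integral contributes $-e^{\int_\tau^\tau\kappa}=-1$, while differentiating the exponent gives $\p_\tau\int_\tau^\rho\kappa=-\kappa(x(\tau))$, so the integral term contributes $-\kappa\,\Lambda$; altogether $\p_n\Lambda=-1-\kappa\Lambda$, i.e.\ $1+\kappa\Lambda+\p_n\Lambda=0$. Finally, for constant $\kappa$ the inner integral is $\kappa(\rho-\tau)$ and $\Lambda=\int_0^\infty e^{\kappa u}\dd u$, which converges iff $\kappa<0$ with value $-\kappa^{-1}$.

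The computation is elementary; the only point requiring care is the setup — correctly identifying the reparametrization ODE $\dd h/\dd s=-\kappa h$ and fixing the normalization $h(\tau)=1$ so that the geodesic really starts with tangent $n$, which in turn pins down the sign in the exponent. A sanity check against the Misner example (where $\kappa=1/2>0$ and the future affine length is indeed infinite, the generators being only past incomplete) confirms the sign convention.
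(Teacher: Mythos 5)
Your proof is correct and follows essentially the same route as the paper: write the affinely parametrized tangent as a positive multiple of $n$, impose the geodesic equation to get the linear ODE for the reparametrization factor with normalization $1$ at $s=\tau$, integrate $\dd\lambda/\dd s=e^{\int_\tau^s\kappa}$ to obtain (\ref{bud}), and differentiate in $\tau$ to get (\ref{biw}). The explicit additivity argument for the finiteness equivalence and the Misner sign check are welcome but not substantive departures.
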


\begin{proof}
Let  $x:\mathbb{R}\to H$ , $s \mapsto x(s)$,  be an integral curve of $n$, with $x(0)=p$, and let $f(s)$ be such that $\dot \gamma(t(s))=f(s) n( x(s))$.  Note that $f(\tau)=1$. The geodesic condition reads
\[
0=\nabla_{\dot \gamma} \dot \gamma=f[f'+f \kappa] n
\]
thus $f(s)=C \exp[- \int_0^s \kappa(x(s)) \dd s]=\exp[- \int_\tau^s \kappa(x(s)) \dd s]$, where we used the initial condition, thus $\frac{\dd t}{\dd s}=f^{-1}(s)=\exp (\int_\tau^s \kappa(x(s)) \dd s)$ and hence Eq.\ (\ref{bud}).

As a consequence,
\[
\p_\tau \Lambda(x(\tau))=-1-\int_\tau^\infty \kappa(x(\tau)) \exp^{\int_\tau^\rho \kappa(x(s))\dd s} \dd \rho=-1- \kappa(x(\tau)) \Lambda(x(\tau)).
\]
$\square$ \end{proof}

\begin{remark}
As mentioned, the main problem will be to show that $K$ is a  {\em smooth finite} section of the bundle $\bar L\to H$, hence a section of $L\to H$. The existence of such a privileged section $K$ will not necessarily imply that the bundle is flat, in fact in general we shall have  $\dd \omega\ne 0$.


\end{remark}



We need the following algebraic result of which we provide a coordinate independent proof (as mentioned, in this work we do not need to introduce Gaussian null coordinates, compare \cite{petersen18,petersen18b,moncrief83}).

\begin{lemma} \label{vie}
For vector fields $X,Y\in TH$
\begin{align*}
R(X,Y)n&=\dd \omega(X,Y) n, \\
Ric(Y,n)&=\dd \omega(n,Y) .
\end{align*}
Moreover, there is a bilinear form $\mu: H \to T^*H\otimes T^*H$ on the horizon such that for every $X,Y\in TH$, $R(n,X)Y=\mu(X,Y) n$ and $\mu(X,Y)-\mu(Y,X)=-\dd \omega(X,Y)$.
\end{lemma}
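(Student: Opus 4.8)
The plan is to derive all three identities directly from the defining relation $\nabla_X n=\omega(X)\,n$ together with the standard symmetries of the curvature tensor, using that $TH$ is involutive and that $n$ spans the degenerate line $TH^\perp$. Throughout I adopt the conventions $R(A,B)C=\nabla_A\nabla_B C-\nabla_B\nabla_A C-\nabla_{[A,B]}C$, $R(A,B,C,D)=g(R(A,B)C,D)$, and $\textrm{Ric}(Y,Z)=\mathrm{tr}(X\mapsto R(X,Y)Z)$.

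First I would compute $R(X,Y)n$ for $X,Y\in TH$ by expanding the definition. Since $\nabla_Y n=\omega(Y)\,n$, one gets $\nabla_X\nabla_Y n=\big[X(\omega(Y))+\omega(X)\omega(Y)\big]n$, and because $X,Y$ are tangent to $H$ so is $[X,Y]$, whence $\nabla_{[X,Y]}n=\omega([X,Y])\,n$. The quadratic terms $\omega(X)\omega(Y)$ cancel against their symmetric counterparts, leaving $R(X,Y)n=\big[X(\omega(Y))-Y(\omega(X))-\omega([X,Y])\big]n=\dd\omega(X,Y)\,n$, which is the first identity. To obtain the Ricci identity I would trace the endomorphism $X\mapsto R(X,Y)n$ in a frame adapted to the null structure: pick screen vectors $e_1,\dots,e_{n-1}\in TH$ that are orthonormal and orthogonal to $n$, and a transverse null field $\ell$ with $g(n,\ell)=-1$. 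The only nonzero inverse-metric components are $g^{n\ell}=-1$ and $g^{e_ae_b}=\delta_{ab}$, so $\textrm{Ric}(Y,n)=-g(R(n,Y)n,\ell)-g(R(\ell,Y)n,n)+\sum_a g(R(e_a,Y)n,e_a)$. The sum over $a$ vanishes since $R(e_a,Y)n=\dd\omega(e_a,Y)\,n$ is proportional to $n$ and $g(n,e_a)=0$; the middle term is $R(\ell,Y,n,n)=0$ by antisymmetry in the last pair; and the first term equals $-\dd\omega(n,Y)\,g(n,\ell)=\dd\omega(n,Y)$, giving the second identity.

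For the third claim I would first show that $R(n,X)Y\in\langle n\rangle$ for $X,Y\in TH$. By the pair symmetry $g(R(n,X)Y,W)=g(R(Y,W)n,X)$, and when $W\in TH$ the first identity makes $R(Y,W)n$ a multiple of $n$, so $g(R(Y,W)n,X)=\dd\omega(Y,W)\,g(n,X)=0$ because $X\in TH$. Thus $R(n,X)Y$ is orthogonal to all of $TH$ and hence lies in $TH^\perp=\langle n\rangle$, which defines the bilinear form $\mu$ via $R(n,X)Y=\mu(X,Y)\,n$; bilinearity and tensoriality in $X,Y$ are immediate. Finally I would invoke the first Bianchi identity $R(n,X)Y+R(X,Y)n+R(Y,n)X=0$. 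Substituting $R(X,Y)n=\dd\omega(X,Y)\,n$ and $R(Y,n)X=-R(n,Y)X=-\mu(Y,X)\,n$ turns this into $\mu(X,Y)\,n+\dd\omega(X,Y)\,n-\mu(Y,X)\,n=0$, that is $\mu(X,Y)-\mu(Y,X)=-\dd\omega(X,Y)$.

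The argument is entirely algebraic, so there is no genuinely hard step; the points demanding the most care are the Ricci trace, where one must set up the degenerate adapted frame and its inverse metric correctly and then check which curvature components actually survive, and the identification $R(n,X)Y\in\langle n\rangle$, which rests on the pair symmetry of $R$ together with the defining feature of a null hypersurface that $n$ spans $TH^\perp$. Once these are in place, the first Bianchi identity delivers the antisymmetry relation between $\mu$ and $\dd\omega$ without further work.
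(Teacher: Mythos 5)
Your proposal is correct and follows essentially the same route as the paper: the direct Koszul computation for $R(X,Y)n$, a trace in an adapted null frame (where the transverse-null contribution dies by the last-pair antisymmetry and the screen contributions die because $R(e_a,Y)n\propto n\perp e_a$), the pair symmetry of $R$ to show $R(n,X)Y\in TH^{\perp}=\langle n\rangle$, and the first Bianchi identity for the antisymmetric part of $\mu$. The only cosmetic difference is that the paper phrases the Ricci step as reducing the full trace to a trace over $TH$ via a dual basis, whereas you sum the frame components directly; the computations are identical.
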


The result establishes that the 2-form $\dd \omega$, which is independent of the section $n$, has  indeed an important geometrical meaning. The first formula establishes the connection between the curvature of $\nabla$ and the curvature $\dd \omega$ of the Abelian gauge theory introduced previously.

Note that the right-hand side makes only sense for $X,Y\in TH$ as $\omega\in T^*H$ while the left-hand side could make sense also for $X,Y\in TM$.

One could also calculate the second Bianchi identity by using the first expression for the curvature, just to check whether there are further conditions to be imposed on $\omega$. It turns that this identity is equivalent to the closure of $\dd \omega$ and so it is trivially satisfied.

\begin{proof}
Indeed, remembering that $\nabla$ is well defined as a Koszul connection on $H$
\begin{align*}
\nabla_X\nabla_Yn-\nabla_Y\nabla_Xn-\nabla_{[X,Y]}n&=\nabla_X [\omega(Y) n]-\nabla_Y [\omega(X) n]- \omega([X,Y]) n\\
&=\{\nabla_X [\omega(Y)]-\nabla_Y [\omega(X)]-\omega([X,Y])\}n \\
&=(\dd \omega)(X,Y) n.
\end{align*}

Let $N$ be a lightlike vector field on $H$ such that  $g(N,n)=-1$.
Let $e_1,\cdots,$ $ e_{n-1}\in T_pH$ be such that $\textrm{Span}(e_1, \cdots, e_{n-1})=\textrm{ker}g(N,\cdot)\cap TH$.
Note that on $TM$ the basis dual to $(N,n,e_1, \cdots, e_{n-1})$ is $( -g(n,\cdot),-g(N,\cdot), e^1, \cdots e^{n-1})$ where $e^i(e_j)=\delta_{ij}$, $e^i(n)=e^i(N)=0$, $i,j=1,\cdots, n-1$.
On $TH$ the basis dual to $(n,e_1,\cdots, e_{n-1}))$ is $(-g(N,\cdot),  e^1,\cdots, e^{n-1}))$.

The Ricci tensor is
\[
Ric(X,Y)=\sum_a b^a (R( b_a, X) Y)
\]
where $\{b_a, a=0,1,\cdots,n\}$ is any basis with $\{b^a\}$  the dual basis. In particular, we can let $\{b_a\}=\{N,n,e_1, \cdots, e_{n-1}\}$. We know that since $H$ is totally geodesic, the connection on $H$ is the restriction of $\nabla$, i.e.\ the  connection on $M$. However, we must be careful when calculating the Ricci tensors, since the traces are  different. Actually, this turn out not to be a problem because in the   calculation of $Ric(Y,n)$ there appears the term
\[
b^0 (R( b_0, Y) n)= -g(n, R( N, Y) n)=0
\]
which vanishes by a symmetry  of the Riemann tensor. We conclude that for the calculation of $Ric(Y,n)$ we can use the trace restricted to $TH$ and then
\[
Ric(Y,n) =\textrm{Tr}\{X\mapsto R(X,Y)n\} = \textrm{Tr}\{X\mapsto \dd \omega(X,Y) n\}=\dd \omega(n,Y).
\]
For the last statement of the theorem, observe that for every $X,Y,W\in TH$, by the symmetries of the Riemann curvature
\[
g(W, R(n,X)Y)=-g(X, R(W,Y)n)=-g(X,n)\dd \omega(W,Y)=0,
\]
so being $W\in TH$ arbitrary, we get $R(n,X)Y=\mu(X,Y) n$ where $\mu$ is clearly bilinear. By the first Bianchi identity $R(n,X)Y+R(X,Y)n+R(Y,n)X=0$ that is $\mu(X,Y)+\dd \omega(X,Y)-\mu(Y,X)=0$.
$\square$ \end{proof}



The next result is crucial for the generalization to the non-vacuum case. It can also be found  in the proof of \cite[Prop.\ 4.1]{racz00}. Since the cosmological constant does not play a role in the proof, it is omitted from the statement.

\begin{lemma} \label{ber}
Suppose that $(M,g)$ satisfies the dominant energy condition. Then the following property holds: If $\textrm{Ric}(X,X)=0$ for some lightlike vector $X$, then  $\textrm{Ric}(X,)\propto g(X,\cdot) $ (and similarly replacing the two instances of $\textrm{Ric}$ with $T$ as given by (\ref{fir})).
\end{lemma}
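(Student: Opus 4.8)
The plan is to read the dominant energy condition as a positivity statement that degenerates to an equality precisely on our null vector, and to combine it with the self-adjointness of the endomorphism $T$ in (\ref{fir}). First I would record that the bilinear form $(X,Y)\mapsto g(T(X),Y)=\mathrm{Ric}(X,Y)-[\tfrac{1}{2}R+\Lambda]g(X,Y)$ is symmetric, so $T$ is $g$-self-adjoint. This already shows that the two versions of the statement are interchangeable: for a lightlike $X$ one has $g(X,X)=0$, hence $g(T(X),X)=\mathrm{Ric}(X,X)$, so the two hypotheses coincide; and the two putative conclusions $\mathrm{Ric}(X,\cdot)\propto g(X,\cdot)$ and $g(T(X),\cdot)\propto g(X,\cdot)$ differ only by the multiple $[\tfrac12 R+\Lambda]\,g(X,\cdot)$ of $g(X,\cdot)$. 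Thus it suffices to treat, say, the $T$-version.

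Next, assume without loss of generality that $X$ is future-directed lightlike (otherwise replace $X$ by $-X$, which alters neither the hypothesis nor the conclusion). The hypothesis $\mathrm{Ric}(X,X)=0$ then gives $g(T(X),X)=\mathrm{Ric}(X,X)-[\tfrac12 R+\Lambda]g(X,X)=0$. On the other hand, the dominant energy condition asserts that $T$ carries the future causal cone into the past causal cone, so $T(X)$ is a past-directed causal vector, possibly zero.

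The crux is then the elementary Lorentzian fact that a causal vector orthogonal to a nonzero null vector must be proportional to it. I would establish this by recalling that for a lightlike $X$ the hyperplane $X^\perp=\{Y:g(X,Y)=0\}$ contains $X$, and that the restriction $g|_{X^\perp}$ is positive semidefinite with one-dimensional kernel $\mathbb{R}X$ (pick a null $N$ with $g(N,X)=-1$; then $X^\perp$ is the span of $X$ and the spacelike plane orthogonal to both $X$ and $N$). Since $g(T(X),X)=0$ we have $T(X)\in X^\perp$, and since $T(X)$ is causal, $g(T(X),T(X))\le 0$; but positivity of $g|_{X^\perp}$ forces $g(T(X),T(X))\ge 0$, whence $T(X)$ lies in the kernel, i.e.\ $T(X)=c\,X$ for some $c\in\mathbb{R}$ (the degenerate case $T(X)=0$ being included with $c=0$).

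Finally I would unwind the definitions: $T(X)=c\,X$ means $g(T(X),\cdot)=c\,g(X,\cdot)$, which is the $T$-version, and the $\mathrm{Ric}$-version follows by adding back $[\tfrac12 R+\Lambda]\,g(X,\cdot)$. The only step demanding a genuine argument is the orthogonality lemma of the third paragraph, and even there the work is the standard signature analysis of $g$ restricted to $X^\perp$; I anticipate no serious obstacle, the role of the dominant energy condition being simply to upgrade the single scalar equation $g(T(X),X)=0$ into the geometric statement that $T(X)$ is causal, which is exactly what makes the orthogonality argument bite.
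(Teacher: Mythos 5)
Your proof is correct and follows essentially the same route as the paper: reduce to the $T$-version, use the dominant energy condition to make $T(X)$ past causal, observe $g(T(X),X)=0$, and conclude proportionality. The only difference is cosmetic: where the paper invokes the equality case of the reverse triangle inequality (citing \cite{minguzzi18b}), you prove the needed fact directly via the signature of $g$ restricted to $X^\perp$, which is the same underlying algebra.
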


\begin{proof}
It is clear that  the statement holds for $\textrm{Ric}$ iff it holds for $T$, the difference between the two tensors being proportional to $g$.
We can assume that $X$ is future-directed, the statement to be proved being independent of its orientation.
The dominant energy condition is the statement that the map  (\ref{fir}) sends the future causal cone into the past causal cone so that, in particular, for $X$ causal
\begin{equation} \label{car}
g(X,[Ric(X,\cdot)-[\frac{R}{2}+\Lambda]  g(X,\cdot) ]^\sharp)= Ric(X,X)-[\frac{R}{2}+\Lambda]  g(X,X)
\end{equation}
is non-negative and vanishing (by the equality case of the reverse triangle inequality \cite{minguzzi18b}) iff $X$ is lightlike and $Ric(X,\cdot)-[\frac{R}{2}+\Lambda]  g(X,\cdot) \propto g(X,\cdot)$.
We can apply the equality case because, due to the assumptions on $X$, the right-hand side of (\ref{car})
vanishes. We conclude that  $Ric(X,\cdot) \propto g(X,\cdot)$.
$\square$ \end{proof}

The first statement of the next result was known for Killing horizons, e.g.\ \cite[After Prop.\ 4.3.11]{chrusciel20}).

\begin{lemma} \label{viw}
Assume the dominant energy condition.
Then
 $Ric(n,\cdot)\propto g(n,\cdot)$ on $H$, thus $Ric(n,\cdot)\vert_{TH}=0$. Hence we have
 \begin{equation}
 \dd \omega(n,\cdot)=0
 \end{equation}
  and $\mu(n,\cdot)=\mu(\cdot, n)=0$.
\end{lemma}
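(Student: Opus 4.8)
The plan is to deduce the statement entirely from the lemmas already at hand, the only new input being the dominant energy condition (which enters through Lemma \ref{ber}) and the identity $R(n,n)=0$ on $H$ noted right after ($\star$), coming from total geodesy and the Raychaudhuri equation. First I would observe that $n$ is a lightlike vector for which $\textrm{Ric}(n,n)=R(n,n)=0$, so Lemma \ref{ber} applies verbatim with $X=n$ and yields the proportionality $\textrm{Ric}(n,\cdot)\propto g(n,\cdot)$ as one-forms on $TM$.

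Next I would restrict this proportionality to $TH$. Since $n$ is a null vector tangent to the null hypersurface $H$, it is orthogonal to the whole of $TH$, i.e.\ $g(n,X)=0$ for every $X\in TH$; hence $\textrm{Ric}(n,\cdot)\propto g(n,\cdot)$ forces $\textrm{Ric}(n,\cdot)\vert_{TH}=0$. Combining this with the symmetry of the Ricci tensor and the second identity of Lemma \ref{vie}, namely $\textrm{Ric}(Y,n)=\dd\omega(n,Y)$, I obtain $\dd\omega(n,Y)=\textrm{Ric}(n,Y)=0$ for all $Y\in TH$, which is exactly $\dd\omega(n,\cdot)=0$.

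Finally, for the two statements about $\mu$ I would use Lemma \ref{vie} once more. Substituting $X=n$ into the defining relation $R(n,X)Y=\mu(X,Y)\,n$ and using the antisymmetry $R(n,n)=0$ of the curvature in its first two slots gives $\mu(n,Y)\,n=0$, whence $\mu(n,\cdot)=0$. Substituting instead $Y=n$ gives $R(n,X)n=\mu(X,n)\,n$, while the first identity of Lemma \ref{vie} evaluated on the pair $(n,X)$ gives $R(n,X)n=\dd\omega(n,X)\,n=0$ by the previous paragraph; therefore $\mu(\cdot,n)=0$ as well. (Equivalently, $\mu(\cdot,n)=0$ follows from $\mu(n,\cdot)=0$ together with the antisymmetry relation $\mu(X,Y)-\mu(Y,X)=-\dd\omega(X,Y)$ of Lemma \ref{vie}.)

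I do not anticipate any genuine obstacle: the whole argument is a short chain of substitutions into Lemmas \ref{ber} and \ref{vie}. The only point requiring a moment's care is the passage from the $TM$-proportionality $\textrm{Ric}(n,\cdot)\propto g(n,\cdot)$ to its vanishing on $TH$, which rests on the elementary but essential fact that a null vector tangent to a null hypersurface annihilates the entire tangent space of the hypersurface.
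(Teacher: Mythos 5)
Your proof is correct and follows essentially the same route as the paper's: apply Lemma \ref{ber} to $n$ using $\textrm{Ric}(n,n)=0$, restrict the resulting proportionality to $TH$ where $g(n,\cdot)$ vanishes, and then read off the statements about $\dd\omega$ and $\mu$ from the identities of Lemma \ref{vie}. The only difference is that you spell out the intermediate step $\dd\omega(n,Y)=\textrm{Ric}(Y,n)=0$ explicitly, which the paper leaves implicit.
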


This means that the 2-form $\dd \omega$, section of $T^*H\otimes T^*H$, passes to the quotient to a section of $(TH/n)^*\otimes (TH/n)^*$, and similarly for $\mu$ and $Ric\vert_{TH\times TH}$. The quotient space $TH/n$ is   endowed with a positive definite metric which is the quotient of $g$.

\begin{proof}
As already observed,  we have $\textrm{Ric}(n,n)=0$ on $H$.
By Lemma \ref{ber} $Ric(n,\cdot)\propto g(n,\cdot)$ and hence $Ric(n,\cdot)\vert_{TH}=0$.



Note that $\mu(n,Y)n=R(n,n)Y=0$ and $\mu(X,n)n= R(n,X)n=\dd \omega(n,X) n=0$.
$\square$ \end{proof}

%
%

\begin{lemma} \label{loq}
For every vector field $X: H \to TH$,
\begin{equation} \label{vhw}
L_X \omega=\dd \omega(X, \cdot)+\dd (\omega(X))
\end{equation}
thus we have under the dominant energy condition (or, more weakly, under $Ric(n,\cdot)\vert_{TH}=0$)
\[
L_n\omega=\dd \kappa, \quad  \textrm{and} \quad (L_X \omega)(n)=\p_n (\omega(X)),
\]
and hence $L_n \dd \omega=0$ which implies that $L_n \mu$ is symmetric.
\end{lemma}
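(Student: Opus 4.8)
The plan is to obtain the first, master identity (\ref{vhw}) purely formally from Cartan's homotopy formula, and then to specialize it by invoking the vanishing results established earlier under the energy condition. First I would recall Cartan's magic formula $L_X=\dd\,\iota_X+\iota_X\,\dd$ for the Lie derivative acting on differential forms. Applied to the $1$-form $\omega$ it reads $L_X\omega=\dd(\iota_X\omega)+\iota_X\dd\omega=\dd(\omega(X))+\dd\omega(X,\cdot)$, which is exactly (\ref{vhw}). This step is entirely algebraic and uses nothing about the geometry of $H$.

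Next I would specialize. Putting $X=n$ gives $L_n\omega=\dd\omega(n,\cdot)+\dd(\omega(n))=\dd\omega(n,\cdot)+\dd\kappa$; since Lemma \ref{viw} yields $\dd\omega(n,\cdot)=0$ under the dominant energy condition (or under the weaker hypothesis $Ric(n,\cdot)\vert_{TH}=0$), we get $L_n\omega=\dd\kappa$. Evaluating the master identity on the argument $n$ instead gives $(L_X\omega)(n)=\dd\omega(X,n)+\p_n(\omega(X))$, where the first term equals $-\dd\omega(n,X)=0$ by the same lemma, so $(L_X\omega)(n)=\p_n(\omega(X))$.

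Finally I would treat the two consequences. For $L_n\dd\omega$ I would use that the Lie derivative commutes with the exterior derivative, whence $L_n\dd\omega=\dd(L_n\omega)=\dd\,\dd\kappa=0$. For the symmetry of $L_n\mu$ the key observation is that, by Lemma \ref{vie}, the antisymmetric part of the bilinear form $\mu$ is precisely $-\tfrac{1}{2}\dd\omega$, because $\mu(X,Y)-\mu(Y,X)=-\dd\omega(X,Y)$. Since $L_n$ commutes with antisymmetrization (the factor-swap being a natural, hence flow-invariant, operation), the antisymmetric part of $L_n\mu$ equals $-\tfrac{1}{2}L_n\dd\omega=0$, so $L_n\mu$ is symmetric. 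The computation is mechanical once Lemmas \ref{vie} and \ref{viw} are available; the only place calling for a small idea is this last one, where one must recognize that the antisymmetric part of $\mu$ is carried by $\dd\omega$ and exploit the commutation of $L_n$ with antisymmetrization. I do not expect any genuine obstacle.
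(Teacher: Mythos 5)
Your proof is correct and follows essentially the same route as the paper's: Cartan's magic formula yields (\ref{vhw}), the specializations use $\dd\omega(n,\cdot)=0$ from Lemma \ref{viw}, and $L_n\dd\omega=\dd L_n\omega=\dd\dd\kappa=0$ gives the symmetry of $L_n\mu$ via the antisymmetric part $-\dd\omega$ of $\mu$. You have merely written out details the paper leaves implicit; there is nothing to object to.
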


\begin{proof}
It follows from Cartan's magic formula $L_X=i_X \dd + \dd i_X$. The last result follows from $L_n \dd \omega=\dd L_n\omega=\dd \dd \kappa=0$.
$\square$ \end{proof}

We see that the problem of finding a gauge in which $\kappa$ is constant coincides with that of finding a gauge in which $\omega$ is invariant along the flow of $n$. Suppose that we can find a gauge in which $\kappa=-1$. Since the generators of  $H$  are not necessarily closed, and since there is in general no codimension-2 spacelike hypersurface $\Sigma\subset H$ globally transverse to the generators of $H$, we cannot regard $H$ as a principal bundle. If that were the case $-\omega$ would be a connection \cite{kobayashi63} for the bundle $H\to \Sigma$ as it is left invariant by the flow and has value 1 over $n$.

\begin{remark}
The property $Ric(n,\cdot)\vert_{TH}=0$ appearing in Lemmas \ref{ber} and \ref{viw} was used in \cite{rendall98} where, however, it was not observed that it follows from the dominant energy condition. As a consequence, the main result of that work can be improved as follows
\footnote{
There are chances that the dimensionality assumption in the results of \cite{rendall98}, and hence in this result, could be dropped. We did not check this point in detail.}

\begin{theorem}
Suppose that $(M,g)$ is 4-dimensional and that it  satisfies the dominant energy condition. The horizon $H$ collapses with bounded diameter.
\end{theorem}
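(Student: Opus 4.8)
The plan is to obtain this statement as a corollary of Rendall's collapse theorem in \cite{rendall98} by supplying its curvature hypothesis from the dominant energy condition. Rendall's analysis of a compact null hypersurface proceeds under the assumption that $Ric(n,\cdot)\vert_{TH}=0$; the entire content of the present improvement is that, by Lemma \ref{viw}, this is no longer an independent hypothesis but a consequence of the dominant energy condition. Thus the first and essentially only new step is to invoke Lemma \ref{viw}: under ($\star$) and the dominant energy condition one has $Ric(n,\cdot)\propto g(n,\cdot)$ on $H$, and hence $Ric(n,\cdot)\vert_{TH}=0$.

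With this identity in hand I would check that the remaining hypotheses of Rendall's theorem hold verbatim here. By the standing assumption ($\star$), $H$ is an embedded compact connected smooth totally geodesic null hypersurface, so the quotient metric on $TH/n$ is invariant along the flow $\varphi_s$ of $n$ (the second fundamental form vanishes, i.e.\ $\theta=\sigma=0$). The 4-dimensionality of $(M,g)$ forces $TH/n$ to be $2$-dimensional, which is the dimension in which Rendall's collapse argument --- resting on the geometry of surfaces and on the evolution along $\varphi_s$ of the quotient metric and of the rotation $1$-form $\omega$ --- is carried out. In particular Lemma \ref{loq} shows that the condition $Ric(n,\cdot)\vert_{TH}=0$ makes the flow evolve $\omega$ by $L_n\omega=\dd\kappa$ and leaves $\dd\omega$ invariant, precisely the controlled behaviour that feeds the bounded-geometry estimates leading to the conclusion that $H$ collapses with uniformly bounded diameter.

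The main obstacle is therefore not a new computation but a careful audit of \cite{rendall98}: one must verify that every appeal to the field equations in that argument enters only through the restricted curvature condition $Ric(n,\cdot)\vert_{TH}=0$, and not through the full Einstein (vacuum) equations nor through auxiliary structure such as a globally transverse cross-section or a flow-invariant foliation of $H$, which in general need not exist. The remark that this property ``was used in \cite{rendall98}'' signals that the reduction goes through, so the substance of the proof lies in confirming that no stronger input is concealed in the collapse estimate. Finally, following the footnote, I would note that the $4$-dimensional restriction is inherited solely from \cite{rendall98} and is plausibly removable, since the curvature input we supply is dimension independent.
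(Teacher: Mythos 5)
Your proposal is correct and matches the paper's own argument, which consists precisely of observing that the curvature hypothesis $Ric(n,\cdot)\vert_{TH}=0$ used in \cite{rendall98} is, by Lemma \ref{viw}, a consequence of the dominant energy condition, so that Rendall's collapse theorem applies verbatim. Your additional remarks on auditing \cite{rendall98} and on the dimensionality restriction mirror the paper's own footnote and add nothing that conflicts with its reasoning.
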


This result already places some strong constraints on the topology of $H$. For instance, in a four-dimensional spacetime $H$ cannot be of hyperbolic type, see \cite{rendall98} for a complete discussion.
\end{remark}

\section{Ribbon argument and future incompleteness} \label{buq}
The next result is well known \cite[Lemma B1]{friedrich99} \cite{moncrief08}. We include a coordinate independent proof for completeness.
\begin{lemma} \label{hnw}
Let $\tilde g=g\vert_{TH\times TH}$, then
\[
L_n \tilde g=0
\]
\end{lemma}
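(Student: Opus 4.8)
The plan is to compute $(L_n\tilde g)(X,Y)$ directly for arbitrary vector fields $X,Y\in TH$, using the explicit Lie-derivative formula together with the metric compatibility and torsion-freeness of the ambient Levi-Civita connection $\nabla$. Since $\tilde g$ is just the restriction of $g$, for $X,Y\in TH$ we have $\tilde g(X,Y)=g(X,Y)$, and therefore
\[
(L_n\tilde g)(X,Y)=\p_n\big(g(X,Y)\big)-g([n,X],Y)-g(X,[n,Y]).
\]
First I would rewrite the derivative term by metric compatibility as $\p_n(g(X,Y))=g(\nabla_n X,Y)+g(X,\nabla_n Y)$, reducing the whole expression to a comparison between covariant derivatives $\nabla_n$ and Lie brackets $[n,\cdot]$.

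Next I would use the torsion-free identity $\nabla_n X=[n,X]+\nabla_X n$ and invoke the totally geodesic property, which gives $\nabla_X n=\omega(X)\,n$ for $X\in TH$ (this is precisely the relation introduced before the definition of surface gravity). Substituting yields
\[
g(\nabla_n X,Y)=g([n,X],Y)+\omega(X)\,g(n,Y),
\]
and symmetrically for the $Y$ slot. The decisive observation is that $g(n,Y)=0$ for every $Y\in TH$: on a null hypersurface the generating field $n$ is simultaneously tangent and normal, so it is $g$-orthogonal to all of $TH$. Consequently both $\omega$-terms vanish, giving $g(\nabla_n X,Y)=g([n,X],Y)$ and $g(X,\nabla_n Y)=g(X,[n,Y])$.

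Assembling these, the four terms cancel in pairs and $(L_n\tilde g)(X,Y)=0$ for all $X,Y\in TH$, which is the claim. There is no real obstacle here: the only point requiring care is that the $\omega$-dependent contributions—which is where one might fear a residual term—are killed by the orthogonality $g(n,TH)=0$, so the vanishing is exact and needs no energy condition. I would remark that this is consistent with the total geodesic hypothesis forcing $\theta=\sigma=0$, since $L_n\tilde g=0$ is exactly the statement that the null second fundamental form (expansion plus shear) of the generators vanishes.
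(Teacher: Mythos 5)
Your proof is correct and follows essentially the same route as the paper: both compute $(L_n\tilde g)(X,Y)$ via metric compatibility and torsion-freeness, reduce the residual terms to $\omega(X)g(n,Y)+\omega(Y)g(X,n)$ using $\nabla_X n=\omega(X)n$, and kill them with the orthogonality $g(n,\cdot)\vert_{TH}=0$. No gaps.
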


\begin{proof}
Let $Y,Z:H\to TH$
\begin{align*}
(L_n \tilde g)(Y,Z)&= g(\nabla_n Y, Z)+g( Y, \nabla_n Z)-g(L_n Y, Z)-g(Y,L_nZ)\\
&=g(\nabla_Y n, Z)+g(Y, \nabla_Z n)=\omega(Y)g( n, Z)+\omega(Z) g(Y,  n)=0.
\end{align*}
$\square$ \end{proof}

We now introduce a `horizontal' distribution aimed at splitting  $TH$.

\begin{lemma} \label{iac}
Let $n^*: H \to T^*H$ be a smooth 1-form field such that $n^*(n)=1$, then there is a constant $C>0$ such that at every $p\in H$ and for every $X\in \textrm{ker} \, n^*$,
\[
\vert \omega(X)\vert \le C \sqrt{g(X,X)}.
\]
\end{lemma}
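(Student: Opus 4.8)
The plan is to reduce the inequality to a routine compactness argument, the crucial preliminary observation being that $g$ is positive definite — not merely positive semidefinite — on the distribution $\ker n^*$. First I would record that, since $n^*$ is smooth with $n^*(n)=1$, it is nowhere vanishing, so $\ker n^* \subset TH$ is a smooth subbundle of rank $n-1$ with $n \notin \ker n^*$ at every point. Recall that $\tilde g = g\vert_{TH\times TH}$ is positive semidefinite with one-dimensional null space spanned by $n$; equivalently the quotient metric on $TH/n$ is positive definite, as already noted. Because $\ker n^*_p$ is a linear complement of $\mathbb{R}\,n(p)$ inside $T_pH$, the canonical projection $T_pH \to T_pH/n$ restricts to a linear isomorphism on $\ker n^*_p$, and hence $\tilde g$ restricts to a positive definite form there. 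In particular $\sqrt{g(X,X)}$ is a genuine fiber norm on $\ker n^*$, and the claim is precisely that the linear functional $\omega\vert_{\ker n^*}$ is bounded uniformly over $H$ with respect to this norm.

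Next I would invoke compactness. Consider the sphere bundle
\[
\mathcal S = \{\, X \in \ker n^* : g(X,X)=1 \,\} \subset \ker n^*,
\]
a closed subset of the total space of $\ker n^*$. Since $H$ is compact and each fiber of $\mathcal S$ is a sphere $S^{n-2}$ (compact, by the positive definiteness just established), $\mathcal S$ is compact. The map $X \mapsto \lvert \omega(X)\rvert$ is continuous on $\ker n^*$, so it attains a finite maximum $C \ge 0$ on $\mathcal S$.

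Finally I would conclude by homogeneity. For $X \in \ker n^*$ with $X \neq 0$ we have $g(X,X)>0$, so $\hat X := X/\sqrt{g(X,X)} \in \mathcal S$ and thus $\lvert \omega(\hat X)\rvert \le C$; multiplying through by $\sqrt{g(X,X)}$ and using linearity of $\omega$ gives $\lvert \omega(X)\rvert \le C\sqrt{g(X,X)}$, the case $X=0$ being trivial. If the maximum $C$ happens to vanish one simply replaces it by any positive constant to meet the requirement $C>0$.

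I do not expect a genuine obstacle here: the one point requiring care is the verification that the induced metric restricts to a \emph{positive definite} form on the horizontal distribution $\ker n^*$, which is what promotes $\sqrt{g(X,X)}$ from a degenerate seminorm on all of $TH$ to an honest fiber norm. Once that is in hand, the estimate follows immediately from compactness of $H$ together with the common homogeneity of both sides in $X$.
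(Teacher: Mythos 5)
Your proof is correct and follows essentially the same route as the paper's: both establish that $g$ restricts to a positive definite form on the horizontal distribution $\ker n^*$ and then bound $\vert\omega\vert$ by compactness of a unit-sphere bundle over the compact $H$ together with homogeneity. The only cosmetic difference is that the paper normalizes with respect to an auxiliary complete Riemannian metric and bounds the continuous ratio $\vert\omega(X)\vert/\sqrt{g(X,X)}$ there, while you normalize with respect to $g$ itself, which is equivalent once positive definiteness on $\ker n^*$ is in hand.
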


\begin{proof}

Let us introduce a complete Riemannian metric on $H$. It is sufficient to prove the formula for $X\in  \textrm{ker} \, n^*$ normalized with respect to that metric. But $\sqrt{g(X,X)}>0$ for any such $X$ because $g$ is semi-positive definite on $H$, the degenerate direction being $n$, while $X$ at each point belongs to a hyperplane  transverse to $n$. Thus the bundle  of normalized $X$ in $\textrm{ker} \, n^*$ is compact and $\vert \omega(X)\vert/\sqrt{g(X,X)}$ is finite and continuous on the bundle, thus upper bounded.
$\square$ \end{proof}

As in \cite{reiris21} we construct ribbons closed by $n^*$-horizontal curves that we now introduce.

A $C^1$ curve on $H$ is said to be $n^*$-horizontal if its tangent belongs to $\textrm{ker} n^*$. Two $n^*$-horizontal curves  $\sigma_{0,1}:[0,1] \to H$ are homotopically related if, letting $\varphi: H\times \mathbb{R}\to H$ denote the flow of $n$, we have $\sigma_1(r)=\varphi(\sigma_0(r), s(r))$ for some continuous function $s(r)>0$.
%
%

\begin{lemma} \label{vig}
For every $n^*$-horizontal curve $\sigma$ there is a constant $B>0$ such that
\[
\vert \int_{\tilde\sigma}  \omega \vert \le B ,
\]
for every $n^*$-horizontal curve $\tilde\sigma$ in the same homotopy class of $\sigma$.
\end{lemma}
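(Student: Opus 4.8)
The plan is to bound the line integral $\int_{\tilde\sigma}\omega$ by the $\tilde g$-length of $\tilde\sigma$, and then to exploit the flow-invariance of $\tilde g$ from Lemma \ref{hnw} to recognise that this length is the same for every curve in the homotopy class, hence equal to the fixed $\tilde g$-length of $\sigma$ itself.

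First I would use that $\tilde\sigma$ is $n^*$-horizontal, so that its velocity lies in $\ker n^*$ at each point. Lemma \ref{iac} then gives the pointwise estimate $\vert\omega(\dot{\tilde\sigma})\vert\le C\sqrt{g(\dot{\tilde\sigma},\dot{\tilde\sigma})}$ with $C$ \emph{independent of the curve}, and integrating over the parameter yields
\[
\left\vert \int_{\tilde\sigma}\omega \right\vert \le C \int_0^1 \sqrt{g(\dot{\tilde\sigma},\dot{\tilde\sigma})}\,\dd r = C\,\ell_{\tilde g}(\tilde\sigma),
\]
where $\ell_{\tilde g}$ denotes length with respect to the degenerate metric $\tilde g=g\vert_{TH\times TH}$. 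Everything thus reduces to controlling $\ell_{\tilde g}(\tilde\sigma)$ uniformly over the homotopy class.

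Since $\tilde\sigma$ lies in the homotopy class of $\sigma$, I may write $\tilde\sigma(r)=\varphi(\sigma(r),s(r))$ with $s$ continuous and positive, and a short argument in flow-box coordinates for $n$ shows that $s$ is in fact $C^1$ (locally $s(r)$ is the difference of the $n$-flow coordinates of two $C^1$ curves). Differentiating gives $\dot{\tilde\sigma}(r)=(\varphi_{s(r)})_*\dot\sigma(r)+s'(r)\,n(\tilde\sigma(r))$. The crucial point is that $n$ spans the degenerate direction of $\tilde g$, that is $g(n,\cdot)\vert_{TH}=0$ and $g(n,n)=0$; hence the term $s'(r)\,n$ contributes nothing to $g(\dot{\tilde\sigma},\dot{\tilde\sigma})$, leaving
\[
g(\dot{\tilde\sigma},\dot{\tilde\sigma})=\tilde g\big((\varphi_{s(r)})_*\dot\sigma,(\varphi_{s(r)})_*\dot\sigma\big).
\]
By Lemma \ref{hnw} the flow of $n$ preserves $\tilde g$, i.e. $\varphi_s^*\tilde g=\tilde g$, so the right-hand side equals $\tilde g(\dot\sigma,\dot\sigma)=g(\dot\sigma,\dot\sigma)$. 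The integrand is therefore unchanged, $\ell_{\tilde g}(\tilde\sigma)=\ell_{\tilde g}(\sigma)$, and setting $B:=C\,\ell_{\tilde g}(\sigma)$ completes the argument.

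The conceptual heart is that the ``vertical'' displacement $s'(r)\,n$ produced by sliding along the generators is invisible to $\tilde g$, while the ``horizontal'' part is transported by a map preserving $\tilde g$; together these make the $\tilde g$-length a homotopy invariant. The step I expect to require the most care is the regularity claim that $s$ is $C^1$, so that the differentiation of $\tilde\sigma$ is legitimate: this is a purely technical point, handled by passing to flow-box coordinates, but it must be stated since the definition supplies $s$ only as a continuous function. It is worth noting that, pleasingly, this argument invokes neither the dominant energy condition nor $\dd\omega(n,\cdot)=0$; it rests only on the total geodesic property (through Lemma \ref{hnw}) and the uniform estimate of Lemma \ref{iac}.
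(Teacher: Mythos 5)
Your proposal is correct and follows essentially the same route as the paper: bound $\vert\int_{\tilde\sigma}\omega\vert$ by $C$ times the $\tilde g$-length via Lemma \ref{iac}, decompose $\dot{\tilde\sigma}=(\varphi_{s(r)})_*\dot\sigma+s'(r)n$, note that the $n$-term is null for $\tilde g$, and invoke $L_n\tilde g=0$ (Lemma \ref{hnw}) to conclude the length is a homotopy invariant. Your extra attention to the $C^1$ regularity of $s(r)$ is a point the paper defers to its Appendix, but the argument is the same.
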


\begin{proof}
We know that $\vert \int_\sigma  \omega \vert \le C \int \sqrt{g(\sigma',\sigma')} \dd r$. The integrals on the right-hand side coincide for any two homotopic $\sigma_0$ and $\sigma_1$ because $\sigma_1'=\varphi_*(\sigma_0' , s(r))+ \textrm{term prop.\ to } n$, thus
\[
g(\sigma_1',\sigma_1')=(\varphi^*g)(\sigma_0',\sigma_0')=g(\sigma_0',\sigma_0')
\]
where we used $L_n \tilde g=0$.
$\square$ \end{proof}

The 1-form field $n^*$ induces a splitting of $TH$.
In what follows for $X\in TH$ we might  use the notation

\begin{equation} \label{rgf}
X=X^\perp+\lambda n,
\end{equation}
 where $X^\perp$ is $n^*$-horizontal and $\lambda=n^*(X) \in \mathbb{R}$.
 We might also write $\lambda(X)$ in place of $n^*(X)$.
\begin{lemma} \label{nwx}
On $H$ there is a constant $K>0$ such that for every  vector $X\in TH$, we have the affine bound
\[
\vert n^*(d\varphi_s(X))\vert\le K \sqrt{g(X,X)} \, s +\vert n^*(X)\vert.
\]
\end{lemma}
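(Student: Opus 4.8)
The plan is to track the scalar $\lambda(s):=n^*(d\varphi_s(X))$ along the flow of $n$ and to show that its derivative is controlled by $\sqrt{g(X,X)}$ uniformly over $H$; integrating then yields the affine bound. Fix $p\in H$ and $X\in T_pH$, and set $X(s):=d\varphi_s(X)\in T_{\varphi_s(p)}H$ together with $\lambda(s):=n^*(X(s))$.

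First I would derive the evolution equation for $\lambda$. Writing $n^*(X(s))=((\varphi_s)^*n^*)_p(X)$ and using the standard identity $\frac{d}{ds}\varphi_s^*\alpha=\varphi_s^*(L_n\alpha)$, one gets the clean relation
\[
\lambda'(s)=(L_n n^*)(X(s)).
\]
Now decompose $X(s)=X(s)^\perp+\lambda(s)\,n$ according to (\ref{rgf}). Since $n^*(n)=1$ is constant and $L_n n=[n,n]=0$, we have $(L_n n^*)(n)=L_n(n^*(n))-n^*(L_n n)=0$, so the component of $X(s)$ along $n$ does not contribute, and hence $\lambda'(s)=(L_n n^*)(X(s)^\perp)$.

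Next I would bound $(L_n n^*)$ on horizontal vectors. As $L_n n^*$ is a fixed smooth $1$-form on the compact manifold $H$, the very argument of Lemma \ref{iac} applied to it (in place of $\omega$) produces a constant $K>0$ with $\vert(L_n n^*)(Y)\vert\le K\sqrt{g(Y,Y)}$ for every $Y\in\ker n^*$, so that $\vert\lambda'(s)\vert\le K\sqrt{g(X(s)^\perp,X(s)^\perp)}$. It then remains to observe that this norm is constant along the flow: on $H$ the metric $g$ is degenerate precisely in the direction $n$, whence $g(n,Y)=0$ for all $Y\in TH$ and therefore $g(X(s)^\perp,X(s)^\perp)=g(X(s),X(s))$; moreover $g(X(s),X(s))=(\varphi_s^*\tilde g)(X,X)=g(X,X)$ because $L_n\tilde g=0$ by Lemma \ref{hnw}. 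Thus $\vert\lambda'(s)\vert\le K\sqrt{g(X,X)}$ for every $s$, and integrating from $0$ to $s$ gives
\[
\vert\lambda(s)\vert\le\vert\lambda(0)\vert+K\sqrt{g(X,X)}\,s=\vert n^*(X)\vert+K\sqrt{g(X,X)}\,s,
\]
which is the asserted estimate.

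The only genuinely delicate points are bookkeeping ones: confirming the pullback identity $\lambda'(s)=(L_n n^*)(X(s))$ (a standard but easy-to-misplace computation), and ensuring that the constant $K$ is uniform over $H$, which is exactly where compactness enters, via the same mechanism as in Lemma \ref{iac}. Everything else is forced by the degeneracy of $g$ along $n$ and by the flow-invariance $L_n\tilde g=0$, so I do not expect any real obstacle beyond these.
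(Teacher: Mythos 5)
Your proof is correct and follows essentially the same route as the paper: differentiate $\lambda(s)=n^*(d\varphi_s(X))$, reduce to the horizontal part via $(L_nn^*)(n)=0$, bound $L_nn^*$ on horizontal vectors by compactness of the projective bundle, use $L_n\tilde g=0$ to make the bound $s$-independent, and integrate. Your use of the pullback identity $\frac{\dd}{\dd s}\varphi_s^*n^*=\varphi_s^*(L_nn^*)$ is in fact a slightly cleaner justification of the derivative step than the paper's, which instead restricts to subintervals where the integral curve has no self-intersections.
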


This type of bound will be fundamental in our study and distinguishes our treatment from those of previous references.

\begin{proof}
Let us consider the integral curve of $n$ starting from $p$ and ending at $\varphi_s(p)$, $\sigma: [0,s] \to H$, $r\mapsto \varphi_r(p)$. Notice that $\sigma$ can have self intersections. Let $W(r)\in T_{\varphi_r(p)}H$ be the vector  on the image of $\sigma$ obtained by flowing $X$ with the 1-parameter group of diffeomorphisms generated by  $n$, i.e.\  $W(r)=\dd \varphi_r(X)$.
Observe that all over $H$, $0=L_n(1)=L_n (n^*(n))=(L_n n^*)(n)$, thus on the image of $\sigma$ we have
\[
\frac{\dd }{\dd r} n^*(W(r))=(L_n n^*)(W(r))=(L_n n^*)(W(r)^\perp).
\]
Observe that locally the first step of the above calculation can be performed in a small subinterval $I\subset [0,s]$, $r\in I$, such that $\sigma\vert_I$ has no self intersections. Then $W$ can be regarded as a field on $\sigma(I)$ and hence we can replace  $\frac{\dd }{\dd r}$ with $L_n$ and use $L_n W=0$.

The symmetric bilinear form $\tilde g$ is positive definite once restricted to the horizontal bundle $\ker n^*\subset TH$, as the latter is transversal to $n$ at every point.
As the projective bundle of the horizontal bundle  is compact, the function   $Z\mapsto (L_n n^*)(Z)/\sqrt{g(Z,Z)}$ is continuous and hence bounded on it, thus there is $K>0$ such that for every $n^*$-horizontal vector $Z$,  $\vert(L_n n^*)(Z)\vert\le K\sqrt{g(Z,Z)}$. Finally, since $L_n \tilde g=0$,
\begin{align*}
\vert (L_n n^*)(W(r)^\perp) \vert\le K\sqrt{g(W(r)^\perp,W(r)^\perp)} &=K \sqrt{g(d\varphi_r(X),d\varphi_r(X))}\\
&= K\sqrt{g(X,X)}.
\end{align*}
By integrating in $r$, in the interval $[0,s]$, we obtain  a bound on $\vert n^*(d\varphi_s(X))-n^*(X)\vert \ge \big\vert \vert n^*(d\varphi_s(X))\vert-\vert n^*(X)\vert \big\vert$.
$\square$ \end{proof}

\begin{itemize}
\item[($\star\star$)] {\em In what follows we shall assume the dominant energy condition without further notice (i.e.\ the assumption of Lemma \ref{viw}) or the weaker condition $\textrm{Ric}(n,\cdot)\vert_{TH}=0$.}
\end{itemize}

The importance of the condition $\textrm{Ric}(n,\cdot)\vert_{TH}=0$ was recognized in \cite[Thm.\ 1.2]{rendall98} \cite[Assumption 2.1]{petersen19}.
The fact that it follows from the dominant energy condition was  noted in  \cite[Prop.\ 4.1]{racz00} \cite[Rem.\ 1.15]{petersen18b}.

A {\em ribbon} is the homotopy  of two  homotopically related $n^*$-horizontal curves $\sigma_0$ and $\sigma_1$ (hence $\sigma_1(r)=\varphi(\sigma_0(r), s(r))$ for some continuous function $s(r)>0$). If the ribbon is injective (no self-intersection) its image is graphically bounded by four  curves, the horizontal sides $\sigma_0$ and $\sigma_1$ and two generator segments $\gamma_0$, $\gamma_1$ starting from $\sigma_0(0)$ and $\sigma_0(1)$ respectively, see Fig.\ \ref{frg}.


\begin{figure}[ht]
\centering
\includegraphics[width=4cm]{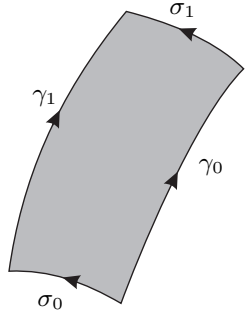}
\caption{The curves $\sigma_0$ and $\sigma_1$ are $n^*$-horizontal.  The integral $\int_{\sigma_1} \omega$ is bounded by a constant which is independent on how elongated  the ribbon is in the longitudinal direction.
The future endpoints of $\gamma_0$ and $\gamma_1$ correspond to values of the integral parameter of $n$ that do not necessarily coincide, however there is a one-to-one correspondence between them so one goes to infinity iff the other goes to infinity (cf. Appendix).
} \label{frg}

\end{figure}

Stokes theorem applied to $\dd \omega$  (which vanishes on the tangent bundle to the ribbon) gives
\begin{equation} \label{sta}
\int_{\sigma_0} \omega+\int_{\gamma_1} \omega-\int_{\sigma_1} \omega-\int_{\gamma_0}\omega=0.
\end{equation}
We are not considering here the possibility that the ribbon self intersects. Still one can obtain Eq.\ (\ref{sta}) also in this case, by  splitting the ribbon in several shorter, non-self intersecting ribbons for which the previous equation in display holds. Then summing all the equations so obtained one gets the equation for the original ribbon (the interior horizontal segments give opposite contributions that cancel out). In other words, the global validity of the equation follows from its local validity.

Alternatively, and more precisely, we can integrate the pullback of $\dd \omega$ by the homotopy map \cite{reiris21,moncrief08}:
Let $A=\{(r,u): 0\le u \le s(r), \ r\in [0,1]\}\subset \mathbb{R}^2$.
Let $\chi: A\to H$ be the map  (the map $\chi$ is $C^1$, see the Appendix)
\[
\chi: (r,u)\mapsto \varphi(\sigma_0(r), u)
\]
where $\varphi$ is the flow of $n$. Observe that for fixed $r$, $\chi(r,\cdot): [0,s(r)]\to H$ maps to an integral curve of $n$, thus $\chi_*(\p_u)\propto n$.

Now we apply Stokes theorem to $A$ and the form $\chi^*\omega$
\[
\int_A \dd  \chi^* \omega=\int_{\p A} \chi^*\omega
\]
observe that $\dd  \chi^* \omega=\chi^*\dd \omega=0$ because $A$ is a subset of $\mathbb{R}^2$ and
\[
(\chi^*\dd \omega)(\cdot, \p_u)=\dd \omega(\cdot, \chi_*(\p_u)) \propto \dd\omega(\cdot, n)=0.
\]
Thus
\[
0=\int_{\p A} \chi^*\omega
\]
which is the Equation (\ref{sta}). By using our Lemma \ref{vig} we arrive at
\begin{equation}
\vert \int_{\gamma_1} \omega-\int_{\gamma_0}\omega \vert\le 2B
\end{equation}
where the constant $B(\sigma_0)$ does not depend on  how extended  the ribbon is in the longitudinal direction.
For a discussion of the existence and extendibility of ribbons (homotopies) we refer the reader to the Appendix.

\begin{proposition} \label{vos}
The validity of the property ``the integral $\int_{\gamma_p([0,s))} \omega$ over the future-directed generator  starting from $p$ converges to  $-\infty$'' (resp. $+\infty$, is upper bounded, is lower bounded) does not depend on the generator $\gamma_p$ considered.
\end{proposition}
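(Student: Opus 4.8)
The plan is to compare any two generators by spanning an $n^*$-horizontal ribbon between them and exploiting the uniform estimate $\vert\int_{\gamma_1}\omega-\int_{\gamma_0}\omega\vert\le 2B$ that follows from Stokes' theorem (\ref{sta}) together with Lemma \ref{vig}, the decisive point being that $B=B(\sigma_0)$ is insensitive to how far the ribbon is elongated. Writing $F_p(\tau):=\int_{\gamma_p([0,\tau))}\omega$, the four alternatives in the statement are statements about the behaviour of $F_p$ as $\tau\to\infty$, and I want to show they are independent of $p$.

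First I would record two elementary invariances. If $q=\varphi_t(p)$ lies on the same generator as $p$, then $F_q$ and $F_p$ differ by a bounded quantity in the limit, so the behaviour is unchanged along generators. The substantial claim is that the behaviour is also unchanged when $p=\sigma_0(0)$ and $q=\sigma_0(1)$ are the endpoints of a single horizontal curve $\sigma_0$; granting this, a flow-box argument for the (complete, as $H$ is compact) field $n$ shows that every point near $q$ is reached from $q$ by first moving horizontally and then flowing along $n$, so the set of base points sharing the limiting behaviour of $\gamma_p$ is open, and connectedness of $H$ upgrades it to all of $H$.

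The heart of the proof is then the construction of a ribbon over $\sigma_0$ with horizontal top edge and arbitrary length. Setting $\sigma_1(r)=\varphi(\sigma_0(r),s(r))$ and imposing $n^*(\sigma_1')=0$ turns horizontality into the ODE $s'(r)=-n^*(\dd\varphi_{s(r)}(\sigma_0'(r)))$ with initial height $s(0)=s_0$. Here I would invoke Lemma \ref{nwx} for the horizontal vector $\sigma_0'(r)$ (so that $n^*(\sigma_0'(r))=0$), which gives the linear differential inequality $\vert s'(r)\vert\le K\sqrt{g(\sigma_0',\sigma_0')}\,s(r)$. By Gronwall the solution exists and stays positive on all of $[0,1]$, with $s_0 e^{-L}\le s(1)\le s_0 e^{L}$ where $L=\int_0^1 K\sqrt{g(\sigma_0',\sigma_0')}\,\dd r$; furthermore $s_0\mapsto s(1)$ is continuous and strictly increasing, hence a bijection of $(0,\infty)$. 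Thus the top endpoints of $\gamma_0=\gamma_p\vert_{[0,s_0]}$ and $\gamma_1=\gamma_q\vert_{[0,s(1)]}$ recede to infinity together, which is exactly the one-to-one correspondence announced in Fig.\ \ref{frg}.

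Combining the two ingredients finishes the argument: the ribbon estimate reads $\vert F_q(s(1))-F_p(s_0)\vert\le 2B$ for every $s_0$, uniformly in the ribbon length. If $F_p\to-\infty$, then for any target $\tau>0$ I choose $s_0$ with $s(1)=\tau$ (possible by the bijection), whence $F_q(\tau)\le F_p(s_0)+2B$; since $\tau\to\infty$ forces $s_0\ge\tau e^{-L}\to\infty$, we conclude $F_q(\tau)\to-\infty$, and the cases $+\infty$, upper bounded, and lower bounded are identical. I expect the genuine obstacle to be precisely this control of the ribbon's top edge: one must keep $\sigma_1$ horizontal while stretching the ribbon without bound and simultaneously guarantee that both generators are exhausted together. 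It is the affine (rather than merely bounded) estimate of Lemma \ref{nwx} that rules out the pathological scenario in which the top edge runs off to infinity along one generator while lagging on another, and it is this input that makes the whole comparison uniform.
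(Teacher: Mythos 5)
Your proof is correct and follows essentially the same route as the paper: the ribbon estimate from Stokes' theorem and Lemma \ref{vig}, the horizontal-lift ODE controlled via the affine bound of Lemma \ref{nwx} to keep the top edge horizontal and to tie the two endpoints' escape to infinity together, and an openness-plus-connectedness argument to globalize. The only presentational differences are that you isolate the translation-along-a-generator invariance where the paper instead absorbs it into a uniform constant $R$ over a cylindrical neighborhood whose quotient disc is convex, and that you make the two-sided Gronwall bound $s_0 e^{-L}\le s(1)\le s_0 e^{L}$ explicit in the main argument where the paper defers the corresponding bijection statement to its appendix.
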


Thus, if for one generator the integral is not lower (upper) bounded, the same is true for every generator.
\begin{proof}
For every  $p\in H$  we can find  a cylindrical neighborhood $C$ whose quotient under the flow of $n$ is a disc of radius $\rho$ in the quotient metric (which, for sufficiently small cylindrical neighborhood,  is well defined by Lemma \ref{hnw}), while the height is some $\delta$ in the parameter of $n$ (cf. Figure \ref{xgq}).

\begin{figure}[ht]
\centering
\includegraphics[width=9cm]{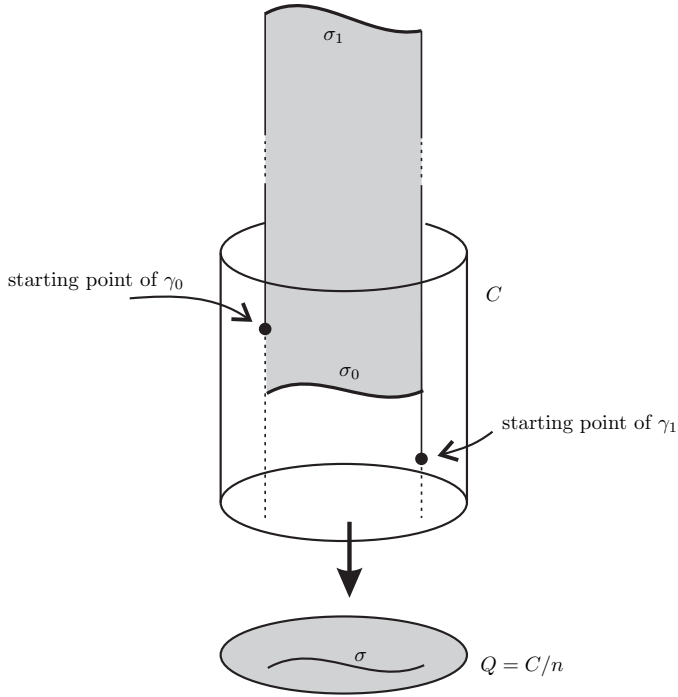}
\caption{Any two generators $\gamma_0$ and $\gamma_1$ starting from the cylinder, and suitably prolonged, can be  connected by a curve $\sigma_0$ which is the horizontal lift of a geodesic $\sigma$ living on the quotient space $Q$.} \label{xgq}
\end{figure}

 We know that for sufficiently small $\rho$ the quotient disc is actually convex  \cite[Sec.\ 4.7]{docarmo16}, that is, any two points on the quotient can be connected by a geodesic of the quotient metric.
Let us consider the  $n^*$-horizontal lifts of such geodesics, namely the $n^*$-horizontal curves projecting on them
(See the Appendix for more details on this concept). Their existence shows that any two generators starting from the cylinder can be joined by a horizontal curve $\sigma_0$ provided the starting point of the generator is suitably translated on the generator. Consider an arbitrarily elongated ribbon with starting  horizontal curve $\sigma_0$. The image of the ribbon and the closing horizontal curve $\sigma_1$ need not stay inside the  cylinder (the quotient construction is used to find $\sigma_0$ while for the existence of the ribbon we refer the reader to the Appendix).

 Now observe that $\int \omega$ evaluated over any generator segment of the cylinder   is bounded by some a priori constant $R>0$ (because the segments are determined by their parameter length and by their starting point, and this pair belongs to a compact set $\bar C\times [0,\delta]$ while the function $\int \omega$ depends continuously on these variables), thus for any two generator segments starting from the cylinder and ending outside it we have
\begin{equation} \label{vkq}
\vert \int_{\gamma_1} \omega-\int_{\gamma_0}\omega \vert\le 2B +2R ,
\end{equation}
where it is understood that the two final endpoints of the generators are connected by the horizontal curve $\sigma_1$.

The constants $2B$ and $2R$ do not depend on how long the ribbon is chosen to be, thus we conclude that  $\int_{\gamma_0} \omega$ converges to a  $\pm$ infinite value if and only if $\int_{\gamma_1} \omega$ converges to a $\pm$ infinite value. Similarly in the bounded cases.
Thus the starting points that lead to an integral $\int \omega$ which is of a given type (e.g.\ converges to $+\infty$) form open sets, as do those of the negative type (resp.\ does not converge to $+\infty$)  and so connected components of the horizon. As the horizon is connected we conclude that on the horizon we can only have one of the behaviors.
$\square$ \end{proof}

We are going to show that if $H$ admits a future incomplete generator then all geodesic generators are future incomplete (Cor.\ \ref{cbv}). Our proof makes use of the next important observation.

\begin{lemma} \label{iif}
Let $f: [0,\infty)\to \mathbb{R}$ be a continuous bounded function. If $g(t):=\exp ( \int^t_0 f(s) \dd s)$ is integrable then $\int_0^{\infty} f(t) \dd t=-\infty$.
\end{lemma}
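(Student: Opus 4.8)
The plan is to set $F(t):=\int_0^t f(s)\,\dd s$, so that $g=\eu^F$, and to reduce the whole statement to a single convergence fact: integrability of the positive function $g$ together with boundedness of $f$ forces $g(t)\to 0$ as $t\to\infty$. Once this is established, since $F=\log g$, the conclusion $\int_0^\infty f(t)\,\dd t=\lim_{t\to\infty}F(t)=-\infty$ is immediate. So all the content is in showing $g(t)\to 0$.

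The key preliminary observation is how boundedness of $f$ constrains the decay rate of $g$. Writing $M:=\sup_{[0,\infty)}|f|<\infty$, the identity $g'=fg$ (valid since $g=\eu^F$ is $C^1$) gives $|(\log g)'|=|f|\le M$, so $\log g$ is $M$-Lipschitz. Equivalently, for all $t,t_0\ge 0$ one has $g(t)\ge g(t_0)\,\eu^{-M|t-t_0|}$: a bounded log-derivative means the positive function $g$ cannot drop by more than a fixed exponential factor over a bounded time interval. This is precisely the structural input that rules out the thin, tall spikes which would otherwise allow an integrable positive function to fail to tend to $0$.

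The heart of the argument, which I expect to be the main (though mild) obstacle since it is where one must upgrade $\liminf g=0$ to $\lim g=0$, is the following contradiction. Suppose $g(t)\not\to 0$; then there exist $\varepsilon>0$ and a sequence $t_n\to\infty$ with $g(t_n)\ge\varepsilon$, and after passing to a subsequence we may assume $t_{n+1}>t_n+1$, so the intervals $[t_n,t_n+1]$ are pairwise disjoint. By the Lipschitz bound of the previous paragraph, $g(t)\ge\varepsilon\,\eu^{-M}$ for every $t\in[t_n,t_n+1]$, whence $\int_{t_n}^{t_n+1}g\,\dd t\ge\varepsilon\,\eu^{-M}$ for each $n$. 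Summing over the disjoint intervals gives $\int_0^\infty g\,\dd t\ge\sum_n\varepsilon\,\eu^{-M}=+\infty$, contradicting the integrability of $g$. Hence $g(t)\to 0$, i.e.\ $F(t)\to-\infty$, which is the assertion. (It is worth noting in passing that boundedness of $f$ is genuinely needed: without it one can build integrable $g$ with spikes along which $F\to+\infty$, so the hypothesis cannot be dropped.)
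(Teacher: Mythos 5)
Your proof is correct, and it takes a genuinely different (and more self-contained) route than the paper. The paper invokes an external result of Lesigne on the behavior at infinity of an integrable function, which only yields $g(nx)\to 0$ along arithmetic progressions for almost every step $x>0$; the boundedness of $f$ is then used a second time to interpolate between the points $nx$ and conclude $\int_0^\rho f\to-\infty$ for the full limit $\rho\to\infty$. You instead prove directly that $g(t)\to 0$: the bound $\vert f\vert\le M$ makes $\log g$ an $M$-Lipschitz function, so $g$ cannot drop by more than the factor $\eu^{-M}$ over a unit interval, and if $g(t_n)\ge\varepsilon$ along a sequence $t_n\to\infty$ with disjoint unit intervals, each interval contributes at least $\varepsilon\,\eu^{-M}$ to $\int_0^\infty g$, contradicting integrability. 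This is a clean Barbalat-type argument; it eliminates the external citation, upgrades the conclusion from convergence along a subsequence to genuine convergence $g(t)\to 0$ in one step, and uses the hypothesis $\vert f\vert\le M$ exactly once and for the same structural purpose as the paper (ruling out thin spikes or dips of $g$ over bounded time intervals). Both arguments are valid; yours is arguably preferable for a self-contained exposition.
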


\begin{proof}
By a result due to Lesigne \cite{lesigne10} for almost every $x>0$ we have
\[
\lim_{n\to +\infty}  g(nx) =0,
\]
hence
for a choice of such $x$ we have since $g$ is positive
\[
\lim_{n\to +\infty} \int^{nx}_0 f= \lim_{n\to +\infty}  \log g(nx) =-\infty
\]
For any $\rho>0$ we can write $\rho=[\rho]_x x+r(\rho)$ where $[\rho]_x$ is the integer such that $r\in [0,x)$. Of course, for $\rho \to \infty$ we have $[\rho]_x\to \infty$. Note that
\[
\vert \int^{\rho}_ {[\rho]_x} f(s)\dd s \vert \le K x
\]
where $K>0$ is the bound for $f$, i.e.\ $\vert f\vert \le K$. Thus
\[
\lim_{\rho \to +\infty} \int^\rho_0 f= \lim_{\rho \to +\infty} ( \int^{[\rho]_x}_0 f+ \int^{\rho}_ {[\rho]_x} f )\le \lim_{\rho \to +\infty} ( \int^{[\rho]_x}_0 f+Kx )=-\infty.
\]

$\square$ \end{proof}

\begin{corollary}
If the horizon  admits one future incomplete generator, then \[\int_{\gamma([0,\infty))} \omega=-\infty\] over every generator.
\end{corollary}

Remember that over a generator $\int_{\gamma([0,s))} \omega=\int_0^s \kappa(\gamma(s)) \dd s$ where $s$ is the integral parameter of $n$. For shortness, we might denote the argument ``$\kappa(s)$''.
\begin{proof}
The function $\kappa$ is continuous on the horizon, which is compact, thus it is bounded. Incompleteness of the generator $\gamma$ reads $\int_0^\infty (\exp \int_0^t \kappa(s) \dd s) \dd t< \infty$, i.e.\ $g(t):=\exp \int_0^t \kappa(s) \dd s$ is integrable.
By Lemma \ref{iif}
$\int_\gamma \omega=-\infty$ over the incomplete generator, thus the same equation holds for any generator by Prop.\ \ref{vos}.
$\square$ \end{proof}

\begin{proposition} \label{viu}
Assume $H$ is such that $\int_{\gamma([0,\infty))} \omega=-\infty$ over every generator (e.g.\ because it admits a future incomplete generator). There is a constant $C>0$ such that for every generator parametrized with the integral parameter of $n$
(and regardless of the zero point of the parametrization)
\begin{equation} \label{vid}
\int_0^\rho \kappa(s) \dd s<0
\end{equation}
 for every $\rho \ge C$.
\end{proposition}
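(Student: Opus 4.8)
The plan is to upgrade the pointwise hypothesis into a \emph{uniform} linear decay estimate by combining an additive cocycle identity with compactness of $H$. Write
\[
F(p,\rho):=\int_0^\rho \kappa(\varphi_s(p))\,\dd s=\int_{\gamma_p([0,\rho))}\omega ,
\]
so the assumption reads $\lim_{\rho\to\infty}F(p,\rho)=-\infty$ for every $p\in H$, and (since every admissible zero point is simply some $p\in H$) the conclusion we seek is a constant $C>0$ with $F(p,\rho)<0$ for all $p\in H$ and all $\rho\ge C$. The structural fact I would build on is the cocycle relation
\[
F(p,t+\rho)=F(p,t)+F(\varphi_t(p),\rho),
\]
which follows at once from the change of variables $u=s-t$ and $\varphi_s\circ\varphi_t=\varphi_{s+t}$.

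First I would produce, for each $p$, a time $T_p>0$ with $F(p,T_p)\le -1$; since $q\mapsto F(q,T_p)$ is continuous (it is the integral of $\kappa$ along the flow over a fixed parameter length), there is an open neighborhood $U_p\ni p$ on which $F(\cdot,T_p)\le -\tfrac12$. By compactness finitely many such neighborhoods $U_{p_1},\dots,U_{p_N}$ cover $H$; set $T_i:=T_{p_i}$, let $T_{\max}:=\max_i T_i$ and $T_{\min}:=\min_i T_i>0$, and let $K$ be a bound for $|\kappa|$ on $H$, which exists by continuity and compactness.

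The heart of the argument is a concatenation step. Starting from any $p$, pick $i_0$ with $p\in U_{i_0}$, advance to $p_{(1)}:=\varphi_{T_{i_0}}(p)$, pick $i_1$ with $p_{(1)}\in U_{i_1}$, and iterate, forming partial times $\tau_k:=\sum_{j<k}T_{i_j}$, so that $p_{(j)}=\varphi_{\tau_j}(p)$. The cocycle identity telescopes into $F(p,\tau_k)=\sum_{j<k}F(p_{(j)},T_{i_j})\le -k/2$; since $\tau_k\le kT_{\max}$ this reads $F(p,\tau_k)\le -\tau_k/(2T_{\max})$, while $\tau_k\ge kT_{\min}\to\infty$ guarantees the chain reaches every parameter. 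For a general $\rho$ I would take the largest $k$ with $\tau_k\le\rho$, split off the final incomplete step via the cocycle once more, and absorb it using $|F(p_{(k)},\rho-\tau_k)|\le KT_{\max}$, arriving at a bound of the shape
\[
F(p,\rho)\le -\frac{\rho}{2T_{\max}}+D,\qquad D:=\tfrac12+KT_{\max},
\]
with $D$ independent of $p$ and $\rho$. Then $C:=2T_{\max}D$ (or anything larger) finishes the proof.

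The main obstacle, and the only genuinely non-trivial point, is exactly the passage from the pointwise divergence of the hypothesis to a uniform estimate: a priori the rate at which $F(p,\rho)\to-\infty$ could degenerate as $p$ varies, so no single threshold time works for all $p$ simultaneously. The device that resolves this is the interplay of continuity of $F(\cdot,T)$, which spreads one good time $T_p$ over a neighborhood; compactness, which reduces to finitely many good times; and the cocycle identity, which chains these local gains along each generator at a guaranteed rate per step. It is worth emphasizing that this proof does \emph{not} re-invoke the ribbon argument: that machinery is needed only upstream, to ensure that $\int_{\gamma}\omega=-\infty$ holds on \emph{every} generator once it holds on one, which is precisely the standing hypothesis here.
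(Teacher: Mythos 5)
Your proof is correct, and it takes a genuinely different route from the paper's. The paper handles the passage from one generator to all generators \emph{transversally}: it works inside a cylindrical neighborhood, uses the ribbon estimate $\vert \int_{\gamma_1}\omega-\int_{\gamma_0}\omega\vert\le 2B+2R$ together with the horizontal transport map $\phi(q,z)$ to transfer the divergence $\int_0^t\kappa<-2B-3R$ from the central generator to every generator leaving the cylinder, and then covers $H$ by finitely many such cylinders. You instead work \emph{longitudinally}: the additive cocycle identity $F(p,t+\rho)=F(p,t)+F(\varphi_t(p),\rho)$, the continuity of $q\mapsto F(q,T)$ for fixed $T$, and compactness let you chain uniform decrements of at least $\tfrac12$ over steps of length in $[T_{\min},T_{\max}]$ along each single generator, with the remainder term absorbed by the bound $\vert\kappa\vert\le K$. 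This is a standard uniformization argument for continuous additive cocycles over a flow on a compact space, and it needs only the stated hypothesis (in fact only $\liminf_{\rho\to\infty}F(p,\rho)=-\infty$ for each $p$), not the ribbon machinery — as you correctly note, the ribbon is only needed upstream, in Proposition \ref{vos}, to propagate the divergence from one generator to all. Your approach buys simplicity and a stronger quantitative output, namely the uniform linear decay $F(p,\rho)\le-\rho/(2T_{\max})+D$, which already exhibits the exponential integrability exploited later; the paper's approach buys consistency with the geometric apparatus it has already built and reuses in subsequent arguments. The only cosmetic point is that $C:=2T_{\max}D$ gives $F(p,\rho)\le 0$ at $\rho=C$ from the non-strict form of your bound, so one should take $C$ strictly larger (as you indicate) or track the strict inequality $\tau_k>\rho-T_{\max}$ through the estimate.
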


In other words, the integral of surface gravity over a segment of generator of $n$-parametrization-length no smaller than $C$ is negative.

\begin{proof}
It is sufficient to prove the result for generators starting from a suitable neighborhood $U_p$ of an arbitrary point $p$. By a standard compactness argument, calling $C_p$ the constant, passing to a finite subcovering covering $\{U_{p_i}\}$, $C=\max \{C_{p_i}\}$ provides the constant for the whole horizon.

So let $p\in H$, we know that there is a cylindrical neighborhood $U_p$ of height $\delta$ in the $n$-parametrization such that Eq.\ (\ref{vkq}) holds (actually, taking $U_p$ slightly smaller we can let that equation hold in the closure of $U_p$).
Using notation as in the proof of Prop.\ \ref{vos} (including the definitions of $\delta$ and $R$), if the stronger inequality
\begin{equation}  \label{vbd}
\int_0^\rho \kappa(s) \dd s<-R
\end{equation}
holds for a generator starting from $q\in U_p$ and for $\rho\ge \tilde C$ then (\ref{vid}) is satisfied for any starting point $q'\in U_p$ on the same generator segment of $U_p$, it is sufficient to define $C= \tilde C+\delta$. This means that it is sufficient to prove the inequality (\ref{vbd})  for one starting point over each generator segment of the cylinder. If $\gamma_0$ is the geodesic starting from $p=:p_0$,  and $\gamma_1$ is another geodesic starting from a different generator of the cylinder, we pass to the quotient, consider their projections $\bar p_0$ and $\bar p_1$ and redefine the starting point  $p_1$ to stay in the horizontal lift of the unique geodesic connecting $\bar p_0$ to $\bar p_1$.  The union of the points so obtained gives a set $S$ on $U_p$ (its boundary is contained in $\p U_p$). Over every generator starting from $\bar S$ we consider the $n$-parameter $z$ which vanishes on $\bar S$.

For $q\in S$ let $z \mapsto \phi(q,z)$ be the horizontal transport map of the $n$-parameter $z$ over the $\bar p$-fiber to the $n$-parameter $z'=\phi(q,z)$ over the $\bar q$-fiber obtained by horizontally transporting along the geodesic connecting $\bar p$ to $\bar q$. This map is continuous and increasing in $z$ (no two horizontal lifts of the same curve can intersect). Let $T$ be such that  for every $t\ge T$, $ \int_0^t \kappa(s) \dd s<-2B-3R$ over the generator starting from $p$. Let $\tilde C$ be the maximum of $q \mapsto \phi(q,T)$ over $\bar S$.
Then (\ref{vkq}) reads for $\rho \ge \tilde C$
\begin{equation}
\vert \int_{0,\gamma_1}^{\rho} \kappa(s) \dd s - \int_{0,\gamma_0}^{t} \kappa(s) \dd s \vert \le 2B+2R
\end{equation}
where the ribbon projects to the radial geodesic starting from $\bar p$, thus
as $\rho\ge \tilde C$ we have $t\ge T$ and hence we get that (\ref{vbd}) is valid over every generator starting from $S$.
Finally, let $C=\tilde C+\delta$.
$\square$ \end{proof}

\begin{theorem} \label{bhq}
Assume $H$ is such that $\int_{\gamma([0,\infty))} \omega=-\infty$ over every generator (e.g.\ because it admits a future incomplete generator).
There is a smooth function $f:H\to \mathbb{R}$ such that $n'=e^f n$ has (smooth) surface gravity $\kappa'<0$ all over $H$.
\end{theorem}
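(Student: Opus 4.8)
The plan is to realize the rescaling $n'=e^f n$ with $e^f$ equal to a \emph{truncated affine length}, exploiting the same cancellation already visible in Lemma \ref{cpx}. Concretely, let $C>0$ be the constant furnished by Proposition \ref{viu}, and define
\[
\Lambda_C(p):=\int_0^C e^{\int_0^\rho \kappa(\varphi_s(p))\,\dd s}\,\dd \rho .
\]
Since $\kappa=\omega(n)$ is smooth on $H$ and the flow $\varphi$ is smooth, the integrand depends smoothly on $(p,\rho)$ over the compact interval $[0,C]$, so by differentiation under the integral sign $\Lambda_C$ is smooth; moreover the integrand is strictly positive, whence $\Lambda_C>0$. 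I would then set $f:=\log \Lambda_C$, which is smooth and globally well defined, and take $n'=e^f n=\Lambda_C\, n$.

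The core of the argument is the computation of the resulting surface gravity via the gauge formula (\ref{ok5}). Since $e^f=\Lambda_C$ one gets
\[
\kappa'=e^f(\kappa+\p_n f)=\kappa\,\Lambda_C+\p_n\Lambda_C ,
\]
so everything reduces to evaluating $\p_n\Lambda_C$. Writing $\psi(\rho):=\int_0^\rho\kappa(\varphi_s(p))\,\dd s$ and flowing the base point by $\varphi_\tau$, the substitution $u=s+\tau$ turns the inner integral into $\int_\tau^{\rho+\tau}\kappa(\varphi_u(p))\,\dd u$, whose $\tau$-derivative at $\tau=0$ is the boundary effect $\kappa(\varphi_\rho(p))-\kappa(p)$. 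The term carrying $\kappa(\varphi_\rho(p))=\psi'(\rho)$ integrates to a total derivative, $\int_0^C \tfrac{\dd}{\dd\rho}e^{\psi(\rho)}\,\dd\rho=e^{\psi(C)}-1$, while the term carrying $\kappa(p)$ gives exactly $-\kappa(p)\Lambda_C$. Hence
\[
\p_n\Lambda_C=e^{\psi(C)}-1-\kappa\,\Lambda_C ,
\]
which is precisely the truncated analogue of the identity $1+\kappa\Lambda+\p_n\Lambda=0$ of Lemma \ref{cpx}. The two $\kappa\,\Lambda_C$ contributions cancel and leave the clean expression
\[
\kappa'=e^{\psi(C)}-1=\exp\Big(\int_0^C\kappa(\varphi_s(p))\,\dd s\Big)-1 .
\]

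Finally, negativity is immediate: $\kappa'(p)<0$ holds iff $\int_0^C\kappa(\varphi_s(p))\,\dd s<0$, and this is guaranteed for \emph{every} generator, i.e.\ for every base point $p$, by the defining property of $C$ in Proposition \ref{viu} applied with $\rho=C$. Thus $n'=\Lambda_C\, n$ has smooth surface gravity $\kappa'<0$ all over $H$. I expect the only delicate point to be the bookkeeping in the computation of $\p_n\Lambda_C$: one must track carefully how both the lower limit and the base point move under $\varphi_\tau$, so that the boundary term $e^{\psi(C)}-1$ and the cancelling term $-\kappa\,\Lambda_C$ emerge correctly. Everything else (smoothness and positivity of $\Lambda_C$, and the final sign) is routine once Proposition \ref{viu} is invoked.
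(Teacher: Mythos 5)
Your proposal is correct and follows essentially the same route as the paper: both define $e^f$ as the truncated integral $\int_0^C e^{\int_0^\rho\kappa}\,\dd\rho$ with $C$ from Proposition \ref{viu}, differentiate under the integral sign along the flow of $n$, and arrive at the identical formula $\kappa'=e^{\int_0^C\kappa(x(s))\dd s}-1<0$. The only cosmetic difference is that you shift the inner integration limits while the paper shifts the outer domain to $[\tau,\tau+C]$; the Leibniz computation and the cancellation of the $\kappa\Lambda_C$ terms are the same.
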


\begin{proof}
We define $f$ through
\[
e^{f(p)}= \int_0^C e^{ \int_{\varphi(p, [0,s])} \omega} \dd s ,
\]
where $C>0$ is as in Prop.\ \ref{viu}.  It is clear that the function is smooth, thanks to the fact that it is defined via integration in a compact interval.
We want to calculate $\p_n f(p)$. Let $s \mapsto x(s)$ be the integral curve of $n$ starting from $p$. Similarly to Eq.\ (\ref{bud}) we have
\[
e^{f(x(\tau))}=\int_\tau^{\tau+C}e^{\int_\tau^\rho \kappa(x(s))\dd s} \dd \rho.
\]
Differentiating with respect to $\tau$ (here we can safely differentiate under the integral sign because the domain of integration is compact) and setting $\tau=0$ we get
\[
e^f \p_n f=-1 +e^{\int_0^{C} \kappa(x(s))\dd s}-\kappa e^f
\]
or, by Eq.\ (\ref{ok5}), $\kappa'=e^f(\kappa +\p_n f)=-1 +e^{\int_0^{C} \kappa(x(s))\dd s}<0$.
$\square$ \end{proof}

\begin{corollary} \label{cbv}
Assume $H$ admits a future incomplete generator.
All geodesic generators are future incomplete.
\end{corollary}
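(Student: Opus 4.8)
The plan is to deduce the statement directly from Theorem \ref{bhq} and the affine-length formula of Lemma \ref{cpx}, using compactness of $H$ to upgrade the pointwise negativity of surface gravity to a uniform bound. First I would invoke the corollary immediately preceding: since $H$ admits a future incomplete generator, the integral $\int_{\gamma([0,\infty))}\omega=-\infty$ over \emph{every} generator. This is exactly the hypothesis required by Theorem \ref{bhq}, which then furnishes a smooth function $f:H\to\mathbb{R}$ such that the rescaled field $n'=e^f n$ has surface gravity $\kappa'<0$ at every point of $H$.

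Next I would use compactness. As $\kappa'$ is continuous and strictly negative on the compact manifold $H$, it attains a negative maximum, so there is a constant $a>0$ with $\kappa'\le -a$ everywhere. Applying Lemma \ref{cpx} to the field $n'$ (whose surface gravity is $\kappa'$), the future affine length of the generator issuing from any $p$ with velocity $n'(p)$ satisfies
\[
\Lambda'(p)=\int_0^\infty e^{\int_0^\rho \kappa'(s)\,\dd s}\,\dd\rho\le \int_0^\infty e^{-a\rho}\,\dd\rho=\frac{1}{a}<\infty .
\]
Hence $\Lambda'$ is finite at every point, so every generator has finite future affine length and is therefore future incomplete.

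The one point that requires care, and the only mild obstacle, is that ``future incomplete'' is a property of the generator as a lightlike geodesic and must be shown insensitive to the rescaling $n\mapsto n'$. This is handled by the gauge transformation law $\Lambda'=e^{-f}\Lambda$ of Eq.\ (\ref{ok4}): since $f$ is smooth and finite, $\Lambda'(p)<\infty$ if and only if $\Lambda(p)<\infty$, so finiteness of $\Lambda'$ is equivalent to future incompleteness of the generator in its original $n$\nobreakdash-parametrization. Everything else is an immediate consequence of the uniform negativity of $\kappa'$ granted by Theorem \ref{bhq} together with compactness of $H$, which is what makes the exponential integrand decay fast enough to guarantee convergence along every generator simultaneously.
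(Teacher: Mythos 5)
Your proof is correct and follows essentially the same route as the paper's: invoke the preceding corollary and Theorem \ref{bhq} to choose $n$ with $\kappa<0$, use compactness to get a uniform negative bound, and conclude finiteness of the affine length from the integral formula of Lemma \ref{cpx}. Your explicit remark that finiteness of $\Lambda$ is gauge-invariant via Eq.\ (\ref{ok4}) is a small point the paper leaves implicit, but otherwise the arguments coincide.
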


\begin{proof}
We know that $n$ can be chosen so that $\kappa<0$.
As $\kappa$ is continuous and negative, by the compactness of $H$ there is some $K<0$, such that $k\le K<0$. Denoting with $x(s)$ the integral curve of $n$ starting from $p$, the  geodesic $\gamma$ starting from $p$ with tangent $\dot \gamma=n$ has future affine length
\begin{equation}
\Lambda(p)=\int_0^\infty e^{\int_0^\rho \kappa(x(s))\dd s} \dd \rho\le \int_0^\infty e^{ K \rho} \dd \rho \le \frac{1}{-K}<\infty.
\end{equation}
$\square$ \end{proof}

\section{The differential and Hessian of $\Lambda$}

In this section our goal is to prove that $\Lambda$ is $C^2$, and  that its Hessian admits an expression that, in vacuum, allows for a bootstrap argument (Cor.\ \ref{vja}). We start by proving that it is $C^1$.

\begin{theorem}
Assume $H$ admits a future incomplete generator.  The function $\Lambda$ is $C^1$.
\end{theorem}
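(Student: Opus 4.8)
The plan is to work in the gauge furnished by Theorem \ref{bhq}, so that the surface gravity satisfies $\kappa \le K < 0$ for some constant $K$ (such a $K$ exists by continuity of $\kappa$ and compactness of $H$, exactly as in Corollary \ref{cbv}). Writing $I(p,\rho):=\int_0^\rho \kappa(\varphi_s(p))\,\dd s=\int_{\gamma_p([0,\rho))}\omega$, Lemma \ref{cpx} gives $\Lambda(p)=\int_0^\infty \eu^{I(p,\rho)}\,\dd\rho$, and the bound $I(p,\rho)\le K\rho$ shows that the truncations $\Lambda_T(p):=\int_0^T \eu^{I(p,\rho)}\,\dd\rho$ converge to $\Lambda$ uniformly in $p$, the tail being bounded by $\eu^{KT}/(-K)$. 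Since each $\Lambda_T$ is continuous (its integrand is jointly continuous on the compact domain $[0,T]$), the uniform limit $\Lambda$ is continuous, and the derivative in the direction $n$ is then continuous for free: Eq.\ (\ref{biw}) gives $\p_n\Lambda=-1-\kappa\Lambda$.

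For a transverse direction $X\in T_pH$ the first step is to compute the $p$-derivative of the integrand at fixed $\rho$, where the $s$-integration runs over the compact interval $[0,\rho]$. Setting $W(s):=\dd\varphi_s(X)$ and differentiating the pullback identity $(\varphi_s^*\omega)_p(X)=\omega_{\varphi_s(p)}(W(s))$ gives $\frac{\dd}{\dd s}\omega(W(s))=(L_n\omega)(W(s))=\dd\kappa(W(s))$, the last equality being Lemma \ref{loq} (equivalently $\dd\omega(n,\cdot)=0$ from Lemma \ref{viw}). Hence
\[
\p_X I(p,\rho)=\int_0^\rho \dd\kappa(W(s))\,\dd s=\omega(\dd\varphi_\rho(X))-\omega(X),
\]
and the candidate formula for the differential is
\[
\p_X\Lambda(p)=\int_0^\infty \eu^{I(p,\rho)}\,[\,\omega(\dd\varphi_\rho(X))-\omega(X)\,]\,\dd\rho.
\]

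To legitimize differentiation under this improper integral and to obtain continuity of the result, I would produce a bound on $\omega(\dd\varphi_\rho(X))$ that is at most linear in $\rho$ and locally uniform in $p$. Decomposing $X=X^\perp+\lambda n$ as in (\ref{rgf}) and using $\dd\varphi_\rho(n)=n$ and $\omega(n)=\kappa$, I split $\dd\varphi_\rho(X^\perp)$ once more into its $n^*$-horizontal part and an $n$-part of coefficient $n^*(\dd\varphi_\rho(X^\perp))$. Lemma \ref{hnw} ($L_n\tilde g=0$) keeps the $g$-norm of the horizontal part equal to $\sqrt{g(X^\perp,X^\perp)}$, so Lemma \ref{iac} bounds $\omega$ of it by $C\sqrt{g(X^\perp,X^\perp)}$; meanwhile Lemma \ref{nwx} gives the affine bound $|n^*(\dd\varphi_\rho(X^\perp))|\le K'\sqrt{g(X^\perp,X^\perp)}\,\rho$, which multiplied by the bounded factor $\kappa=\omega(n)$ contributes the linear-in-$\rho$ term. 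Altogether $|\omega(\dd\varphi_\rho(X))|$ is controlled by an affine function of $\rho$ whose coefficients depend on $g(X,X)$ and $|\lambda|$ uniformly for $p$ in a compact neighborhood. Against the uniform exponential decay $\eu^{I(p,\rho)}\le \eu^{K\rho}$ this yields an integrable, locally $p$-uniform dominating function, so the standard theorem on differentiation under the integral sign applies: $\p_X\Lambda$ exists, equals the displayed formula, and is continuous in $p$. Running this for the $n$-direction and for a horizontal frame shows $\dd\Lambda$ exists and is continuous, i.e.\ $\Lambda\in C^1$.

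The main obstacle is precisely the passage through the noncompact $\rho$-integral: everything hinges on controlling $\omega(\dd\varphi_\rho(X))$ as $\rho\to\infty$, and the only reason this does not spoil integrability is that its at-worst-linear growth is beaten by the exponential decay guaranteed by $\kappa\le K<0$. The affine growth estimate of Lemma \ref{nwx} is therefore the true engine of the argument; without the negative–surface-gravity gauge of Theorem \ref{bhq} the dominating function would fail to be integrable and the differentiation under the integral would be unjustified.
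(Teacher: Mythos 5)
Your proposal is correct and follows essentially the same route as the paper's proof: fix the gauge with $\kappa\le K<0$, derive the candidate formula $\p_X\Lambda=\int_0^\infty[\omega(\dd\varphi_\rho(X))-\omega(X)]\,\eu^{\int_0^\rho\kappa}\dd\rho$ from $\dd\omega(n,\cdot)=0$ (your use of $L_n\omega=\dd\kappa$ is just Lemma \ref{loq} read in the other direction from the paper's Cartan-formula computation), bound the integrand affinely in $\rho$ via the splitting $X=X^\perp+\lambda n$ together with Lemmas \ref{iac}, \ref{hnw} and \ref{nwx}, and close with dominated convergence. The only cosmetic differences are your separate treatment of the $n$-direction via Eq.\ (\ref{biw}) and the preliminary uniform-convergence argument for continuity of $\Lambda$, neither of which changes the substance.
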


\begin{proof}
From Eq.\ (\ref{ok4}) we know that this property is independent of the choice of $n$ so we can choose $n$ such that $\kappa\le K<0$, and hence $\Lambda\le 1/(-K)$.

Let us calculate $\p_X \Lambda (p)$, $X\in T_pH$. Let us extend $X$ by pushing it forward with the flow of $n$ so as to obtain a map $s\mapsto X(s)=\dd \varphi_s(X)$, (elsewhere, to save space, we might also denote the pushforward with $X^s$) which defines a vector at $x(s)$, where $s\mapsto x(s)$ is the integral curve of $n$ starting from $p$. This is not really a vector field defined over a neighborhood of the generator passing from $p$ (as the generator can accumulate on itself and the vector field would be multi-valued) but with some abuse of notations we can write $[X,n]=0$, an equation which is correct if interpreted locally so as to get single valuedness.

The function $\Lambda$ is
\[
\Lambda(p)=\int_0^\infty e^{ \int_{\varphi(p, [0,s])} \omega} \dd s
\]
Now we are going to assume that we can switch the derivative and the integral. This key first step will be justified later when we shall show that the argument of the integral so obtained is continuous and integrable, a fact that allows one to apply the dominated convergence theorem to justify the first step.

Observe that by varying $p$ in direction $X$ we are really varying the generator. This is done with the vector field $X$ that locally is such that $[X,n]=0$ an equation which tells us that $X$ preserves the $n$-parametrization $\rho$ and  the related measure $\dd \rho$
\begin{align} \label{vod}
\p_X \Lambda (p)=\int_0^\infty [ \int_{\varphi(p, [0,\rho])}  L_X \omega ]   \, e^{ \int_{\varphi(p, [0,\rho])} \omega} \dd \rho
\end{align}
Now we use Eq.\ (\ref{vhw}) $L_X \omega=\dd \omega(X, \cdot)+\dd (\omega(X))$ observing that the first term on the right-hand side vanishes when integrated as $\dd \omega$ is annihilated by $n$ (cf.\ Lemma \ref{viw}). We arrive at
\begin{align}
\p_X \Lambda (p)=\int_0^\infty [ \omega(X(\rho))-\omega(X(0))] \,   e^{ \int_{\varphi(p, [0,\rho])} \omega} \dd \rho \label{vob} \\
=\int_0^\infty [ \omega(X(\rho))-\omega(X(0))]  \,  e^{ \int_0^\rho \kappa(x(s)) \dd s} \dd \rho.
\end{align}
Linearity of this expression in $X$ is clear. If we can show that the argument is integrable we justify two facts: (a) that the switching first step was justified, and (b) that this expression is actually continuous in $p$ and hence provides the continuous differential of $\Lambda$. Both follow from the dominated convergence theorem.

Since we already know that $\kappa\le K<0$ it is sufficient to show that  $[ \omega(X(\rho))-\omega(X(0))]$ is bounded by a polynomial in $\rho$.
Let us introduce the  1-form field $n^*$ as in Lemma \ref{iac} and let us split $X=X^\perp+\lambda n$ with  $X^\perp \in \ker n^*$. From the same Lemma and from $L_n X=0$, $L_n g=0$
\[
\vert \omega(X^\perp(\rho))\vert \le C \sqrt{g(X^\perp(\rho),X^\perp(\rho))}=  C \sqrt{g(X(\rho),X(\rho))}=C \sqrt{g(X(0),X(0)}
\]
By Lemma \ref{nwx} $\lambda$ is affinely bounded.
We have $\vert \omega(\lambda n)\vert =\vert \lambda \kappa\vert$ and since $\vert \kappa\vert$ is continuous on the compact $H$ it is also bounded, which proves that  $\omega(\lambda n)$ is affinely bounded. We conclude that $\vert \omega(X)\vert$ is bounded by an affine expression in $\rho$, which ends the proof.
$\square$ \end{proof}

The next result will be used to show that the condition $L_n\textrm{Ric}\vert_{TH}=0$, that we shall use in the bootstrap argument, is actually independent of the choice of vector field $n$ and so represents a well defined property of the horizon.

\begin{lemma} \label{kke}
Let $\sigma: H\to T^*H\times T^*H$ be a bilinear form on the horizon such that $\sigma(\cdot, n)=\sigma(n, \cdot)=0$, then $L_n\sigma$ has the same property. Moreover, $(L_{e^fn}\sigma)=e^f(L_{n}\sigma)$.
\end{lemma}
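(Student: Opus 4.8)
The plan is to reduce everything to the Leibniz formula for the Lie derivative of a $(0,2)$-tensor, namely
$$(L_V\sigma)(X,Y)=\partial_V\big(\sigma(X,Y)\big)-\sigma([V,X],Y)-\sigma(X,[V,Y]),$$
valid for all vector fields $V,X,Y$ on $H$, combined with the standing hypothesis $\sigma(\cdot,n)=\sigma(n,\cdot)=0$. No deeper input is required: the entire content of the lemma is that the derivative terms produced by the Leibniz rule are annihilated by this hypothesis.

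For the first assertion I would take $V=n$ and evaluate on the pair $(n,Y)$. Using $[n,n]=0$ the formula collapses to $(L_n\sigma)(n,Y)=\partial_n\big(\sigma(n,Y)\big)-\sigma(n,[n,Y])$, and both surviving terms vanish: the first because $\sigma(n,Y)\equiv 0$ (so its $n$-derivative is zero), the second because $\sigma(n,\cdot)=0$. The evaluation on $(X,n)$ is entirely symmetric, now invoking $\sigma(\cdot,n)=0$. Hence $L_n\sigma$ again annihilates $n$ in each slot, which is the first claim.

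For the rescaling identity I would set $V=e^f n$ and use the commutator identity $[e^f n,X]=e^f[n,X]-e^f(\partial_X f)\,n$. Substituting, the middle term expands as $-\sigma([V,X],Y)=-e^f\sigma([n,X],Y)+e^f(\partial_X f)\,\sigma(n,Y)$, whose last summand drops out since $\sigma(n,\cdot)=0$; an identical computation in the second slot uses $\sigma(\cdot,n)=0$. Since moreover $\partial_V\big(\sigma(X,Y)\big)=e^f\partial_n\big(\sigma(X,Y)\big)$, collecting the three contributions yields exactly $e^f(L_n\sigma)(X,Y)$. The only point requiring care---indeed the crux of the statement---is precisely this cancellation of the $(\partial_X f)$ and $(\partial_Y f)$ contributions, which is where the annihilation hypothesis on $\sigma$ is indispensable: without it one would instead obtain $L_{e^f n}\sigma=e^f L_n\sigma+(\partial_X f)\,\sigma(n,Y)+(\partial_Y f)\,\sigma(X,n)$, and the clean scaling law would fail.
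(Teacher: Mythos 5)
Your proof is correct and follows essentially the same route as the paper: both apply the Leibniz rule for the Lie derivative of a $(0,2)$-tensor and use $[e^f n,X]=e^f[n,X]-e^f(\partial_Xf)\,n$ together with the annihilation hypothesis to kill the extra terms. (The only nit is that in your final aside the spurious terms should carry an $e^f$ factor, but this does not affect the argument.)
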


\begin{proof}
For any vector field $X: H\to TH$
\[
(L_n\sigma)(X,n)=L_n[\sigma(X,n)]-\sigma(L_nX,n)-\sigma(X, L_n n)=0.
\]
For any vector fields $X, Y : H\to TH$
\begin{align*}
(L_{e^fn}\sigma)(X,Y)&=L_{e^fn}[\sigma(X,Y)]-\sigma(L_{e^fn}X,Y)-\sigma(X, L_{e^fn} Y)\\
&=e^f[L_n[\sigma(X,Y)]-\sigma(L_{n}X,Y)-\sigma(X, L_{n} Y)]\\
&\quad+e^f X(f)\sigma(n,Y)+e^f Y(f) \sigma(X,n)=e^f(L_n\sigma)(X,Y).
\end{align*}
$\square$ \end{proof}

\begin{theorem} \label{bjz}
Assume $H$ admits a future incomplete generator. The bilinear form $\eta: H\to T^*H\otimes T^*H$ on the horizon, given for $X,Y\in TH$ by
\begin{equation} \label{kxc}
\eta(X,Y):=\mu(Y,X)+i_X \nabla_Y\omega+\omega(X) \omega(Y),
\end{equation}
is symmetric and such that $(L_n \nabla)(Y)X=\eta(X,Y)n $.
 Moreover, the function $\Lambda$ is $C^2$ and it  satisfies
\begin{align} \label{kkv}
\begin{split}
&\textrm{Hess}\, \Lambda+ \omega\otimes \dd \Lambda+\dd \Lambda\otimes \omega+\Lambda \eta =B,
\end{split}
\end{align}
where the differential operator $T$ on the left-hand side is symmetric and under a gauge transformation $n'=e^fn$, $\Lambda'=e^{-f}\Lambda$, it satisfies (note that $\omega$ and hence $\eta$ change) $T' (\Lambda')=e^{-f} T \Lambda$, while
\begin{equation} \label{bcw}
B(p)(X,Y):=-\int_0^\infty   [\varphi_\rho^* \mu- \mu](Y,X) \, e^{ \int_{\varphi(p, [0,\rho])} \omega} \dd \rho,
\end{equation}
is a $C^0$ symmetric bilinear form.

\end{theorem}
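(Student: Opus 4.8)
The plan is to prove the three parts in sequence: the pointwise identity characterizing $\eta$ (which yields its symmetry for free), the $C^2$ regularity together with the tensorial identity (\ref{kkv}), and finally the stated properties of the operator $T$ and of the remainder $B$. For the first part I would argue purely algebraically. The Lie derivative of a torsion-free connection is the $(1,2)$-tensor $(L_n\nabla)(Y)X=\nabla_Y\nabla_X n-\nabla_{\nabla_Y X}n+R(n,Y)X$. Inserting $\nabla_\cdot n=\omega(\cdot)n$ and, from Lemma \ref{vie}, $R(n,Y)X=\mu(Y,X)n$, and expanding $\nabla_Y\nabla_X n=[(\nabla_Y\omega)(X)+\omega(\nabla_Y X)+\omega(X)\omega(Y)]n$, the terms $\pm\,\omega(\nabla_Y X)n$ cancel and one reads off $(L_n\nabla)(Y)X=\eta(X,Y)n$ with $\eta$ as in (\ref{kxc}). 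Symmetry of $\eta$ is then automatic, since the antisymmetric part of a Lie-derived connection is the Lie derivative of the torsion, which here vanishes; equivalently one symmetrizes (\ref{kxc}) directly using $\mu(Y,X)-\mu(X,Y)=\dd\omega(X,Y)$ from Lemma \ref{vie} and $(\nabla_Y\omega)(X)-(\nabla_X\omega)(Y)=-\dd\omega(X,Y)$.

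The heart of the proof is (\ref{kkv}), obtained by differentiating the first-derivative formula (\ref{vob}) once more. I extend $X,Y$ by the flow of $n$ so that $[X,n]=[Y,n]=0$ locally and compute $\textrm{Hess}\,\Lambda(X,Y)=Y(\dd\Lambda(X))-\dd\Lambda(\nabla_Y X)$ by differentiating under the integral sign. Differentiating the exponent reproduces, as in the $C^1$ step, the factor $\omega(Y(\rho))-\omega(Y)$ (the $i_Y\dd\omega$ piece of $L_Y\omega$ being killed by $n$, Lemma \ref{viw}). Differentiating $\omega(X(\rho))$ and writing $Y(\omega(X))=(\nabla_Y\omega)(X)+\omega(\nabla_Y X)$ forces one to compare the field $\nabla_Y X$ at the flowed point with its flow-pushforward; the difference is governed by the identity just proved, $L_n(\nabla_Y X)=(L_n\nabla)(Y)X=\eta(X,Y)n$, giving the drift $(\nabla_Y X)(\varphi_\rho p)=\dd\varphi_\rho(\nabla_Y X)(p)+\big(\int_0^\rho\eta(X,Y)\,\dd\sigma\big)n$. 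The pushforward part cancels exactly against $\dd\Lambda(\nabla_Y X)$, which is the reason for subtracting that term. Substituting $(\nabla_Y\omega)(X)=\eta(X,Y)-\mu(Y,X)-\omega(X)\omega(Y)$ introduces $\varphi_\rho^*\mu$; the purely quadratic terms collect into $-\omega(X)\dd\Lambda(Y)-\omega(Y)\dd\Lambda(X)$, the $-[\varphi_\rho^*\mu-\mu]$ term integrates to $B$, and an integration by parts in $\rho$ (using $\p_\rho e^{\int_0^\rho\kappa}=\kappa\,e^{\int_0^\rho\kappa}$ and $e^{\int_0^\rho\kappa}\to 0$) collapses the two remaining $\eta$-integrals into $-\Lambda\,\eta(X,Y)$, yielding (\ref{kkv}).

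For the final assertions, $T$ is symmetric because $\textrm{Hess}\,\Lambda$ and $\omega\otimes\dd\Lambda+\dd\Lambda\otimes\omega$ are symmetric and $\eta$ was shown symmetric; the gauge identity $T'(\Lambda')=e^{-f}T\Lambda$ is a direct check using $\Lambda'=e^{-f}\Lambda$, $\omega'=\omega+\dd f$, the gauge-invariance $\mu'=\mu$ (immediate from linearity of $R$ in $n$), and the induced $\eta'=\eta+\textrm{Hess}\,f+\omega\otimes\dd f+\dd f\otimes\omega+\dd f\otimes\dd f$, after which expanding $\textrm{Hess}(e^{-f}\Lambda)$ by the product rule cancels every $\dd f$-term. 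That $B$ is $C^0$ and symmetric uses $\mu(n,\cdot)=\mu(\cdot,n)=0$ and $g(n,\cdot)|_{TH}=0$: then $\mu(Y(\rho),X(\rho))$ depends only on horizontal parts, whose $g$-norm is $\varphi$-invariant (Lemma \ref{hnw}), so $\varphi_\rho^*\mu-\mu$ is bounded and the integrand is dominated by $\textrm{const}\cdot e^{K\rho}$ with $K<0$; continuity in $p$ is dominated convergence, and symmetry follows because the antisymmetric part of $[\varphi_\rho^*\mu-\mu](Y,X)$ equals $\varphi_\rho^*\dd\omega-\dd\omega=0$ by $L_n\dd\omega=0$ (Lemma \ref{loq}).

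The main obstacle is the analytic justification of the two differentiations under the integral sign and the concomitant $C^2$ claim: one must exhibit a single integrable majorant for the twice-differentiated integrand. This is exactly where the negative bound $\kappa\le K<0$ secured in Theorem \ref{bhq}, the affine growth bound of Lemma \ref{nwx}, and the $\varphi$-invariance of the horizontal metric (Lemma \ref{hnw}) combine: each factor of the integrand grows at most polynomially in $\rho$ while the weight $e^{\int_0^\rho\kappa}$ decays like $e^{K\rho}$, so dominated convergence legitimizes the interchange and simultaneously delivers the continuity of $\textrm{Hess}\,\Lambda$ and of $B$, hence $\Lambda\in C^2$.
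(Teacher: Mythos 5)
Your proposal is correct and follows essentially the same route as the paper's proof: the same Yano-type formula for $L_n\nabla$ yielding $\eta$, the same drift identity $\nabla_{Y^\rho}X^\rho-(\nabla_YX)^\rho=\bigl(\int_0^\rho\eta\bigr)n$, the same justification of differentiation under the integral by polynomial growth against the decaying exponential $e^{\int_0^\rho\kappa}$ with $\kappa\le K<0$, and the same integration-by-parts collapse of the $\eta$-integrals into $-\Lambda\eta$. The only cosmetic differences are that you derive the symmetry of $\eta$ from the symmetry of the Lie derivative of a torsion-free connection (the paper symmetrizes (\ref{kxc}) directly) and you bound the integrand of $B$ directly rather than reading its continuity off the established identity.
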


\begin{proof}
We start proving that it is $C^2$.
We calculate the Hessian
\[
\textrm{Hess} \,\Lambda(Y,X)=\p_Y \p_X \Lambda-\p_{\nabla_Y X} \Lambda
\]
For the former term on the right-hand side we differentiate  Eq.\ (\ref{vod}) by switching the order of differentiation and integration. Again, we shall prove that this step is allowed by proving that the differentiated argument is integrable. The latter term is  given directly in terms Eq.\ (\ref{vob})
which we already proved to be valid. Instead of showing the integrability of the argument for $\p_Y \p_X \Lambda$ we show directly that property for $\textrm{Hess} \,\Lambda(Y,X)$ as their difference is the argument of the integral expression for $\p_{\nabla_Y X} \Lambda$ which we already know to be integrable (in fact affine times a
converging exponential).
\begin{align}
\begin{split}
\p_Y \p_X \Lambda (p)=\int_0^\infty \Big\{&[ L_Y \int_{\varphi(p, [0,\rho])}  L_X \omega ] \\
&+ [ \int_{\varphi(p, [0,\rho])}  L_X \omega ]  [ \int_{\varphi(p, [0,\rho])}  L_Y \omega ]  \Big\}  \, e^{ \int_{\varphi(p, [0,\rho])} \omega} \dd \rho
\end{split}
\end{align}
\begin{align}
\p_{\nabla_Y X} (p)=\int_0^\infty   [ \int_{\varphi(p, [0,\rho])}  L_{\nabla_Y X} \omega ]    \, e^{ \int_{\varphi(p, [0,\rho])} \omega} \dd \rho
\end{align}
\begin{equation}
\begin{split}
\textrm{Hess} \Lambda(Y,X) (p)=\int_0^\infty \Big\{&[ L_Y ( \omega(X))(\rho)-L_Y(\omega(X))(0) ] \\
&-[\omega( (\nabla_YX)^
\rho)-\omega(\nabla_YX(0))]  \\
&\!\!\!\!\!\!\!\!\!\!\!\!\!\!\!\!\!\!\!\!\!\!\!\!\!\!\!\!+ [ \omega(X(\rho))-\omega(X(0))]  [ \omega(Y(\rho))-\omega(Y(0))]  \Big\}  \, e^{ \int_{\varphi(p, [0,\rho])} \omega} \dd \rho
\end{split}
\end{equation}
To avoid ambiguities we have denoted $(\nabla_YX)^\rho$ the push forward with the flow of $n$ of $\nabla_YX(0)$. This is not necessarily $\nabla_YX(\rho)$ (remember that $X,Y$ are invariant under the flow of $n$, but to stress the fact that they have been obtained pushing $X(0)$, $Y(0)$ with the flow of $n$,   we might write $X^\rho$, $Y^\rho$ and hence denote $\nabla_YX(\rho)\to \nabla_{Y^\rho}X^\rho$).
Let
\[
\nabla \omega(X,Y):=i_X \nabla_Y \omega=(\nabla_Y\omega)(X),
\]
so that, $\dd \omega(X,Y)=\nabla \omega(Y,X)-\nabla \omega(X,Y)$, and let us rewrite the Hessian as follows
\begin{align}
\begin{split}
\textrm{Hess} \Lambda(Y,X) (p)=\int_0^\infty \Big\{&[ \nabla \omega (X(\rho), Y(\rho))- \nabla \omega (X(0), Y(0)) ]  \\
&\qquad+\omega\left(  \nabla_{Y^\rho}X^\rho- (\nabla_YX)^\rho \right)  \\
&\!\!\!\!\!\!\!\!\!\!\!\!\!\!\!\!\!\!\!\!\!\!\!\!\!\!\!+ [ \omega(X(\rho))-\omega(X(0))]  [ \omega(Y(\rho))-\omega(Y(0))]  \Big\}  \, e^{ \int_{\varphi(p, [0,\rho])} \omega} \dd \rho
\end{split}
\end{align}
We study the terms on the three lines separately.
\begin{enumerate}
\item    The first term is  bounded by a quadratic polynomial in $\rho$, as can be seen splitting $X=X^\perp+\lambda_X n$ and $Y=Y^\perp+\lambda_Y n$ and arguing again as for the $C^1$ case. That is, the compactness of the projective bundle of $n^*$-horizontal vectors implies the existence of a constant $F>0$ such that
\begin{align*}
\nabla \omega (X^\perp(\rho), Y^\perp(\rho))&\le F\sqrt{g(X^\perp(\rho),X^\perp(\rho))}\sqrt{g(Y^\perp(\rho),Y^\perp(\rho))}\\
& \quad = F\sqrt{g(X(\rho),X(\rho))}\sqrt{g(Y(\rho),Y(\rho))}\\
&\quad = F\sqrt{g(X(0),X(0))}\sqrt{g(Y(0),Y(0))}.
\end{align*}
The terms that are not bounded by a constant are those that come from terms in $\lambda_X$ or $\lambda_Y$ which are rather bounded by affine expressions in $\rho$.
For instance, in order to treat the term $\omega (X^\perp(\rho), \lambda_Y n )$ we observe that by the compactness of the projective bundle of $n^*$-horizontal vectors there is  a constant $G>0$ such that
\begin{align*}
\nabla \omega (X^\perp(\rho), n)&\le G\sqrt{g(X^\perp(\rho),X^\perp(\rho))}\\
& \quad = G\sqrt{g(X(\rho),X(\rho))}\\
&\quad = G\sqrt{g(X(0),X(0))}.
\end{align*}
Thus up to a proportionality  constant the bound for that term is the same as that on $\lambda_X$ which is affine in $\rho$.
\item Let us study the second term.
For $Y,X\in TH$ we have, by a standard formula on the Lie derivative of a connection \cite[Eq.\ (2.23)]{yano55}
\begin{equation}
(L_n \nabla)(Y)X=R(n,Y)X+i_X \nabla_Y  \nabla n=R(n,Y)X+( \nabla \omega(X,Y)+\omega(X) \omega(Y)) n
\end{equation}
By Lemma \ref{vie} we have
\begin{equation} \label{lie}
(L_n \nabla)(Y)X=\eta(X,Y)n,
\end{equation}
where
\begin{equation}
\eta(X,Y)=\mu(Y,X)+\nabla \omega(X,Y)+\omega(X) \omega(Y),
\end{equation}
is a smooth bilinear form on the horizon. It is symmetric, indeed from Lemma \ref{vie}
\begin{align*}
\eta(X,Y)-\eta(Y,X)&=\mu(Y,X)-\mu(X,Y)+\nabla \omega(X,Y)-\nabla \omega(Y,X)\\
&=-\dd \omega(Y,X)+\dd \omega(Y,X)=0.
\end{align*}
Now notice that
\begin{align*}
&L_n[\nabla_{Y^\rho} X^\rho -(\nabla_{Y} X)^\rho- (\int_0^\rho \eta(X(s),Y(s))\dd s) n]\\
&=[\eta(X(\rho),Y(\rho))-\eta(X(\rho),Y(\rho))]n=0
\end{align*}
and since the vector in square brackets vanishes at $\rho=0$, we have actually that it vanishes for every $\rho$, thus
\begin{equation} \label{vqq}
\nabla_{Y^\rho} X^\rho -(\nabla_{Y} X)^\rho=[\int_0^\rho \eta(X(s),Y(s))\dd s]\, n.
\end{equation}
%
%
We have
 \[
\omega\left(  \nabla_{Y^\rho}X^\rho- (\nabla_YX)^\rho \right) =  \kappa(x(\rho))   \int_0^\rho  \eta(X(s),Y(s))\dd s.
 \]
By compactness of $H$ and continuity of $\eta$ we can easily bound the various terms that one gets splitting $X$ and $Y$.
More precisely, with the usual splitting  it follows that $\eta(X(s),Y(s))$ is bounded by a quadratic polynomial in $s$ and so the term that we are studying is bounded by a cubic polynomial in $\rho$.
\item The third term is bounded by a  quadratic polynomial in $\rho$, as each factor is bounded by an affine expression in $\rho$.
\end{enumerate}

In conclusion, the integral argument is bounded by a  cubic polynomial in $\rho$ and so it is integrable.

We observe that all three terms are separately symmetric in $(X,Y)$. The symmetry of the first term follows expressing it in terms of $\mu$ and using the symmetry of $L_n\mu$ (see Lemma \ref{loq}). The symmetry of the second term follows from the fact that the commutators of the push forwards is the push forward of the commutators. The symmetry of the last term is obvious.

We can actually simplify the expression for the Hessian further. First observe that
\begin{align*}
&[ \omega(X(\rho))-\omega(X(0))]  [ \omega(Y(\rho))-\omega(Y(0))]\\
 =&-[ \omega(X(\rho))-\omega(X(0))] \omega(Y(0))-\omega(X(0))  [ \omega(Y(\rho))-\omega(Y(0))]\\
&-\omega(X(0))\omega(Y(0))+ \omega(X(\rho)) \omega(Y(\rho))
\end{align*}
Thus by using the expression for $\dd \Lambda$, Eq.\ (\ref{vob}) we get
that the third term can be written
\begin{align*}
-\omega(Y) \p_X \Lambda-\omega(X)\p_Y \Lambda-\omega(X)\omega(Y)\Lambda+\int_0^\infty \omega(X(\rho)) \omega(Y(\rho))  \, e^{ \int_{\varphi(p, [0,\rho])} \omega} \dd \rho
\end{align*}
The first term can be written
\begin{align*}
-\nabla\omega(X,Y)  \Lambda+\int_0^\infty \nabla\omega(X(\rho),Y(\rho))  \, e^{ \int_{\varphi(p, [0,\rho])} \omega} \dd \rho
\end{align*}
The second term can be written
\begin{align*}
&\int_0^\infty   \kappa(x(\rho))   \int_0^\rho  \eta(X(s),Y(s)) \dd s\, e^{ \int_{\varphi(p, [0,\rho])} \omega} \dd \rho\\
=& \int_0^\infty   \kappa(x(\rho))   \int_0^\rho  [\nabla \omega(X(s),Y(s))+\omega(X(s)) \omega(Y(s))]\dd s\, e^{ \int_{\varphi(p, [0,\rho])} \omega} \dd \rho\\
&+\int_0^\infty   \kappa(x(\rho))   \int_0^\rho  \mu(Y(s),X(s)) \dd s\, e^{ \int_{\varphi(p, [0,\rho])} \omega} \dd \rho
\end{align*}
Summing the three terms and recalling that  $\int_{\varphi(p, [0,\rho])} \omega =\int_0^\rho \kappa(x(s)) \dd s$, we observe that three integral terms form the single term
\[
\int_0^\infty  \frac{\dd }{\dd \rho} \left\{  \int_0^\rho  [\nabla \omega(X(s),Y(s))+\omega(X(s)) \omega(Y(s))]\dd s\, e^{ \int_{\varphi(p, [0,\rho])} \omega} \right\}\dd \rho
\]
which vanishes upon integration, thus we arrive at
\begin{align*}
\textrm{Hess} \Lambda(X,Y)=&-[\nabla\omega(X,Y) +\omega(X)\omega(Y) ]\Lambda  -\omega(Y) \p_X \Lambda-\omega(X)\p_Y \Lambda \\
&+\int_0^\infty   \kappa(x(\rho))   \int_0^\rho  \mu(Y(s),X(s)) \dd s\, e^{ \int_{\varphi(p, [0,\rho])} \omega} \dd \rho\\
=&-[\nabla\omega(X,Y) +\omega(X)\omega(Y) ]\Lambda  -\omega(Y) \p_X \Lambda-\omega(X)\p_Y \Lambda \\
&-\int_0^\infty    \mu(Y(\rho),X(\rho)) \, e^{ \int_{\varphi(p, [0,\rho])} \omega} \dd \rho\\
=&-\eta(X,Y) \Lambda  -\omega(Y) \p_X \Lambda-\omega(X)\p_Y \Lambda \\
&-\int_0^\infty   [ \mu(Y(\rho),X(\rho))- \mu(Y(0),X(0))] \, e^{ \int_{\varphi(p, [0,\rho])} \omega} \dd \rho
\end{align*}

Now observe that if $n'=e^f n$, $\Lambda'=e^{-f} \Lambda$, $\omega'=\omega+\dd f$, thus
\begin{align*}
&[\textrm{Hess} \Lambda'+(\nabla\omega' + \omega' \otimes \omega') \Lambda' + \omega' \otimes \dd \Lambda'+\dd \Lambda'\otimes \omega']\\
=& e^{-f}[\textrm{Hess} \Lambda-\textrm{Hess} f+\dd f \otimes \dd f \Lambda-\dd f \otimes \dd \Lambda-\dd \Lambda \otimes \dd f]\\
&+e^{-f}[\nabla \omega+ \textrm{Hess} f+\omega \otimes \omega+\dd f \otimes \omega+\omega\otimes \dd f+\dd f\otimes \dd f]\Lambda\\
&+e^{-f}[\omega \otimes \dd \Lambda - \omega \otimes \dd f \Lambda+\dd f \otimes \dd \Lambda-\dd f \otimes \dd f \Lambda]\\
&+e^{-f}[ \dd \Lambda \otimes \omega - \dd f \otimes \omega \Lambda+ \dd \Lambda\otimes \dd f-\dd f \otimes \dd f \Lambda]\\
=&e^{-f}[\textrm{Hess} \Lambda+(\nabla\omega + \omega \otimes \omega) \Lambda + \omega \otimes \dd \Lambda+\dd \Lambda\otimes \omega]
\end{align*}
As $\mu$ is invariant, we have that the symmetric differential operator
\[
T\Lambda:=\textrm{Hess} \Lambda+\eta \Lambda + \omega \otimes \dd \Lambda+\dd \Lambda\otimes \omega
\]
 changes through multiplication by an exponential factor i.e.\ $T' \Lambda'=e^{-f} T \Lambda$. Of course, this is also the transformation of
\begin{equation} \label{biz}
B(p)(X,Y):=-\int_0^\infty   [ \mu(Y(\rho),X(\rho))- \mu(Y(0),X(0))] \, e^{ \int_{\varphi(p, [0,\rho])} \omega} \dd \rho,
\end{equation}
that is $B'=e^{-f} B$, it is sufficient to take into account that $e^f\dd \rho'=\dd \rho$ as $n'=e^fn$, and that the push forward $X'(\rho)$ of $X(0)$ with $n'$ is going to be of the form $X'=X+gn$ where $g$ satisfies $L_ng=X'(f)=X(f)+g L_nf$ and where we recall that $\mu(\cdot, n)=\mu(n,\cdot)=0$.


We conclude that $\Lambda$ satisfies the equation
\begin{align} \label{kkw}
\begin{split}
&\textrm{Hess}\, \Lambda (X,Y) + \omega(X)  \dd \Lambda(Y)+\dd \Lambda(X) \omega(Y)+\eta(X,Y) \Lambda=B(X,Y),
\end{split}
\end{align}
The same equation shows that $B$ is a $C^0$ symmetric bilinear form, as so is the left-hand side. Its symmetry follows also from the fact that $L_n\mu$ is symmetric.
$\square$ \end{proof}

\section{Bootstrap argument for $L_n\textrm{Ric}\vert_{TH}=0$}

The equation for the Hessian of $\Lambda$ proved in Theorem \ref{bjz} can be used in a bootstrap argument provided the bilinear form $B$ is smooth, which would be the case if it vanishes. From Eq.\ (\ref{bcw}) we see that it vanishes if $\mu$ is left invariant by the flow of $n$. Therefore, the following result is important.

\begin{lemma} \label{drk}
We have the identity
\begin{equation} \label{pfr}
L_n\mu =\frac{1}{2} L_n \textrm{Ric} \vert_{TH}.
\end{equation}
\end{lemma}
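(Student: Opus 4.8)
The plan is to evaluate both sides on vector fields $X,Y\in TH$ that are invariant under the flow of $n$ (so that $[n,X]=[n,Y]=0$ along the generators), turning $L_n$ into $\p_n$ on the component functions and reducing the statement to a pointwise bilinear identity. Since $\mu(X,Y)-\mu(Y,X)=-\dd\omega(X,Y)$ and $L_n\dd\omega=0$ by Lemma \ref{loq}, the antisymmetric part of $\mu$ is already $L_n$-invariant; hence $L_n\mu$ equals the Lie derivative of the symmetric part, and it suffices to identify $L_n\mu^{\mathrm{sym}}$ with $\tfrac12 L_n\textrm{Ric}\vert_{TH}$. Both forms annihilate $n$ (Lemma \ref{viw}), both $\mu$ and $\textrm{Ric}\vert_{TH}$ are gauge invariant, and by Lemma \ref{kke} both sides rescale by the same factor $e^f$ under $n\mapsto e^fn$, so the statement is at least gauge consistent.

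First I would reduce each side to a component of the once-differentiated curvature $\nabla_n R$, writing $(\nabla_n R)(A,B,C,D):=g((\nabla_nR)(A,B)C,D)$. Differentiating the defining relation $R(n,X)Y=\mu(X,Y)\,n$ of Lemma \ref{vie} along $n$, and using $\nabla_nn=\kappa n$, $\nabla_nX=\nabla_Xn=\omega(X)n$ together with $R(n,n)=0$ and $R(n,X)n=\dd\omega(n,X)n=0$ (Lemma \ref{viw}), all connection correction terms collapse and one gets $(\nabla_n R)(n,X)Y=(L_n\mu)(X,Y)\,n$, i.e.\ $(L_n\mu)(X,Y)=-(\nabla_nR)(n,X,Y,N)$ for the null partner $N$ with $g(N,n)=-1$. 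On the Ricci side, for flow-invariant $X,Y$ the Lie-derivative formula gives $(L_n\textrm{Ric})(X,Y)=(\nabla_n\textrm{Ric})(X,Y)+\omega(X)\textrm{Ric}(n,Y)+\omega(Y)\textrm{Ric}(n,X)$, and the last two terms vanish because $\textrm{Ric}(n,\cdot)\vert_{TH}=0$; hence $L_n\textrm{Ric}\vert_{TH}=\nabla_n\textrm{Ric}\vert_{TH}$.

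It then remains to prove the curvature identity $(\nabla_n\textrm{Ric})(X,Y)=2(L_n\mu)(X,Y)$. Expanding the Ricci trace in the null frame $(n,N,e_1,\dots,e_{n-1})$ of Lemma \ref{vie} yields $(\nabla_n\textrm{Ric})(X,Y)=-(\nabla_nR)(n,X,Y,N)-(\nabla_nR)(N,X,Y,n)+\sum_i(\nabla_nR)(e_i,X,Y,e_i)$, whose first term is exactly $(L_n\mu)(X,Y)$ by the previous step. The entire content is therefore the relation $\sum_i(\nabla_nR)(e_i,X,Y,e_i)=(\nabla_nR)(N,X,Y,n)-(\nabla_nR)(n,X,Y,N)$: the transverse and horizontal parts of the trace must reproduce a second copy of $L_n\mu$, and this is the source of the factor $2$, reflecting that in the trace the null direction $n$ is paired with $N$. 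To obtain it I would feed the second Bianchi identity into the horizontal sum, keeping every derivative direction tangent to $H$ (namely $n$, $X$, $e_i$), and use the contracted Bianchi identity $\nabla^aR_{abcd}=\nabla_d\textrm{Ric}_{bc}-\nabla_c\textrm{Ric}_{bd}$ to re-express the resulting divergences; the terms $(\nabla_X\textrm{Ric})(n,Y)$ that arise vanish identically, since $\textrm{Ric}(n,\cdot)\vert_{TH}=0$ everywhere on $H$ and $H$ is totally geodesic (so $\nabla_XY\in TH$).

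The main obstacle is that this reduction, carried out naively, produces a transverse second-order curvature term of the type $(\nabla_NR)(X,n,Y,n)$ (equivalently $(\nabla_nR)(N,X,Y,n)$), which is a priori undetermined by $\mu$, $\omega$ and $\textrm{Ric}$; the crux is to show it is in fact forced. I would control it by differentiating along $n$, and along tangent directions, the DEC-consequences that hold on $H$ — namely $R(n,X)n=0$, $R(n,N,Y,n)=0$ and $R(X,n,Y,n)=0$ — which (because $R(n,X)n$ and $R(X,n)n$ vanish as vectors, so every correction term closes on $H$) yield the vanishing components $(\nabla_XR)(n,N,Y,n)=0$ and $(\nabla_YR)(n,X,n,N)=0$, and then close the resulting system of Bianchi relations. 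It is precisely here, rather than in the purely algebraic reductions above, that the dominant energy condition — through $\textrm{Ric}(n,\cdot)\propto g(n,\cdot)$ and its corollaries, not merely the null convergence condition — is indispensable, and where the bulk of the computation lies.
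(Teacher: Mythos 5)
Your reduction to $\nabla_nR$ is sound as far as it goes: for flow-invariant $X,Y$ one does get $(\nabla_nR)(n,X)Y=(L_n\mu)(X,Y)\,n$ and $L_n\textrm{Ric}\vert_{TH}=\nabla_n\textrm{Ric}\vert_{TH}$. But you misplace the difficulty, and the step you rely on to finish fails. First, the term $-(\nabla_nR)(N,X,Y,n)$ is \emph{not} undetermined: by the pair symmetry of $\nabla_nR$ it equals $(L_n\mu)(Y,X)=(L_n\mu)(X,Y)$, exactly as $-g(n,R(N,X)Y)=\mu(Y,X)$ in the undifferentiated trace; likewise the genuinely transverse component $(\nabla_NR)(X,n,Y,n)$ is fixed by a single second Bianchi identity together with $(\nabla_XR)(n,N,Y,n)=0$ (a tangential derivative of a relation valid on $H$, hence legitimate). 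What actually remains to be shown is that the horizontal trace $S:=\sum_i(\nabla_nR)(e_i,X,Y,e_i)$ vanishes, and here your ``closing of the system of Bianchi relations'' is circular. Indeed, summing $(\nabla_nR)(e_i,X,Y,e_i)+(\nabla_{e_i}R)(X,n,Y,e_i)+(\nabla_XR)(n,e_i,Y,e_i)=0$ over $i$ and completing the last two sums to full traces via the contracted Bianchi identity, one finds $\sum_i(\nabla_XR)(n,e_i,Y,e_i)=-(\nabla_X\textrm{Ric})(n,Y)=0$ and $\sum_i(\nabla_{e_i}R)(X,n,Y,e_i)=-(\nabla_n\textrm{Ric})(X,Y)+2(L_n\mu)(X,Y)=-S$, so the identity collapses to $S=S$. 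No combination of Bianchi identities with tangential derivatives of the pointwise conditions $R(n,X)n=0$, $R(X,n,Y,n)=0$, $\textrm{Ric}(n,\cdot)\vert_{TH}=0$ determines $S$; these constraints are consistent with any value of the horizontal trace.

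The missing ingredient is not a further consequence of the dominant energy condition but of $L_n\tilde g=0$ (Lemma \ref{hnw}), i.e.\ the fact that $n$ is Killing for the degenerate induced metric. The paper feeds this in through the Yano formula (\ref{goq}), which expresses $L_nR$ in terms of two covariant derivatives of $h=L_ng$: since $h\vert_{TH\times TH}=0$ and $H$ is totally geodesic, every tangential covariant derivative of $h$ restricted to $TH$ vanishes, hence $g(e_i,(L_nR)(e_i,X)Y)=0$ and the horizontal trace is $L_n$-invariant; the factor $2$ then comes from $L_n(\mu(X,Y)+\mu(Y,X))=2(L_n\mu)(X,Y)$ using the symmetry of $L_n\mu$ (Lemma \ref{loq}). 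To salvage your route you would have to import this second-order consequence of $L_n\tilde g=0$ (or an equivalent one); as written, the final and essential step of your argument does not go through.
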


\begin{proof}
Let $\{b_a, a=0, \cdots,n\}$ be the basis $(N,n, e_1,e_2, \cdots)$ at $T_p M$, $p\in H$, where $g(n,N)=-1$, $g(n,e_i)=g(N,e_i)=0$, $g(e_i,e_j)=\delta_{ij}$, and let $\{b^a\}$ be the cobasis
\[
(-g(n,\cdot), -g(N,\cdot), g(e_1,\cdot), g(e_2,\cdot), \cdots).
\]
 We have for $X,Y\in T_pH$, using $R(n,X)Y=\mu(X,Y) n$
\begin{align*}
\textrm{Ric}(X,Y)=&\sum_a b^a(R(b_a,X)Y)=-g(n,R(N,X)Y)-g(N,R(n,X)Y)\\
&+\sum_{i=1}^{n-1} g(e_i,R(e_i,X)Y)\\
=&-g(N,R(n,Y)X)-g(N,R(n,X)Y)+\sum_{i=1}^{n-1} g(e_i,R(e_i,X)Y)\\
=&\mu(Y,X) + \mu(X,Y) +\sum_{i=1}^{n-1} g(e_i,R(e_i,X)Y).
\end{align*}
We already know that $L_n\mu$ is symmetric so we need only to show that the Lie derivative of the last term vanishes. Observe that extending $n$ by using the local geodesic flow of $N$ so that $L_nN=0$ we have $L_n(g(N,e_i))=(L_ng)(N,e_i)+g(N,L_ne_i)$, thus
we can extend $e_i$ locally preserving the above properties by imposing $L_n e_i=(L_ng)(N,e_i) n$. Thus let us extend $X,Y$ so that $L_nX=L_nY=0$, we have for each $i$
\begin{align*}
L_n  [g(e_i,R(e_i,X)Y)]=&(L_ng)(e_i,R(e_i,X)Y)+g( e_i, (L_nR)(e_i,X)Y)\\
&+g(L_n e_i, R(e_i,X)Y)+g( e_i, R(L_ne_i,X)Y),
\end{align*}
the last two terms vanish due to $L_ne_i \propto n$ and $R(n,X)Y=\mu(X,Y) n$ (or using the fact that $H$ is totally geodesic). The first term vanishes because $L_n g\vert_{TH\times TH}=0$ and because $R(e_i,X)Y\in TH$ as $H$ is totally geodesic.
We are left with
\begin{equation} \label{oku}
L_n  [g(e_i,R(e_i,X)Y)]=g( e_i, (L_nR)(e_i,X)Y).
\end{equation}
We have \cite[Eq.\ (2.6)]{katzin69} \cite{yano55} denoting $L_ng=h_{ab} e^a \otimes e^b$ and $R(e_d,e_a)e_b=R^c{}_{bda} e_c$,
\begin{equation} \label{goq}
2g_{cm} (L_nR)^m{}_{bda}=(h_{ac;b}+h_{cb;a}-h_{ab,c})_{;d}-(h_{dc;b}+h_{cb;d}-h_{db;c})_{;a}
\end{equation}
We want to show that the right-hand side vanishes for $a,b,c,d\ge 1$. If a covariant $(0,k)$-tensor vanishes on $TH$ then the same holds for $\nabla T \vert_{TH}$, as it is immediate from the formula
\[
(\nabla_X T)(Y,\cdots)=\p_X(T(Y,\cdots))-T(\nabla_X Y, \cdots) - \cdots
\]
using the fact that $\nabla_XY\in TH$ as $H$ is totally geodesic. Since this is true for $L_ng$, the same is true for its two covariant derivatives, and so for the right-hand side of Eq.\ (\ref{goq}). We conclude that the right-hand side of Eq.\ (\ref{oku}) vanishes, and so the desired equation is proved.
$\square$ \end{proof}

\begin{corollary} \label{vja}
Assume $H$ admits a future incomplete generator.
If $L_n\textrm{Ric}\vert_{TH}=0$ on the horizon, then $B=0$. Thus   (cf.\ Eq.\ (\ref{kkv}))
\begin{equation}
\textrm{Hess} \Lambda + \omega\otimes \textrm{d}\Lambda + \textrm{d}\Lambda\otimes\omega + \eta \Lambda =0
\end{equation}
and $\Lambda$ is $C^\infty$ by bootstrapping.
\end{corollary}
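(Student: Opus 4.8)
The plan is to deduce the vanishing of $B$ from the hypothesis, read off the resulting homogeneous equation for $\Lambda$, and then run a regularity bootstrap. First I would record that the hypothesis $L_n\textrm{Ric}\vert_{TH}=0$ is independent of the chosen section: applying Lemma \ref{kke} to $\sigma=\textrm{Ric}\vert_{TH}$, which satisfies $\sigma(n,\cdot)=\sigma(\cdot,n)=0$ by Lemma \ref{viw}, gives $L_{e^fn}\sigma=e^fL_n\sigma$, so the condition holds for one admissible $n$ iff it holds for all of them. This lets me work with the field $n$ produced by Theorem \ref{bhq} (equivalently Corollary \ref{cbv}), for which $\kappa\le K<0$, $\Lambda$ is finite, and the $C^2$ regularity together with the Hessian identity (\ref{kkv}) of Theorem \ref{bjz} are available. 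Then Lemma \ref{drk} gives $L_n\mu=\tfrac12 L_n\textrm{Ric}\vert_{TH}=0$, so $\mu$ is invariant under the flow $\varphi_\rho$ of $n$, i.e.\ $\varphi_\rho^*\mu=\mu$ for all $\rho$. Inserting this into the integral expression (\ref{bcw}) for $B$, the integrand $[\varphi_\rho^*\mu-\mu](Y,X)$ vanishes identically, whence $B=0$.

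With $B=0$, the identity (\ref{kkv}) of Theorem \ref{bjz} becomes exactly the displayed homogeneous relation
\[
\textrm{Hess}\,\Lambda+\omega\otimes\dd\Lambda+\dd\Lambda\otimes\omega+\eta\Lambda=0.
\]
I would rewrite it as
\[
\textrm{Hess}\,\Lambda=-\big(\omega\otimes\dd\Lambda+\dd\Lambda\otimes\omega\big)-\eta\Lambda,
\]
emphasising that the coefficient tensors are smooth: $\omega$ is smooth because $n$ is, while $\eta$ (built from $\mu$, $\nabla\omega$ and $\omega\otimes\omega$ in (\ref{kxc})) is smooth because it is assembled from the curvature and derivatives of the smooth data $g$ and $n$.

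The regularity bootstrap is then the core of the argument, though it is of the standard ``maximally overdetermined elliptic'' type. In a chart of $H$, where the induced Koszul connection has smooth Christoffel symbols $\Gamma^k_{ij}$, I use $\textrm{Hess}\,\Lambda(\p_i,\p_j)=\p_i\p_j\Lambda-\Gamma^k_{ij}\p_k\Lambda$. Assume inductively that $\Lambda\in C^m$ with $m\ge 2$, the base case $m=2$ being Theorem \ref{bjz}. Then $\dd\Lambda\in C^{m-1}$, so the right-hand side above lies in $C^{m-1}$, hence each component $\textrm{Hess}\,\Lambda(\p_i,\p_j)\in C^{m-1}$. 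Solving for the plain second partials, $\p_i\p_j\Lambda=\textrm{Hess}\,\Lambda(\p_i,\p_j)+\Gamma^k_{ij}\p_k\Lambda\in C^{m-1}$, so all second derivatives of $\Lambda$ are $C^{m-1}$ and therefore $\Lambda\in C^{m+1}$. Iterating from $m=2$ yields $\Lambda\in C^\infty$.

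The only genuinely delicate ingredient is the vanishing of $B$, which rests entirely on the identity $L_n\mu=\tfrac12 L_n\textrm{Ric}\vert_{TH}$ of Lemma \ref{drk}; once $B=0$ the equation has smooth coefficients and the iteration is routine. The point to keep honest is that the tensorial Hessian identity really does pin down all coordinate second derivatives of $\Lambda$ through the smooth Christoffel symbols, so that each pass of the bootstrap gains one full derivative and the induction closes.
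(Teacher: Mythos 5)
Your proposal is correct and follows essentially the same route as the paper: the paper's argument is precisely that $L_n\textrm{Ric}\vert_{TH}=0$ gives $L_n\mu=0$ via Lemma \ref{drk}, hence $\varphi_\rho^*\mu=\mu$ and $B=0$ in Eq.\ (\ref{bcw}), after which Eq.\ (\ref{kkv}) expresses all second partials of $\Lambda$ in terms of $\Lambda$, $\dd\Lambda$ and smooth coefficients, closing the bootstrap. Your explicit coordinate rendering of the bootstrap and the gauge-independence check via Lemma \ref{kke} are details the paper leaves implicit, but they are the intended argument.
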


We recall that by Lemma \ref{kke} the condition  $L_n\textrm{Ric}\vert_{TH}=0$ does not depend on the choice of $n$ and so represents a property of the horizon.

\begin{theorem} \label{puy}
Assume  the horizon admits a future incomplete generator and that $L_n\textrm{Ric}\vert_{TH}=0$.  Then the smooth vector field $n$ can be chosen so that $\kappa=-1$ and $\Lambda=1$, in which case $L_n\omega=0$, $\eta=0$ (i.e.\ $L_n\nabla\vert_{TH} =0$), $L_n\mu=0$ and for $X,Y\in TH$
\begin{equation}
R(n,X)Y=-[i_Y\nabla_X \omega+\omega(X) \omega(Y)] n
\end{equation}
\end{theorem}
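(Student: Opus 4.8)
The plan is to fix the gauge by taking $f=\log\Lambda$, i.e.\ to adopt the canonical section $K=\Lambda n$ as the new vector field. For this to be legitimate I first need $\Lambda$ to be finite, positive and smooth. Finiteness follows from Corollary \ref{cbv}: a future incomplete generator forces $\kappa$ to be choosable negative and bounded away from zero, whence $\Lambda<\infty$. Positivity is immediate from the integral representation (\ref{bud}). Smoothness is precisely the content of Corollary \ref{vja}, which applies because the hypothesis $L_n\textrm{Ric}\vert_{TH}=0$ forces $B=0$. Hence $f=\log\Lambda$ is a well-defined smooth function and the gauge transformation (\ref{ok1})--(\ref{ok5}) with this $f$ is admissible.

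Next I would read off the new affine length and surface gravity. By (\ref{ok4}), $\Lambda'=e^{-f}\Lambda=1$, so the new affine length is identically one. For the surface gravity I use the gauge-invariant quantity $U=\kappa\Lambda+\partial_n\Lambda$: the differential equation (\ref{biw}) gives $U=-1$, and since $\Lambda'\equiv 1$ has $\partial_{n'}\Lambda'=0$, invariance of $U$ yields $-1=U=\kappa'\Lambda'+\partial_{n'}\Lambda'=\kappa'$, so $\kappa'=-1$ (equivalently, from (\ref{ok5}), $\kappa'=e^f(\kappa+\partial_n f)=\Lambda\kappa+\partial_n\Lambda=U=-1$). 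Renaming this $n'$ back to $n$, I then have $\kappa\equiv-1$ and $\Lambda\equiv 1$.

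The four stated identities now follow mechanically, the substance having been established in the earlier results. Since $\kappa$ is constant, Lemma \ref{loq} gives $L_n\omega=\dd\kappa=0$. The hypothesis $L_n\textrm{Ric}\vert_{TH}=0$ is gauge invariant by Lemma \ref{kke}, hence survives the rescaling, and Lemma \ref{drk} then yields $L_n\mu=\tfrac12 L_n\textrm{Ric}\vert_{TH}=0$. Applying Corollary \ref{vja} to the new $n$ (its hypotheses being gauge invariant), or equivalently invoking $T'\Lambda'=e^{-f}T\Lambda=0$ from Theorem \ref{bjz}, the Hessian equation (\ref{kkv}) collapses: inserting $\Lambda\equiv 1$ kills $\textrm{Hess}\,\Lambda$ and $\dd\Lambda$, leaving $\eta=0$, which by (\ref{lie}) is exactly $L_n\nabla\vert_{TH}=0$. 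Finally, setting $\eta=0$ in its definition (\ref{kxc}) gives $\mu(X,Y)=-[i_Y\nabla_X\omega+\omega(X)\omega(Y)]$, and substituting this into $R(n,X)Y=\mu(X,Y)n$ from Lemma \ref{vie} produces the claimed curvature formula.

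I do not expect a serious obstacle, since the heavy lifting---the smoothness of $\Lambda$ and the vanishing of $B$---is already done upstream. The only genuinely delicate points are bookkeeping ones: I must use the gauge invariance of $L_n\textrm{Ric}\vert_{TH}=0$ (Lemma \ref{kke}) to re-run the $C^\infty$-regularity and $B=0$ after the rescaling, and I must use $\Lambda>0$ so that $\log\Lambda$ is smooth. In that sense the present statement is essentially a clean gauge-fixing that repackages the previous facts in the canonical normalization $\kappa=-1$, $\Lambda=1$.
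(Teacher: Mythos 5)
Your proposal is correct and follows essentially the same route as the paper's proof: gauge-fix via $e^f=\Lambda$ (justified by the smoothness and positivity of $\Lambda$ from Corollary \ref{vja}), read off $\Lambda=1$ and $\kappa=-1$ from Eqs.\ (\ref{ok4}) and (\ref{biw}), then obtain $L_n\omega=0$ from Lemma \ref{loq}, $\eta=0$ from Eq.\ (\ref{kkv}) with $B=0$, $L_n\mu=0$ from Lemma \ref{drk}, and the curvature identity from Eq.\ (\ref{kxc}) and Lemma \ref{vie}. Your extra bookkeeping (positivity of $\Lambda$, gauge invariance of the hypothesis via Lemma \ref{kke}, and the use of the invariant $U$ to compute $\kappa'$) is a harmless elaboration of the same argument.
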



\begin{proof}
We start from any smooth $n$ and rescale it by choosing $f$ in Eqs.\ (\ref{ok4})-(\ref{ok5}) so that $e^f=\Lambda$. Then by Eq.\ (\ref{ok4}), the new vector field, here denoted in the same way, is such that $\Lambda=1$, and by Eq.\ (\ref{biw}) we have $\kappa=-1$. The equation $L_n\omega=0$ follows from Lemma \ref{loq}, while $\eta=0$ follows from Eq.\ (\ref{kkv}) with $B=0$. The equation $L_n\mu=0$ follows from Eq.\ (\ref{pfr}) (it can also be obtained  by taking the Lie derivative of Eq.\ (\ref{kxc}), recalling Eq.\ (\ref{lie}) and using $\eta=0$, $L_n\omega=0$). The equation in display is deduced from the expression of $\mu$ obtained in  Eq.\ (\ref{kxc}).
$\square$ \end{proof}

\section{Smoothness of $\Lambda$ in the non-vacuum case}

In this section we prove the smoothness of $\Lambda$ for the non-vacuum case thus without passing through the bootstrap argument. For another inductive proof see \cite{gurriaran21}.
\begin{definition}
We say that a smooth $s$-dependent covariant tensor field $T_s$
is polynomially bounded in the parameter $s$ if there is a polynomial $p(s)$ such that for every $q\in H$,  $X_i\in T_qH$, $n^*$-horizontal,  $i=1,\cdots, k$,
\[
\vert T_s(X_1,X_2, \cdots, X_k)\vert\le  p(s) \Pi_{i=1}^k \sqrt{g(X_i,X_i)}.
\]
and similarly for some of the $X_i$ on the left-hand side replaced by $n$, in which case the factor $\sqrt{g(X_i,X_i)}$ on the right-hand side has to be omitted.
\end{definition}

\begin{definition}
We say that a smooth $\rho$-dependent covariant tensor field $T_\rho$ belongs to $\mathcal{I}$ if it has the form
\[
T_\rho=  R_\rho\circ \mathcal{P}
\]
where $\mathcal{P}$ is a permutation of the vector arguments, and $R_\rho$ belongs to the smallest family of smooth ($\rho$-dependent) covariant tensors which (a) contains the  smooth ($\rho$-independent) covariant tensor fields, (b) is invariant under tensor products and sums, (c) is invariant under pullback $S_\rho \to \varphi^*_\rho S_\rho$  (d) is invariant under average $S_\rho \to \int_0^\rho S_r \dd r$ (in other words each tensor $R_\rho$ is obtained from a finite family of smooth covariant ($\rho$-independent) tensors by applying a finite number of operations (b), (c) and (d)).
\end{definition}
We notice that if $T_\rho\in \mathcal{I}$ then any  contraction with $n$ belongs to $\mathcal{I}$.

Through the usual splitting argument we get
\begin{lemma}
Every element of $\mathcal{I}$ is polynomially bounded.
\end{lemma}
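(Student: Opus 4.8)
The plan is to argue by \emph{structural induction} on the way an element of $\mathcal{I}$ is built. Since a permutation $\mathcal{P}$ of the arguments leaves the right-hand side of the polynomial bound unchanged — it involves only the symmetric product $\prod_i \sqrt{g(X_i,X_i)}$ — it suffices to show that the class $\mathcal{B}$ of polynomially bounded tensors contains the base family (a) and is closed under the three operations (b), (c), (d). Each $R_\rho$ is obtained from finitely many $\rho$-independent smooth tensors by finitely many such operations, so membership in $\mathcal{B}$ follows by induction on the number of operations used, and then $T_\rho = R_\rho\circ\mathcal{P}\in\mathcal{B}$.

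\emph{Base case and the easy closures.} For a $\rho$-independent smooth $T$, plugging $n$ into any subset of slots yields a smooth tensor field on the compact manifold $H$; restricting the remaining slots to the bundle of $n^*$-horizontal vectors normalized by $\sqrt{g(\cdot,\cdot)}$, which is compact by the argument used in Lemma \ref{iac}, gives a uniform bound, so $T$ is polynomially bounded with a \emph{constant} polynomial. Closure under sums is immediate ($p_{S+T}=p_S+p_T$); closure under tensor products follows by distributing the argument slots of $S_\rho\otimes T_\rho$ between the two factors and applying the inductive bounds separately, giving the polynomial $p_S(\rho)p_T(\rho)$; closure under the average $S_\rho\mapsto\int_0^\rho S_r\,\dd r$ follows by pulling the absolute value inside the integral, the fixed vectors $X_i$ being unaffected by the $r$-integration, so the new polynomial is $\int_0^\rho p_S(r)\,\dd r$, of degree one higher.

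The crux is closure under pullback. Here I would use the splitting (\ref{rgf}) applied to the transported arguments: for $X_i$ that are $n^*$-horizontal write $\dd\varphi_\rho(X_i)=(\dd\varphi_\rho X_i)^\perp+\lambda_i\, n$ with $\lambda_i=n^*(\dd\varphi_\rho X_i)$, while a slot equal to $n$ satisfies $\dd\varphi_\rho(n)=n$ with no correction. Two facts control the two parts. By Lemma \ref{hnw} the flow preserves $\tilde g$, so $\sqrt{g((\dd\varphi_\rho X_i)^\perp,(\dd\varphi_\rho X_i)^\perp)}=\sqrt{g(\dd\varphi_\rho X_i,\dd\varphi_\rho X_i)}=\sqrt{g(X_i,X_i)}$, the $\lambda_i n$ term contributing nothing to $g$ since $g(n,\cdot)=0$ on $TH$; and by Lemma \ref{nwx} the vertical coefficient is affinely bounded, $|\lambda_i|\le K\rho\sqrt{g(X_i,X_i)}$ using $n^*(X_i)=0$. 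Expanding $(\varphi_\rho^*S_\rho)(X_1,\dots,X_k)=S_\rho(\dd\varphi_\rho X_1,\dots,\dd\varphi_\rho X_k)$ multilinearly, each resulting term has its chosen slots filled by $n^*$-horizontal vectors and the remaining slots filled by $n$, so the inductive hypothesis — precisely the \emph{mixed} $n$-and-horizontal bound built into the definition of polynomial boundedness — applies to $S_\rho$. Summing over the subsets of slots sent to $n$ yields the bound $p_S(\rho)\,(1+K\rho)^k\,\prod_i\sqrt{g(X_i,X_i)}$, again polynomial, of degree raised by $k$.

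The main obstacle is exactly this pullback step: it is the only place where the $\rho$-dependence genuinely grows, and that growth is controlled solely by the \emph{affine} bound of Lemma \ref{nwx} on the $n^*$-component of $\dd\varphi_\rho$, compounded through the multilinear expansion into a higher-degree polynomial. Once the base case and the three closure properties are in hand, every $R_\rho$, and hence after a harmless permutation every $T_\rho\in\mathcal{I}$, is polynomially bounded, which completes the proof.
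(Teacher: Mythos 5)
Your proposal is correct and follows essentially the same route as the paper's proof: constant polynomial for the $\rho$-independent tensors via compactness of the horizontal projective bundle, trivial closure under sums, products and averages, and the pullback step handled by the splitting $\varphi_{\rho*}(X_i)=\varphi_{\rho*}(X_i)^\perp+\lambda_i n$ together with the flow-invariance of $\tilde g$ (Lemma \ref{hnw}) and the affine bound of Lemma \ref{nwx}, yielding the same polynomial $(1+K\rho)^k p(\rho)$. No gaps to report.
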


\begin{proof}
The parameter-independent covariant tensors are polynomially bounded with constant polynomial, as it follows from the compactness of the projective bundle induced by the bundle of $n^*$-horizontal vectors.
We need only to show that (b)-(d) preserve the property of being polynomially bounded. This is trivial for (d), the new polynomial being $\int_0^s p(r)\dd r$. It is also trivial for tensor products or sums, that is for (b), the new polynomial being the product or sum of the original polynomials. As for (c), let $T_s$ be a $(0,k)$-tensor polynomially bounded, and let us study $\varphi_s^* T_s$. We need to split the push forward of the $n^*$-horizontal vector into $n^*$-horizontal and longitudinal parts $\varphi_s{}_* (X_i)=\varphi_s{}_* (X_i)^\perp+\lambda_i(s) n$ as in Eq.\ (\ref{rgf}) where $\lambda_i$ is linearly bounded  $\lambda_i \le K \sqrt{g(X_i,X_i)} s$. Recall also that
\[
g(\varphi_s{}_* (X_i)^\perp,\varphi_s{}_* (X_i)^\perp)=g(\varphi_\rho{}_* (X_i),\varphi_s{}_* (X_i))=g(X_i,X_i).
 \]
 As a consequence, the new polynomial is $(1+Ks)^k p(s)$.
$\square$ \end{proof}

%
%
%
%
%

\begin{lemma}
 Assume $H$ admits a future incomplete generator.  If $\Lambda\in C^k$, $k\ge 1$ and $\nabla^k\Lambda$ can be written in the form
 \[
\nabla^k\Lambda (X_1,X_2, \cdots, X_k)= \int_0^\infty   T_\rho^{(k)}(X_1,X_2, \cdots, X_k) \, e^{ \int_{\varphi(p, [0,\rho])} \omega} \dd \rho
 \]
where $T^{(k)}_\rho \in \mathcal{I}$ is a $(0,k)$-tensor field, then  $\Lambda\in C^{k+1}$ and the previous equation in display holds also for $k\to k+1$ for some $T_\rho^{(k+1)}\in \mathcal{I}$.
\end{lemma}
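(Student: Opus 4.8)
The plan is to prove the inductive step by differentiating the integral expression for $\nabla^k\Lambda$ under the integral sign, checking that the differentiated integrand still has the required structural form $T^{(k+1)}_\rho \in \mathcal I$ and that it remains integrable so that the differentiation is legitimate and produces a continuous $(k+1)$-th derivative. First I would compute $\nabla_{X_{k+1}} \bigl[\nabla^k\Lambda(X_1,\dots,X_k)\bigr]$, where all the $X_i$ are the $n$-invariant extensions (pushforwards by $\varphi_\rho$) used throughout the paper, so that $[X_i,n]=0$. Differentiating the displayed integral means applying $\nabla_{X_{k+1}}$ to the product $T^{(k)}_\rho(X_1,\dots,X_k)\, e^{\int_0^\rho \kappa\,\dd s}$. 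The exponential factor produces, upon differentiation in the direction $X_{k+1}$, a factor $\int_0^\rho (L_{X_{k+1}}\omega)$ coming from varying the generator; using Lemma \ref{loq} together with $\dd\omega(n,\cdot)=0$ (Lemma \ref{viw}), this reduces to the difference $\omega(X_{k+1}(\rho))-\omega(X_{k+1}(0))$, exactly as in the $C^1$ and $C^2$ computations. The derivative of $T^{(k)}_\rho$ itself produces a covariant-derivative term, and then one must subtract the $\nabla^k\Lambda$ evaluated on $\nabla_{X_{k+1}}X_i$ terms to convert $\nabla_{X_{k+1}}\nabla^k\Lambda$ into the genuine covariant Hessian $\nabla^{k+1}\Lambda$.

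The central structural claim is that every term so produced is again of the form (tensor in $\mathcal I$) times the exponential $e^{\int_0^\rho\kappa}$. For this I would check that each differentiation operation keeps us inside $\mathcal I$: the closure of $\mathcal I$ under tensor products and sums (property (b)) absorbs the Leibniz expansion; the factor $\omega(X_{k+1}(\rho))-\omega(X_{k+1}(0))$ is handled by noting $\omega(X_{k+1}(\rho))=(\varphi_\rho^*\omega)(X_{k+1})$, which lies in $\mathcal I$ by the pullback-invariance (c), while $\omega(X_{k+1}(0))$ is parameter-independent and hence in $\mathcal I$ by (a). The apparently problematic averaging term, where the derivative of the exponential is integrated against $T^{(k)}_r$ over $[0,\rho]$, is exactly the source of the average operation (d) in the definition of $\mathcal I$; this is why $\mathcal I$ was designed to be closed under $S_\rho\mapsto\int_0^\rho S_r\,\dd r$. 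The covariant-derivative term $\nabla T^{(k)}_\rho$ reduces to the same kinds of building blocks because $\nabla$ applied to pullbacks and averages can be pushed through using that $\nabla$ restricted to $H$ is the intrinsic connection (total geodesy) and that $L_n\nabla\vert_{TH}$ is controlled by $\eta$, itself built from $\mu$ and $\nabla\omega$, which are $\rho$-independent smooth tensors in $\mathcal I$.

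Once the integrand is identified as $T^{(k+1)}_\rho(X_1,\dots,X_{k+1})\,e^{\int_0^\rho\kappa}$ with $T^{(k+1)}_\rho\in\mathcal I$, the remaining work is to justify differentiation under the integral sign and the continuity of the result. Here I would invoke the previous lemma: every element of $\mathcal I$ is polynomially bounded, so $\vert T^{(k+1)}_\rho(X_1,\dots,X_{k+1})\vert \le p(\rho)\prod\sqrt{g(X_i,X_i)}$ for some polynomial $p$. Combined with the fact (available since $H$ admits a future incomplete generator, via Corollary \ref{cbv} and the normalization allowing $\kappa\le K<0$) that $e^{\int_0^\rho\kappa}\le e^{K\rho}$ decays exponentially, the integrand is dominated by $p(\rho)e^{K\rho}$, which is integrable and uniformly so in a neighborhood of the base point $p$. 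The dominated convergence theorem then licenses both the interchange of $\nabla_{X_{k+1}}$ with $\int_0^\infty$ and the continuity of the resulting expression in $p$, giving $\Lambda\in C^{k+1}$ with $\nabla^{k+1}\Lambda$ of the asserted form.

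The main obstacle, in my estimation, is the bookkeeping in the second paragraph: verifying \emph{rigorously} that the covariant derivative of a general element of $\mathcal I$, together with the correction terms converting $\nabla_{X_{k+1}}\nabla^k\Lambda$ into $\nabla^{k+1}\Lambda$, stays inside $\mathcal I$. The subtle point is that $\nabla$ does not obviously commute with the pullback operation $\varphi_\rho^*$ defining $\mathcal I$; one must show that the failure of commutation is itself expressible through averages of pullbacks of smooth $\rho$-independent tensors (ultimately governed by $L_n\nabla$, i.e.\ by $\eta$ and $\mu$), so that no genuinely new type of object escapes the family. This is the step where the precise closure axioms (a)--(d) must be matched term-by-term against what differentiation actually generates, and it is where the proof could silently break if $\mathcal I$ were not closed under the right operations.
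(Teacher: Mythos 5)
Your proposal is correct and follows essentially the same route as the paper: the same decomposition of $\nabla^{k+1}\Lambda$ into the directional derivative of the integral plus the $\nabla_X X_i$ correction terms, the same reduction of the exponential's derivative to $[\varphi_\rho^*\omega-\omega](X)$ via Cartan's formula and $\dd\omega(n,\cdot)=0$, and the same resolution of the key difficulty---the non-commutation of $\nabla$ with the pushforwards inside $\mathcal{I}$---through the identity $\nabla_{Y^\rho}X^\rho-(\nabla_Y X)^\rho=\bigl[\int_0^\rho\eta(X(s),Y(s))\dd s\bigr]n$, followed by polynomial boundedness and dominated convergence. The obstacle you flag at the end is precisely the step the paper handles with that identity, so your outline matches the published argument.
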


\begin{proof}

We are going to compute
\begin{equation} \label{qst}
\begin{split}
\nabla^{k+1}\Lambda(X,X_1,X_2, \cdots, X_k)=&\,\p_X [\nabla^k\Lambda (X_1,X_2, \cdots, X_k)]\\ &-\sum_{i} \nabla^k\Lambda (X_1, \cdots, \nabla_X X_i, \cdots, X_k).
\end{split}
\end{equation}
Since $\Lambda \in C^k$ the  latter term on the right-hand side is not problematic. It will be absorbed in terms of the type $\nabla S$ to be introduced below.

The critical term is the former one on the right-hand side. Here we need to show that we can switch differential and integral operators, and this is done by showing that the integral argument obtained by proceeding naively, i.e.\ by operating the switch, is really continuous and integrable, i.e.\ by using the dominated convergence theorem (this theorem is used to show that the limit of the incremental ratio can pass from outside to inside the integral).

We have that (here $L_X$ has to be understood as the Lie derivative with respect to the  vector field   $s\mapsto X(s)$ over the parametrized generator  obtained pushing forward $X=X(0)$ with the flow of $n$, that is, locally it satisfies $L_Xn=0$)
\begin{align*}
\p_X [\nabla^k\Lambda (X_1,X_2, \cdots, X_k)]=&\int_0^\infty   [\p_X ( T_\rho^{(k)}(X_1,X_2, \cdots, X_k))\\
&+ T_\rho^{(k)}(X_1,X_2, \cdots, X_k)\int_0^\rho L_X \omega] \, e^{ \int_{\varphi(p, [0,\rho])} \omega} \dd \rho\\
=&\int_0^\infty   [\p_X  (T_\rho^{(k)}(X_1,X_2, \cdots, X_k))\\
&+ T_\rho^{(k)}(X_1,X_2, \cdots, X_k) [\varphi_\rho^*\omega-\omega](X)] \, e^{ \int_{\varphi(p, [0,\rho])} \omega} \dd \rho
\end{align*}

Now observe that the last term belongs to $\mathcal{I}$ so we can just consider the first term.

Taking into account that any element $T_\rho^{(k)}$ of $\mathcal{I}$ is constructed via some finite number of operations (b)-(d) from parameter independent tensors, we emphasize in the next expressions the contribution of one such tensor $S$. Of course, there will be the contributions of all the smooth  $\rho$-independent  covariant tensors entering the expression for $T_\rho^{(k)}$, but each of them is treated analogously, so they are not displayed in the next expressions.

In the expression for $\p_X  (T_\rho^{(k)}(X_1,X_2, \cdots, X_k))$ the derivative is going to distribute over each of the parameter-independent tensors. When we evaluate $T_\rho^{(k)}$ on the vectors $(X_1, \cdots, X_k)$, some vectors $X_p,\dots$, apply to the entries of $S$, but due to the pullbacks $\varphi^*_{r}$ that enter the expression of  $T_\rho^{(k)}$ before $S$, these vectors are really pushed forward by some cumulative parameter $s$ before being evaluated on $S$. In other words $S$ will contribute to $\p_X  (T_\rho^{(k)}(X_1,X_2, \cdots, X_k))$ with a term of the form
\[
\cdots\p_{\varphi_s{}_* (X)} [S(\varphi_s{}_* (X_p), \cdots)]\cdots.
\]
This  derivative has to be converted into a covariant derivative by using the last terms in Eq.\ (\ref{qst}). Those involving $S$ have the form
\[
\cdots-S(\varphi_s{}_* (\nabla_{X}X_p), \cdots)\cdots.
\]
We are going to face here the usual difficulty that the push forward of covariant derivatives is not the covariant derivative of the push forwards.
Taking into account the difference provided by Eq.\ (\ref{vqq}), the sum of the previous two expressions in display gives
\begin{align*}
&\cdots\nabla S(\varphi_s{}_* (X),\varphi_s{}_* (X_p), \cdots)+S( \nabla_{\varphi_s{}_*(X)}\varphi_s{}_*(X_p) -\varphi_s{}_*(\nabla_XX_p),\cdots)+\cdots\\
&=\cdots\nabla S(\varphi_s{}_* (X),\varphi_s{}_* (X_p), \cdots)+S( n,\cdots)\int_0^s \dd t \varphi_t^* \eta(X,X_p)+\cdots
\end{align*}

These terms are of type $\mathcal{I}$ and thus polynomially bounded.

Therefore the claim is proved and the dominant convergence theorem can indeed be applied as any polynomial multiplied by the exponential $e^{ \int_{\varphi(p, [0,\rho])} \omega} =e^{ \int_0^\rho \kappa(x(s))\dd s}$ is integrable (remember that there is some $\epsilon>0$ such that $\kappa<-\epsilon$ on $H$).
$\square$ \end{proof}

\begin{corollary}
Assume $H$ admits a future incomplete generator. Function $\Lambda$ is smooth.
\end{corollary}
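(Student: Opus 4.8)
The plan is to obtain smoothness as the terminal output of the inductive Lemma stated just above, so the only genuine task is to install its hypothesis at the base level $k=1$ and then iterate. First I would recall that, because $H$ admits a future incomplete generator, Corollary~\ref{cbv} together with Theorem~\ref{bhq} lets us fix the smooth field $n$ so that $\kappa$ is bounded above by a negative constant; this is precisely the regime in which the dominated convergence arguments of the inductive Lemma apply, since $e^{\int_0^\rho \kappa(x(s))\dd s}$ then decays exponentially and kills any polynomial factor. In this gauge $\Lambda$ is already known to be $C^1$, with the explicit representation (\ref{vob})
\[
\p_X \Lambda (p)=\int_0^\infty [\, \omega(X(\rho))-\omega(X(0))\,] \,   e^{ \int_{\varphi(p, [0,\rho])} \omega} \dd \rho .
\]

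Next I would check that this $C^1$ formula already has the template form demanded by the inductive Lemma, namely that the integrand factor $T^{(1)}_\rho(X):=\omega(X(\rho))-\omega(X(0))$ belongs to the family $\mathcal{I}$. Writing $\omega(X(\rho))=(\varphi_\rho^*\omega)(X)$ and $\omega(X(0))=\omega(X)$, we have $T^{(1)}_\rho=\varphi_\rho^*\omega-\omega$. Here $\omega$ (and likewise $-\omega$) is a smooth $\rho$-independent $1$-form, hence lies in $\mathcal{I}$ by property~(a); its pullback $\varphi_\rho^*\omega$ lies in $\mathcal{I}$ by the pullback-closure property~(c); and the difference lies in $\mathcal{I}$ by the sum-closure property~(b). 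Thus $T^{(1)}_\rho\in\mathcal{I}$, so the hypothesis of the inductive Lemma holds for $k=1$.

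With the base case in place, I would simply run the induction: the inductive Lemma promotes $\Lambda\in C^{k}$ to $\Lambda\in C^{k+1}$ and reproduces an integral representation of $\nabla^{k+1}\Lambda$ of exactly the same type, with a new integrand tensor in $\mathcal{I}$. Hence by induction $\Lambda\in C^{k}$ for every $k\ge 1$, i.e.\ $\Lambda\in C^\infty$. I do not expect any serious obstacle at this final stage: all the analytically delicate work---justifying the exchange of differentiation and integration by dominated convergence, and guaranteeing that the iterated integrands stay polynomially bounded so that they are integrable against the decaying exponential---has already been absorbed into the closure properties of $\mathcal{I}$ and into the inductive Lemma itself. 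The one point that must be verified by hand is precisely the base-case membership $T^{(1)}_\rho\in\mathcal{I}$, which, as above, is immediate from the defining closure operations of $\mathcal{I}$.
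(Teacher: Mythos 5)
Your proposal is correct and follows essentially the same route as the paper: the paper's own proof of this corollary is exactly the one-line observation that the inductive lemma supplies the induction step while the integral expression for $\dd\Lambda$ supplies the base case. Your explicit verification that $T^{(1)}_\rho=\varphi_\rho^*\omega-\omega\in\mathcal{I}$ via the closure properties (a), (b), (c) is a correct and slightly more detailed rendering of what the paper leaves implicit.
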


\begin{proof}
The induction step is proved by the previous lemma. The starting point of the induction is justified by the expression of $\Lambda$ (or by that of its differential).
$\square$ \end{proof}

\section{General properties in the non-vacuum case} \label{nnn}

In this work by ``non-vacuum'' we mean ``not necessarily Ricci flat''. Of course, the Ricci flat condition is included in our study.
We recall that $H$ and the spacetime satisfy conditions ($\star$) and ($\star\star$) introduced previously.

We also recall that in the next result $\eta$ controls the Lie derivative of the connection, Eq.\ (\ref{lie}).

\begin{theorem} \label{ppq}
Assume  $H$  admits a future incomplete generator.  The smooth vector field $n$ can be chosen such that $\kappa=-1$ and $\Lambda=1$, in which case $L_n\omega=0$. Moreover,
\begin{align} \label{cqy}
\eta=B=&-\frac{1}{2}\int_0^\infty (\varphi^*_\rho \textrm{Ric}-\textrm{Ric})\, e^{-\rho} \dd \rho=-\frac{1}{2}\int_0^\infty \varphi^*_\rho (L_n\textrm{Ric})\, e^{-\rho} \dd \rho .
\end{align}
\end{theorem}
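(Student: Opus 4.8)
The plan is to fix the gauge exactly as in the proof of Theorem \ref{puy}, read off the identity $\eta=B$ directly from the Hessian equation, and then rewrite $B$ through the Ricci tensor by means of Lemma \ref{drk} and an integration by parts in the flow parameter.

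First I would produce the privileged section. Since $H$ admits a future incomplete generator, Corollary \ref{cbv} guarantees that \emph{every} generator is future incomplete, so the affine length $\Lambda$ is finite, and by the result of the previous section it is smooth. Starting from any smooth $n$ with $\kappa<0$ (Theorem \ref{bhq}) I set $e^{f}=\Lambda$ and replace $n$ by $n'=e^{f}n$. By Eq.\ (\ref{ok4}) the new affine length is $\Lambda'=e^{-f}\Lambda=1$, hence constant, so $\p_n\Lambda'=0$ and $\dd\Lambda'=0$; then Eq.\ (\ref{biw}), namely $1+\kappa'\Lambda'+\p_n\Lambda'=0$, forces $\kappa'=-1$. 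Renaming this field $n$, Lemma \ref{loq} yields $L_n\omega=\dd\kappa=0$, which settles the first assertions.

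Next I would substitute $\Lambda\equiv 1$ into the Hessian identity (\ref{kkv}) of Theorem \ref{bjz}. Because $\textrm{Hess}\,\Lambda=0$ and $\dd\Lambda=0$, the entire left-hand side collapses to $\Lambda\,\eta=\eta$, so the identity reduces at once to $\eta=B$. It then remains only to massage the integral formula (\ref{bcw}) for $B$. With $\kappa=-1$ the weight simplifies, since $\int_{\varphi(p,[0,\rho])}\omega=\int_0^{\rho}\kappa\,\dd s=-\rho$, so the exponential becomes $e^{-\rho}$. Using Lemma \ref{drk}, $L_n\mu=\tfrac12 L_n\textrm{Ric}\vert_{TH}$, together with the fundamental theorem of calculus along the flow, $\varphi^*_\rho\mu-\mu=\int_0^{\rho}\varphi^*_r(L_n\mu)\,\dd r=\tfrac12\int_0^{\rho}\varphi^*_r(L_n\textrm{Ric})\,\dd r=\tfrac12(\varphi^*_\rho\textrm{Ric}-\textrm{Ric})\vert_{TH}$. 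Substituting this into (\ref{bcw}) and using that $\textrm{Ric}$ and its pullback are symmetric to absorb the index swap $(Y,X)\mapsto(X,Y)$ gives the first displayed formula $B=-\tfrac12\int_0^\infty(\varphi^*_\rho\textrm{Ric}-\textrm{Ric})\,e^{-\rho}\,\dd\rho$.

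Finally I would pass to the second formula by integrating by parts in $\rho$. Writing $A_\rho:=\varphi^*_\rho\textrm{Ric}-\textrm{Ric}$, one has $A_0=0$ and $\tfrac{\dd}{\dd\rho}A_\rho=\varphi^*_\rho(L_n\textrm{Ric})$, so $\int_0^\infty A_\rho e^{-\rho}\,\dd\rho=\bigl[-A_\rho e^{-\rho}\bigr]_0^\infty+\int_0^\infty\varphi^*_\rho(L_n\textrm{Ric})\,e^{-\rho}\,\dd\rho$, which yields the claimed expression once the boundary term is shown to vanish. The step I expect to be the only real (and minor) obstacle is precisely the vanishing of the boundary term at $\rho\to\infty$, together with the legitimacy of differentiating under the integral sign: both follow because $\varphi^*_\rho\textrm{Ric}$ lies in the class $\mathcal I$ and is therefore polynomially bounded on $n^*$-horizontal vectors, so $A_\rho e^{-\rho}\to 0$ and the dominated convergence theorem applies. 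This completes the chain $\eta=B=-\tfrac12\int_0^\infty(\varphi^*_\rho\textrm{Ric}-\textrm{Ric})\,e^{-\rho}\,\dd\rho=-\tfrac12\int_0^\infty\varphi^*_\rho(L_n\textrm{Ric})\,e^{-\rho}\,\dd\rho$.
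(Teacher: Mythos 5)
Your proposal is correct and follows essentially the same route as the paper's (very terse) proof: fix the gauge via $e^{f}=\Lambda$ and Eq.\ (\ref{biw}), read $\eta=B$ off the Hessian identity (\ref{kkv}) with $\Lambda\equiv 1$, and convert $\varphi^*_\rho\mu-\mu$ to Ricci terms via Lemma \ref{drk} together with the fundamental theorem of calculus along the flow and an integration by parts, the boundary term being controlled by polynomial boundedness. You have simply supplied the details the paper leaves implicit.
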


\begin{proof}
It is sufficient to start from any smooth $n$ and choose $f$ in Eqs.\ (\ref{ok4})-(\ref{ok5}) so that $e^f=\Lambda$, while using Eq.\ (\ref{biw}). The equation $L_n\omega=0$ follows from Lemma \ref{loq}. The last equation follows from Lemma \ref{drk}, from Eqs.\ (\ref{kkv})-(\ref{bcw}) through integration by parts.
$\square$ \end{proof}

We stress that in the following results we are not assuming $\kappa$ constant unless otherwise specified.

\begin{definition} \label{der}
The following properties are equivalent. If they hold we say that $H$ is {\em future non-degenerate}, and similarly in the past version. If they do not hold in any time orientation, we say that $H$ is {\em degenerate}.
\begin{enumerate}
\item $H$ admits a future incomplete generator (and hence every generator is future incomplete),
\item The smooth vector field $n$ can be chosen such that $\kappa<0$ over $H$,
\item The smooth vector field $n$ can be chosen such that $\kappa=-1$ over $H$,
\item For some choice of smooth vector field $n$ (and hence for  every choice) there is a generator over which $\int_0^\infty \kappa(s)\dd s=-\infty$ (and hence the same is true for every generator).
\end{enumerate}
\end{definition}

\begin{proof}[of the equivalence]
Observe that 1 $\Rightarrow$ ``2 and 3'' follows from Theorem   \ref{ppq}, while ``2 or 3'' $\Rightarrow$ 1 follows from the continuity of $\kappa$ and compactness of $H$, which implies $\kappa<-\epsilon<0$ over $H$, so that  the integral (\ref{bud}) converges. We have shown the equivalence of 1, 2 and 3.

Property 4 does not depend on the choice of smooth vector field $n$ since under a gauge change $n'=e^f n$, $\kappa'=e^f(\kappa + \p_n f)$ and $\dd s'=e^{-f} \dd s$, thus over every segment $\eta$ of a generator with endpoints $p,q\in H$, $\int_\eta \kappa'\dd s'=\int_\eta \kappa \dd s+\Delta f$, where $\Delta f=f(q)-f(p)$ is bounded, as $f$ is bounded.

Assume 4 then by Prop.\
\ref{vos} the same property holds for every generator, and by Thm.\ \ref{bhq} property 2 holds. For the converse, clearly under 3 property 4 holds.
$\square$ \end{proof}

\begin{proposition}
For a future non-degenerate $H$ all generators are complete in the past direction (and incomplete in the future direction). For a degenerate $H$ all generators are complete (in both directions).
\end{proposition}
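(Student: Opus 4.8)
The plan is to use the characterization of future non-degeneracy from Definition \ref{der} together with the affine-length formula of Lemma \ref{cpx}, and then derive past completeness by a symmetry/gauge argument. First I would fix the direction statement: a future non-degenerate $H$ admits (by the equivalence 1 $\Leftrightarrow$ 3) a smooth choice of $n$ with $\kappa=-1$. Along any generator the affine length in the future direction is then $\Lambda=\int_0^\infty e^{-\rho}\,\dd\rho=1<\infty$ by Lemma \ref{cpx}, so every generator is future incomplete, which is merely a restatement of property 1. The real content is the \emph{past} completeness.

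For the past direction I would consider the time-reversed picture. A generator of $H$ run backwards is an integral curve of $-n$; its past affine length equals the future affine length computed with respect to the reversed vector field. Writing $\bar n=-n$ and letting $\bar\kappa$ be the associated surface gravity, a direct computation from $\nabla_{\bar n}\bar n=\bar\kappa\,\bar n$ gives $\bar\kappa=-\kappa$, so with the gauge $\kappa=-1$ one has $\bar\kappa=+1$. Feeding $\bar\kappa\equiv 1>0$ into the affine-length integral (\ref{bud}) gives $\int_0^\infty e^{+\rho}\,\dd\rho=+\infty$, i.e.\ infinite past affine length, so every generator is past complete. Equivalently, in the language of Definition \ref{der}, property 4 cannot hold in the past orientation because $\int_0^\infty \bar\kappa(s)\,\dd s=+\infty\neq-\infty$; hence $H$ is not past non-degenerate, and by the equivalence its generators are past complete.

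For the degenerate case, the hypothesis is precisely that property 4 (hence 1--3) fails in \emph{both} time orientations. By the negation of item 4 in Definition \ref{der}, for every generator $\int_0^\infty\kappa(s)\,\dd s\neq-\infty$ in the future orientation and likewise $\int_0^\infty\bar\kappa(s)\,\dd s\neq-\infty$ in the past; since these two integrals differ only by sign, $\int_0^\infty\kappa(s)\,\dd s$ is neither $+\infty$ nor $-\infty$. I would then invoke Lemma \ref{iif} in contrapositive form: if the future affine length $\int_0^\infty e^{\int_0^t\kappa}\,\dd t$ were finite, Lemma \ref{iif} (with $f=\kappa$, which is bounded by compactness and continuity of $\kappa$) would force $\int_0^\infty\kappa=-\infty$, contradicting non-degeneracy. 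Hence the future affine length is infinite, and by the symmetric argument with $\bar n$ the past affine length is infinite as well, so every generator is complete in both directions.

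The main obstacle is the careful bookkeeping of the time-reversal: one must verify that reparametrizing a generator by $-n$ genuinely flips the sign of $\kappa$ and that the affine-length integral (\ref{bud}) transforms correctly, so that past completeness for $n$ really is future completeness for $\bar n$. Once this reduction is clean, both statements follow immediately from the already-established equivalences in Definition \ref{der} and from Lemma \ref{iif}; the only quantitative input is the boundedness of $\kappa$ on the compact $H$, which guarantees the hypotheses of Lemma \ref{iif} are met.
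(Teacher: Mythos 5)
Your proof is correct and follows essentially the same route as the paper: future incompleteness is just property 1 of Def.\ \ref{der} (already established in Corollary \ref{cbv}), past completeness comes from the time reversal $\bar n=-n$ with $\bar\kappa=-\kappa=+1$ and the divergence of the affine-length integral (\ref{bud}), and the degenerate case reduces to the failure of the non-degeneracy properties in both orientations. One small inaccuracy: $\int_0^\infty\bar\kappa\,\dd \bar s$ and $\int_0^\infty\kappa\,\dd s$ do \emph{not} differ only by sign --- they are taken over the past and future halves of the generator, respectively --- but this claim is never actually needed, since the contrapositive of Lemma \ref{iif} only requires $\int_0^\infty\kappa\,\dd s\neq-\infty$ (resp.\ $\int_0^\infty\bar\kappa\,\dd\bar s\neq-\infty$) in each orientation separately; the paper's degenerate case is even shorter, simply negating property 1 in both orientations, which directly states that all generators are complete both ways.
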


\begin{proof}
In Corollary \ref{cbv} we already proved the statement in parenthesis.
Let $n$ be such that $\kappa=-1$, then for $-n$ we have $\kappa=1$. Observe that the integral curves for $-n$ are past-directed, thus the claim follows from Eq.\ (\ref{bud}). The last statement follows from the fact that 1 in Def.\ \ref{der} does not hold  neither in the future nor in the past cases.
$\square$ \end{proof}

The (embedded) $C^2$ null hypersurfaces  are known to be locally achronal \cite{galloway00} \cite[Thm.\ 6.7]{minguzzi18b}, thus for them we can infer the existence of a neighborhood in which they are achronal accordingly to the next more general result
\begin{proposition} \label{ptg}
Let $N$ be a locally achronal topological co-dimension one submanifold which is closed in the topology of $M$, then there is an open neighborhood $U\supset N$ such that $N$ is achronal in $U$.
\end{proposition}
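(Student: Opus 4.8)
The plan is to exploit the local two-sidedness that a locally achronal hypersurface inherits from the time orientation of $M$, and then to glue the local future and past sides into a single coherent splitting of a thin tube $U$ around $N$. First I would cover $N$ by convex normal neighborhoods. For each $p\in N$ pick, using local achronality, a relatively compact convex normal neighborhood $V_p$ (these form a basis, see \cite{minguzzi18b}) such that $A_p:=N\cap V_p$ is achronal in $V_p$. Since $N$ is a closed topological hypersurface, $A_p$ is a closed boundaryless separating hypersurface of $V_p$; after shrinking $V_p$ to a coordinate cylinder adapted to $A_p$ I may assume $V_p\setminus A_p$ has exactly two connected components and that $V_p=V_p^+\sqcup A_p\sqcup V_p^-$, where the local sides are $V_p^+:=I^+(A_p;V_p)$ and $V_p^-:=I^-(A_p;V_p)$. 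Achronality of $A_p$ in $V_p$ gives $V_p^+\cap V_p^-=\emptyset$ and $(V_p^+\cup V_p^-)\cap A_p=\emptyset$: any $x$ in such an intersection would, by convexity of $V_p$, yield two distinct points of $A_p$ joined by a timelike curve inside $V_p$, contradicting local achronality.

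The engine of the argument is a local monotonicity statement. Within the convex set $V_p$ the chronological relation is governed by the exponential map, so a future-directed causal curve that reaches $A_p\cup V_p^+$ must lie in $V_p^+$ at every strictly later parameter, and can cross $A_p$ only in the sense $V_p^-\to A_p\to V_p^+$. I expect this to be the technical core of the local analysis, and convexity of $V_p$ is what makes it work.

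Next comes coherence of the labelling and the construction of $U$. Because $M$ is time-oriented, the attribute \emph{future} is globally meaningful, so the co-orientation of $N$ determined by the $V_p^\pm$ carries no Möbius-type obstruction: there is a canonical way to call one side future. I would then take a locally finite refinement $\{W_p\}$ with $\overline{W_p}\subset V_p$, set $U^{\pm}:=\bigcup_p (W_p\cap V_p^{\pm})$ and $U:=U^+\cup N\cup U^-$, which is an open neighborhood of $N$. The main obstacle is to prove that $U^+\cap U^-=\emptyset$; this is exactly what a sufficiently thin tube must guarantee, and it is the heart of the proof. I would establish it by contradiction: a point $x\in U^+\cap U^-$ produces $y,z\in N$ with $y\ll x\ll z$ along timelike curves lying in $U$, hence a timelike curve in $U$ joining two points of $N$; tracking its crossings of $N$ and applying the local monotonicity inside the finitely many charts $W_p$ meeting a compact segment of this curve forces a violation of achronality in some single $V_p$. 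Non-compactness of $N$ is absorbed entirely by the local finiteness of $\{W_p\}$.

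Finally I would conclude that $N$ is achronal in $U$. Given a future-directed timelike curve $\gamma$ in $U$ with $\gamma(0),\gamma(1)\in N$, the splitting $U=U^+\sqcup N\sqcup U^-$ together with local monotonicity forces $\gamma$ to enter $U^+$ immediately after parameter $0$ and to remain there, so it can never return to $N$ at parameter $1$, a contradiction. Hence no two points of $N$ are chronologically related within $U$, which is the assertion of the proposition; applied to a $C^2$ null hypersurface it yields the promised neighborhood in which such a hypersurface is achronal.
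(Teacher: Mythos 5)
Your overall strategy (split a tube around $N$ into a future side and a past side, then run a monotonicity argument for timelike curves crossing $N$) is the same as the paper's, and your final step is correct \emph{granted} the splitting $U=U^+\sqcup N\sqcup U^-$. The genuine gap sits exactly where you locate ``the heart of the proof'': the disjointness $U^+\cap U^-=\emptyset$ is not established, and the argument you sketch for it is circular. From $x\in (W_p\cap V_p^+)\cap(W_q\cap V_q^-)$ you obtain $y\in A_p$ and $z\in A_q$ with $y\ll x$ realized by a timelike curve in $V_p$ and $x\ll z$ by one in $V_q$; these curves live in $V_p\cup V_q$, not in $U$, and, more importantly, the concatenated curve only violates \emph{local} achronality if it can be confined to a single chart $V_r$. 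Nothing in your construction forces that: the curve leaves $N$ on the locally-future side of the plaque $A_p$ and returns to $N$ on the locally-past side of the possibly disjoint plaque $A_q$ (think of two nearby parallel sheets of $N$), and ``tracking its crossings'' yields no contradiction unless the labels $+$ and $-$ are already globally coherent across overlapping charts --- which is precisely the disjointness you are trying to prove. A star/double refinement of the cover, arranged so that $W_p\cap W_q\neq\emptyset$ implies $V_p\cup V_q\subset V_r$ for some $r$, would close this loop, but you do not impose it.

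The paper avoids the issue by producing a genuine product tubular neighborhood: since $N$ is locally a Lipschitz graph over the flow lines of a global timelike field, it is two-sided, and a sufficiently small tube decomposes as $U=U_1\cup U_2$ with $U_1\cap U_2=N$, so that $U\setminus N$ is \emph{separated} by $N$. A timelike curve in $U$ with endpoints on $N$ and interior off $N$ then stays on one side, and the contradiction is obtained purely locally at the final endpoint, where the exponentiated past timelike cone lies on the other side. If you wish to keep your chart-by-chart definition of $U^{\pm}$, you must either construct the global side function first (e.g.\ via injectivity of the flow map $N\times(-\epsilon,\epsilon)\to M$ of a transverse timelike field) or strengthen the cover so that the offending curve is trapped in a single chart; as written, the step is missing.
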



\begin{proof}
Let $V$ be a global smooth timelike vector field, and let $x\in N$. By local achronality and the other assumptions on $N$, $x$ admits a neighborhood $W$ in which $N\cap W$ is an achronal boundary for the spacetime $W$.
There is a cylindrical coordinated neighborhood $C\subset W$, such that  $N\cap C$ coincides with the image of a locally Lipschitz graph $x^0=h({\bm x})$, with  $\p_0=V$ (cf.\ \cite{penrose72}\cite[Thm. \ 2.87]{minguzzi18b}).
 Hence, due to the time-orientation of the spacetime, $N$ is two-sided.
 This means that any sufficiently small tubular neighborhood $U$ can be written as the union of two `one-sided' neighborhoods $U_1,U_2$, $U_1\cap U_2=N$, where a global future-directed timelike vector field would point from $N$ towards $U_1$ (i.e.\ $U_1$ is on the local future side of $N$, while $U_2$ is on the local  past side). Suppose that there is a $C^1$ timelike curve $\gamma\subset U$ connecting two points of $N$. By shortening it if necessary, we can assume that it intersects $N$ just at the endpoints $p$ and $q$ (by local achronality $p\ne q$). As a consequence, $\gamma\backslash\{p,q\}$ is contained in either $U_1\backslash N$, or $U_2\backslash N$, as it is not possible to pass from one side of the neighborhood to the other without crossing $N$. Without loss of generality, let us assume the former possibility.
 Picking $r\in \gamma\backslash\{q\}$ in a convex neighborhood of $q$, we can replace the last piece of $\gamma$ connecting $r$ to $q$ with a timelike geodesic segment. Not all the timelike geodesic segment can be in $U_2$, otherwise $r\in N$, contradicting the definition of $q$. Thus, without loss of generality, we can find a piecewise $C^1$ timelike curve $\sigma$ in $U_1$, intersecting $N$ just at the endpoints $p,q'$, and having a last geodesic segment connecting $r \in U_1\backslash N$ to $q'$, $r\ne q'$.


 By the local achronality of $N$, $q'$ admits a convex neighborhood such that the exponential map of the past timelike cone at $q'$ on that neighborhood, does not intersect $N$, and hence is entirely contained in $U_2\backslash N$. This implies that $\sigma$ intersects $U_2\backslash N$, a contradiction.

 The contradiction proves that $N$ is achronal in $U$.
$\square$ \end{proof}

The next result shows that the surface gravity could have been introduced in a different way, which is indeed the original one in \cite{moncrief83}.
\begin{proposition} \label{bkq}
Let $T$ be a vector field defined in a neighborhood of $H$ such that  $g(n,T)=-\frac{1}{a}=const\ne 0$ on $H$ (hence transverse to $H$), and extend $n$ to a neighborhood of $H$ in such a way that $L_Tn=0$, then
\begin{equation} \label{yry}
\kappa=\frac{a}{2} \p_T g(n,n) \vert_H.
\end{equation}
\end{proposition}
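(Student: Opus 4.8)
The plan is to rewrite the right-hand side of (\ref{yry}) purely in terms of covariant derivatives and then exploit two facts separately: the hypothesis $L_T n = 0$, which lets me trade the transverse derivative $\nabla_T n$ for the longitudinal one $\nabla_n T$, and the constancy of $g(n,T)$ along the generators, which supplies the value of $g(n,\nabla_n T)$.

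First I would expand the right-hand side using metric compatibility of $\nabla$, obtaining $\partial_T g(n,n) = 2\,g(\nabla_T n, n)$. The hypothesis $L_T n = [T,n] = \nabla_T n - \nabla_n T = 0$ then allows me to replace $\nabla_T n$ by $\nabla_n T$, so that $\partial_T g(n,n) = 2\,g(\nabla_n T, n)$ on $H$. This is the step that converts a genuinely transverse derivative into something computable from data along the generators.

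Next I would extract $g(\nabla_n T, n)$ from the constancy condition. Since $n$ is tangent to $H$ and $g(n,T) = -1/a$ is constant on $H$, its derivative along $n$ vanishes on $H$, giving $0 = \partial_n g(n,T) = g(\nabla_n n, T) + g(n, \nabla_n T)$. Using the defining relation $\nabla_n n = \kappa n$ together with $g(n,T) = -1/a$, the first term equals $-\kappa/a$, whence $g(n, \nabla_n T) = \kappa/a$. Combining this with the previous display yields $\partial_T g(n,n) = 2\kappa/a$ on $H$, that is $\kappa = \frac{a}{2}\,\partial_T g(n,n)\vert_H$, as claimed.

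I do not expect a serious obstacle here; the only point requiring care is the bookkeeping about where each identity holds. The relation $\nabla_n n = \kappa n$ and the constancy of $g(n,T)$ are statements on $H$, so the derivatives $\partial_n$ are taken along $H$, which is legitimate precisely because $n$ is tangent to $H$ and its flow preserves $H$. By contrast, $\partial_T$ is a transverse derivative, which is why $T$ must be defined on a neighborhood and why the off-horizon extension of $n$ is fixed by $L_T n = 0$; it is exactly this extension condition that makes the transverse derivative reducible to intrinsic data on the generators.
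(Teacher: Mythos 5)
Your argument is correct and is exactly the paper's proof: expand $\p_T g(n,n)$ by metric compatibility, use $L_Tn=0$ to replace $\nabla_Tn$ by $\nabla_nT$, and then evaluate $g(n,\nabla_nT)$ from the constancy of $g(n,T)$ on $H$ together with $\nabla_nn=\kappa n$. Nothing to add.
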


\begin{proof}
We have
\[
\nabla_T g(n,n)=2 g(\nabla_T n, n)=2 g(\nabla_n T,n)=2[\p_n g(T,n)-g(T,\nabla_n n)]=\frac{2\kappa}{a}  .
\]
$\square$ \end{proof}

Typically $T$ will be  future-directed causal, hence $a>0$.

In the previous references results similar to the following ones, but under some achronality assumptions, were given a proof based on Gaussian null coordinates \cite{petersen18b,petersen19}. Our proof is topological and direct.
\begin{theorem} \label{vcw}
Let $T$ be a future-directed causal vector field transverse to $H$ such that $g(n,T)=cost$, and let $\psi_t$ be its flow.
If $\kappa<0$ all over $H$, then for sufficiently small $\vert t \vert$, $H_t:=\psi_t(H)$ is a compact timelike hypersurface for  $t>0$, and a compact spacelike hypersurface for $t<0$. Moreover, no two hypersurfaces in the family $\{H_t\}$ intersect.
\end{theorem}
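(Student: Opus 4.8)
The plan is to realize the family $\{H_t\}$ as the level sets of the tubular-neighbourhood coordinate built from the flow $\psi_t$, and to read off both the non-intersection and the causal character from this picture. First I would consider the smooth map $\Psi\colon(-t_0,t_0)\times H\to M$, $(t,p)\mapsto\psi_t(p)$. Since $T$ is transverse to $H$, at every $(0,p)$ the differential $d\Psi$ sends $\p_t$ to $T_p$ and restricts to the identity on $T_pH$, hence is an isomorphism onto $T_pM$ (the dimensions match, $\dim M=n+1$). Thus $\Psi$ is a local diffeomorphism along $\{0\}\times H$ and is injective on $\{0\}\times H$. By compactness of $H$ a standard argument then produces a $t_0>0$ for which $\Psi$ is an embedding: if injectivity failed for every $t_0$ one would extract converging sequences of distinct preimages whose limits agree on the zero section, contradicting local injectivity. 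The embedding property gives at once that each $H_t=\psi_t(H)$ is a compact embedded hypersurface (a diffeomorphic image of $H$) and that $H_t\cap H_s=\varnothing$ for $t\neq s$, which is the non-intersection claim.

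For the causal character I would use the vector field $n$ extended off $H$ by $L_Tn=0$, exactly as in Proposition~\ref{bkq}. This invariance means $(\psi_t)_*n=n$, so $n$ is everywhere tangent to $H_t$, and $g(n,n)|_{H_t}$ is a smooth function of $(t,p)$ vanishing at $t=0$. By Proposition~\ref{bkq} we have $\p_Tg(n,n)|_H=2\kappa/a$, and since $T$ is future-directed causal and $g(n,T)=-1/a$ is a negative constant we get $a>0$; together with $\kappa<0$ this yields
\[
\p_t\,g(n,n)\big|_{t=0}=\frac{2\kappa}{a}<0 .
\]
Hence for small $t>0$ one has $g(n,n)<0$: the hypersurface $H_t$ carries the timelike tangent $n$ and is therefore timelike. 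For $t<0$ one has $g(n,n)>0$, but positivity in the single direction $n$ does not by itself fix the signature, and this is the part that requires real work.

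The hard part will be showing that for small $t<0$ the induced metric on $H_t$ is \emph{positive definite}, not merely positive along $n$. Here I would fix $p\in H$, split $T_pH=\mathbb{R}n\oplus W$ with $W$ a spacelike complement (for instance the $n^*$-horizontal space) and choose a $g$-orthonormal basis $e_1,\dots,e_{n-1}$ of $W$, then examine the Gram matrix of the induced metric in the pushed-forward frame $\big(n,(\psi_t)_*e_1,\dots,(\psi_t)_*e_{n-1}\big)$ at $\psi_t(p)$. At $t=0$ this matrix is $\mathrm{diag}(0,I_{n-1})$, because $n$ lies in the kernel of $g|_{TH}$; for small $t$ the lower block stays positive definite, the off-diagonal entries are $O(t)$, and the $nn$-entry is $(2\kappa/a)t+O(t^2)$. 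Taking the Schur complement with respect to the positive-definite block, the remaining eigenvalue has the sign of $(2\kappa/a)t+O(t^2)$, i.e.\ the sign of $-t$. Thus the signature is $(+,\dots,+)$ for $t<0$ (spacelike) and $(-,+,\dots,+)$ for $t>0$ (timelike), recovering both causal characters. Compactness of $H$ enters twice more: it makes $\kappa\le K<0$ bounded away from zero and the $O(t^2)$ remainders uniformly controlled, so that a single $t_0$ works at every point and the spacelike complements can be chosen to vary continuously. This Schur-complement step is the only genuinely quantitative ingredient; everything else is the soft tubular-neighbourhood argument above, which is why the proof is purely topological apart from the one application of Proposition~\ref{bkq}.
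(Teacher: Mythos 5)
Your proof is correct, but it departs from the paper's argument at both of its main steps, so a comparison is worthwhile. For the non-intersection claim the paper does not use a tubular-neighbourhood argument at all: it invokes Prop.~\ref{ptg} to obtain a neighbourhood $U$ in which $H$ is achronal and notes that $\psi_{t'}(p)=\psi_t(q)$ with $t'\ne t$ would give $\psi_{t'-t}(p)=q$, i.e.\ a causal $T$-flow line in $U$ joining two points of $H$. Your local-diffeomorphism-plus-compactness argument is more elementary and self-contained (it never touches causality theory), at the price of the sequence-extraction lemma you sketch; both work. For the causal character, the $t>0$ case is essentially identical (the pushed-forward $n$ becomes timelike and a hyperplane containing a timelike vector must be Lorentzian), but for $t<0$ the paper proceeds differently: it splits $T_pH\setminus\{0\}$ into an open double cone $D_p$ around $\{n,-n\}$ on which $(L_Tg)(v,v)<-\epsilon h(v,v)$ for a background Riemannian metric $h$, and the complement $R_p$, which is disjoint from the causal cone and therefore stays spacelike under small perturbation; on $D_p$ it uses the expansion $g(d\psi_t v,d\psi_t v)=g(v,v)+t(L_Tg)(v,v)+o_v(t)$ with the uniform remainder bound $|o_v(t)|\le bt^2 h(v,v)$ to get positivity for $t<0$. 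Your Gram-matrix computation reaches the same conclusion by a more linear-algebraic route: since the off-diagonal entries vanish at $t=0$ (because $n$ lies in the kernel of $g|_{TH}$), the Schur complement is $(2\kappa/a)t+O(t^2)$, and inertia additivity delivers the signatures for both signs of $t$ in one stroke. The only points to keep tidy are those you already flag: the orthonormal frame need only be chosen locally (the sign of the Schur complement is frame-independent, so a finite cover of $H$ suffices), and the $O(t)$ and $O(t^2)$ constants must be uniform in $p$, which compactness and $\kappa\le K<0$ provide --- the same uniformity the paper needs for its remainder bound.
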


\begin{proof}
Let $p\in H$, we are going to study the causal character of $T_{\psi_t(p)} \psi_t(H)=\dd \psi_t (T_p H)$ for sufficiently small $\vert t\vert$, proving the claim at $p$. All the geometric objects constructed in what follows can be chosen so as to be continuous with $p$, then the result follows from  the compactness of $H$ by a standard argument.

Since $\kappa(p)<0$ (we do not need constancy of $\kappa$) we have from Eq.\ (\ref{yry}), recalling that $L_Tn=0$ and hence that $n$ is invariant under the flow, $g(\dd \psi_t(n),\dd \psi_t(n))<0$ for sufficiently small $t>0$. As the induced metric on $T_{\psi_t(p)} \psi_t(H)$ can only be of three types we conclude that the Lorentzian case applies.

For the case $t<0$, observe that $\p_Tg(n,n)=(L_T g)(n,n)$, thus by continuity there is an open double cone  $D_p\subset T_pH$ containing $\{n,-n\}$ such that the quadratic form $(L_Tg)$ is negative on $D_p$.
More precisely, introduced a Riemannian metric $h$ on $H$,  there is $\epsilon>0$ such that for every $v\in D_p$, $(L_T g)(v,v)<-\epsilon h(v,v)$.

Let $R_p=T_pH\backslash [D_p\cup \{0\}]$, then $R_p$ does not intersect $C_p\subset T_p M$, the causal cone at $p$. As a consequence for sufficiently small $\vert t\vert$, $\dd \psi_t (R_p)\cap C_{\psi_t(p)}=\emptyset$ which means that $\dd \psi_t (R_p)$ consists of spacelike vectors.

Since the zero vector is push forwarded to the zero vector, we need only to study $\dd \psi_t (R_p)$, as $T_pH=R_p\cup D_p\cup \{0\}$.
But regarding $g$ as a quadratic form,  $g\vert_{D_p}\ge 0$ and for $v\in D_p$
\[
\frac{\dd }{\dd t} g(\dd \psi_t(v),\dd \psi_t(v))\vert_{t=0}=(L_Tg)(v,v)< 0
\]
thus $g(\dd \psi_t(v),\dd \psi_t(v))=g(v,v)+(L_Tg)(v,v) t +o_v(t)$. Note
that $L_TL_Tg\vert_{TH}$ is bounded over the compact sphere of unit vectors  of $h$, so $\vert o_v(t)\vert \le b t^2 h(v,v)$ for some $b>0$, i.e.\ it can be uniformly controlled over $D_p$. This implies  that for sufficiently small $\vert t\vert$ with $t<0$, $\dd \psi_t(D_p)$ consists of spacelike vectors.

The last statement follows from Prop.\ \ref{ptg} as the existence of $U$ implies that for $p,q\in H$,  it cannot be $\psi_{t'}(p)=\psi_{t}(q)$,  $t'\ne t$,  as we would get $\psi_{t'-t}(p)=q$ which leads to a contradiction with  the achronality of $H$.
$\square$ \end{proof}

We recall that a {\em temporal function} $f$ is a $C^1$ function whose gradient is timelike and past-directed. Equivalently, $\ker \dd f$ is spacelike and $\dd f(T)>0$ with $T$ future-directed causal vector field. Temporal functions are {\em time functions}, that is, they increase over every causal curve.

In the next proposition it is assumed that $\vert t\vert \le c$ where $c$ is so small that the previous result applies for every $t$ in this interval.
\begin{proposition} \label{gol}
Let $\kappa<0$ all over $H$ and let $T$ be as in the previous result.
The function $t$ defined in a neighborhood of $H$ by $\p_t=T$, $t\vert_H=0$, is a temporal function for $t< 0$. Moreover, every inextendible causal curve that intersects one $H_{t'}$, $t'<0$ must intersect all the other $H_t$, $t<0$.
\end{proposition}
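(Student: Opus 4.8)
The plan is to establish the two assertions in turn: the first is essentially a reformulation of Theorem \ref{vcw}, while the second rests on the product structure of a tubular neighbourhood of the compact $H$ together with the time-function property.

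For the first assertion I would invoke the equivalent characterisation of a temporal function recalled just before the statement: it suffices to check that $\ker \dd t$ is spacelike and that $\dd t(T)>0$ for the future-directed causal field $T$. Since $\p_t=T$ we have $\dd t(T)=1>0$ at once, and $t$ is smooth because it is the flow parameter of the smooth field $T$. By construction the level set $\{t=t'\}$ is $H_{t'}=\psi_{t'}(H)$, so $\ker \dd t$ at a point of $H_{t'}$ is the tangent space to $H_{t'}$; by Theorem \ref{vcw} this is spacelike for $t'<0$. Hence $t$ restricted to $\{t<0\}$ is a temporal, and therefore a time, function. In particular $\dd t(\gamma')=g(\nabla t,\gamma')>0$ along any future-directed causal curve $\gamma$ that remains in $\{t<0\}$, so $t\circ\gamma$ is \emph{strictly increasing} there.

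For the second assertion the decisive structural input is the compactness of $H$. With $c$ as in the statement, the map $(p,u)\mapsto\psi_u(p)$ is a diffeomorphism of $H\times(-c,c)$ onto an open neighbourhood $U$ of $H$; since $H$ is compact and boundaryless this cylinder has \emph{no lateral boundary}, so $\p U=H_{-c}\cup H_{c}$ and a curve can leave $U$ only by reaching $t=\pm c$. Let $\gamma$ be an inextendible causal curve meeting some $H_{t'}$, $t'<0$, at a point $q$, oriented to the future. I would follow $\gamma$ forward and backward from $q$ and show that $t\circ\gamma$ attains every value in $(-c,0)$, which is precisely the claim. Going forward, on the maximal interval on which $\gamma$ stays in $U\cap\{t<0\}$ the function $t\circ\gamma$ increases strictly from $t'$ to a supremum $t_+\le 0$; going backward it decreases strictly from $t'$ to an infimum $t_-\ge -c$. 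Note that, $t\circ\gamma$ being increasing forward and decreasing backward, the curve cannot exit $U$ through $H_c$ on the forward side (it would first have to pass $t=0$) nor through $H_{-c}$ on the forward side, and symmetrically backward; thus the only ways the relevant intervals can terminate are by $t\circ\gamma$ reaching the boundary value $0$ (resp.\ $-c$) or by the curve being inextendible in $U\cap\{t<0\}$.

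The heart of the matter, and the main obstacle, is excluding a curve ``trapped'' at an interior value, i.e.\ showing $t_+=0$ and $t_-=-c$. Suppose forward that $t_+<0$. Then the future half of $\gamma$, which is future-inextendible, is contained in the compact set $\psi(H\times[t',t_+])$, on which $t$ is a time function; but a compact set carrying a time function contains no inextendible causal curve (otherwise $t\circ\gamma$ would converge and a limit curve through an accumulation point would lie in a single level set of $t$, impossible for a causal curve, cf.\ \cite{minguzzi18b}). This contradiction forces $t_+=0$, whence by continuity and monotonicity $\gamma$ meets $H_t$ for every $t\in[t',0)$. The backward analysis is symmetric: if $t_->-c$ the past half of $\gamma$ is imprisoned in the compact set $\psi(H\times[t_-,t'])$, again impossible, so $t_-=-c$ and $\gamma$ meets $H_t$ for every $t\in(-c,t']$. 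Combining the two ranges shows that $\gamma$ meets every $H_t$ with $t<0$, which completes the argument.
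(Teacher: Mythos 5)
Your proof is correct and follows essentially the same route as the paper: the first assertion via $\dd t(T)=1$ together with the spacelike level sets supplied by Theorem \ref{vcw}, and the second via the observation that an inextendible causal curve trapped between two levels would accumulate on a compact spacelike slice, which the limit curve theorem forbids. Your write-up merely makes explicit the monotonicity of $t\circ\gamma$ and the fact that the tubular neighbourhood has no lateral boundary, details the paper leaves implicit.
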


\begin{proof}
Observe that $\dd t(T)=1$ and $\ker \dd t(\psi_\tau(p))=T_{\psi_\tau(p)} H_{\tau}$ which is spacelike by Thm.\ \ref{vcw}, thus $t$ is temporal on the region $t<0$. The last statement follows from the fact that no inextendible causal curve can accumulate on a compact spacelike manifold, for the limit curve theorem would give a causal curve in that manifold, a contradiction.
$\square$ \end{proof}

For a non-degenerate $H$, chosen $n$ such that $\kappa=-1$, as $L_n\omega=0$ the following (Petersen) Riemannian metric\footnote{A  study of Riemannian metrics of this form over null hypersufaces can be found in \cite{gutierrez16}.} \cite{petersen18b,bustamante21} is
left invariant by the flow $\varphi$
\begin{equation}
\sigma:=\tilde g + \omega \otimes \omega
\end{equation}
that is, $L_n \sigma=0$.

We recall that  given a generator $\gamma$, the future omega-limit set is defined by \cite{minguzzi07f}
\[
\Omega_f(\gamma)=\bigcap_t \overline{\gamma([t,+\infty))}
\]
that is, it is the set of future accumulation points of $\gamma$. A similar definition is given for  $\Omega_p(\gamma)$.
\begin{proposition} \label{vqa}
Let $H$ be non-degenerate, then for every generator $\gamma$ we have $\overline{\gamma}= \Omega_f(\gamma) \cap \Omega_p(\gamma)$.
\end{proposition}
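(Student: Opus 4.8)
The plan is to exploit the observation recalled just before the statement: for a non-degenerate $H$ one may choose $n$ with $\kappa=-1$, so that by Thm.~\ref{ppq} we have $L_n\omega=0$ and the Petersen metric $\sigma=\tilde g+\omega\otimes\omega$ is a genuine (positive definite) Riemannian metric on $H$ satisfying $L_n\sigma=0$. Thus the flow $\varphi_s$ of $n$ is a one-parameter group of isometries of the compact Riemannian manifold $(H,\sigma)$, and the whole proposition will follow from the recurrence properties of isometric flows. Writing a generator as $\gamma(s)=\varphi_s(p)$ (the flow is complete since $H$ is compact, so $\gamma$ is defined on all of $\mathbb{R}$), one inclusion is immediate: $\Omega_f(\gamma)\cap\Omega_p(\gamma)\subset\Omega_f(\gamma)\subset\overline{\gamma([0,+\infty))}\subset\overline{\gamma}$.

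For the reverse inclusion I would first note that $\Omega_f(\gamma)$ and $\Omega_p(\gamma)$ are closed, being intersections of closed sets. Hence it suffices to prove $\gamma(\mathbb{R})\subset\Omega_f(\gamma)\cap\Omega_p(\gamma)$; taking closures then yields $\overline{\gamma}\subset\Omega_f(\gamma)\cap\Omega_p(\gamma)$, which combined with the easy inclusion gives equality (and in fact the stronger $\overline{\gamma}=\Omega_f(\gamma)=\Omega_p(\gamma)$). So the heart of the matter is to show that every point $\gamma(s_0)$ is both future and past recurrent.

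The key step is to produce a sequence $m_k\to+\infty$ with $\varphi_{m_k}\to\mathrm{id}$ uniformly on $H$. Since the $\varphi_s$ are $\sigma$-isometries of the compact manifold $(H,\sigma)$, the family $\{\varphi_m:m\in\mathbb{N}\}$ is equicontinuous and pointwise relatively compact, so by the Arzel\`a--Ascoli theorem (equivalently, because $\mathrm{Isom}(H,\sigma)$ is a compact Lie group by the Myers--Steenrod theorem) some subsequence $\varphi_{n_k}$ converges uniformly to an isometry $h$. Then $\varphi_{n_{k+1}-n_k}=\varphi_{n_{k+1}}\circ\varphi_{n_k}^{-1}\to h\circ h^{-1}=\mathrm{id}$, and passing to a further subsequence so that $m_k:=n_{k+1}-n_k\to+\infty$ gives the desired sequence. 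Consequently $\varphi_{m_k}(p)\to p$, whence $\gamma(s_0+m_k)=\varphi_{s_0}(\varphi_{m_k}(p))\to\varphi_{s_0}(p)=\gamma(s_0)$ with $s_0+m_k\to+\infty$, so $\gamma(s_0)\in\Omega_f(\gamma)$; likewise $\varphi_{-m_k}=\varphi_{m_k}^{-1}\to\mathrm{id}$ yields $\gamma(s_0-m_k)\to\gamma(s_0)$ with $s_0-m_k\to-\infty$, so $\gamma(s_0)\in\Omega_p(\gamma)$. This establishes $\gamma(\mathbb{R})\subset\Omega_f(\gamma)\cap\Omega_p(\gamma)$ and closes the argument.

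The main obstacle is the compactness input for the isometry group, i.e.\ the equicontinuity needed to apply Arzel\`a--Ascoli; this is precisely where the invariant Riemannian metric $\sigma$ is indispensable. Once the recurrent sequence $\varphi_{m_k}\to\mathrm{id}$ is in hand, everything else is routine. I expect no trouble from the possible non-closedness or self-intersection of the generators, since the whole argument is phrased in terms of the flow $\varphi_s$ and the invariant metric $\sigma$ rather than the pointwise geometry of a single generator.
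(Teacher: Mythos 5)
Your argument is correct, and it reaches the same stronger conclusion ($\overline{\gamma}=\Omega_f(\gamma)=\Omega_p(\gamma)$) by a genuinely different route. The paper also reduces everything to the flow-invariant Petersen metric $\sigma$ with $\kappa=-1$, but then invokes Poincar\'e recurrence: the flow preserves the finite $\sigma$-volume of the compact $H$, so for every $\tau>0$ some iterate $\varphi_{k\tau}$ returns a small ball $B(p,r)$ to itself, and the isometry property upgrades ``some point of the ball returns'' to $d_\sigma(\varphi_{k\tau}(p),p)<2r$, giving recurrence of every point. You instead use the full strength of the isometric action: equicontinuity plus Arzel\`a--Ascoli (equivalently, compactness of $\mathrm{Isom}(H,\sigma)$ by Myers--Steenrod) to extract $m_k\to+\infty$ with $\varphi_{m_k}\to\mathrm{id}$ uniformly, so that all points of $H$ recur simultaneously along a single sequence of times, in both time directions at once. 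Your approach buys this uniform recurrence (and would survive in settings where one has an equicontinuous but not volume-preserving action), at the cost of a heavier input; the paper's version is more elementary, needing only volume preservation together with the $1$-Lipschitz property of the flow maps, which is arguably closer in spirit to how $\sigma$ is used elsewhere in \cite{bustamante21}. Two small points to tighten: when you replace $n_k$ by a sparser subsequence to force $m_k=n_{k+1}-n_k\to+\infty$, note explicitly that the subsequence still converges to the same limit $h$, so the consecutive differences still tend to the identity; and justify that the uniform limit $h$ of the isometries $\varphi_{n_k}$ is invertible (a distance-preserving self-map of a compact metric space is automatically surjective, or simply work inside the compact group $\mathrm{Isom}(H,\sigma)$).
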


In other words $\gamma$ future (and past) accumulates to each of its points (compare with the similar result in \cite{moncrief20}).
\begin{proof}
Let us choose $n$ so that $\kappa=-1$.
Let $p\in \gamma$ and let $B(p, r)$ be the open ball of $\sigma$-radius $r$ centered at $p$.

Since the $\sigma$-volume of $H$ is finite and since the volume is preserved by the flow $\varphi$, by the standard application of the Poincar\'e recurrence argument, for each $\tau>0$ there is some $k\in \mathbb{N}\backslash \{0\}$ such that $\varphi_{k\tau}(B(p,r))\cap B(p,r)\ne \emptyset$. As $\tau$ is arbitrary this means that the future-directed generator starting from $p$
intersects $B(p,2r)$ indefinitely  in the future. As $r$ is arbitrary, $p$ is a future accumulation point of  $\gamma$, and similarly in the past case.

This shows $\gamma \subset \Omega_f(\gamma) \cap \Omega_p(\gamma)$ and since the right-hand side is closed   $\overline{\gamma} \subset \Omega_f(\gamma) \cap \Omega_p(\gamma)$. The other direction is clear since, by definition, $\Omega_f(\gamma) \subset \overline{\gamma}$ and similarly in the past case.
$\square$ \end{proof}

The next result was proved in \cite{moncrief83} for vacuum spacetimes under an analyticity  assumption and for closed generators.
See also \cite[Cor.\ 2.13]{petersen19} for the smooth vacuum case of point (ii).

We recall that the region of chronology violation $\mathcal{C}$ consists of those points $p$ through which passes a closed timelike curve.
The set $[p]_U$ is the chronological class of $p$ for the spacetime $U$ in the induced metric.

\begin{theorem} \label{qkw}
Let $H$ be future non-degenerate, let $T$ be a future-directed causal vector field transverse to $H$ such that $g(n,T)=cnst<0$,
and let $H_t:= \psi_t(H)$.
\begin{itemize}
\item[(i)] For each sufficiently small $t>0$, $H_t$ is contained in the chronology violating set $\mathcal{C}$, hence $H\subset \overline{\mathcal{C}}$. More precisely, there is a neighborhood $U\supset H$, such that for each $p\in I^+(H,U)$, $[p]_U=I^+(H,U)$.\\
\item[(ii)] For each neighborhood $U$ of $H$ in which $H$ is achronal (it exists by Prop.\ \ref{ptg}) we have, for each sufficiently small $t<0$, that the compact spacelike hypersurface $H_t\subset U$ is acausal in $U$ and such that $H^+(H_t,U)=H$ and $D^+(H_t, U)=\psi(H,[t,0))=\cup_{s\in [t,0)} H_s$.
\end{itemize}
Thus every non-degenerate horizon $H$ is actually a  Cauchy horizon bounded on one side by a (one single chronology violating class of a) region of chronology violation.
\end{theorem}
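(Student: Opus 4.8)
The plan is to prove (i) and (ii) separately, exploiting the two faces of the slicing $H_t=\psi_t(H)$ furnished by Theorem \ref{vcw}: the slices are timelike for $t>0$ (the chronology-violating side) and spacelike for $t<0$ (the Cauchy-development side). Throughout I fix $n$ with $\kappa=-1$ as permitted by Definition \ref{der}; since $L_Tn=0$ the flows $\psi_t$ and $\varphi_s$ commute and $\psi_{t*}n=n$, so on each $H_t$ the field $n$ is tangent, and for $t>0$ it is future-directed timelike (this is exactly what makes $H_t$ timelike). Consequently every generator pushed up, $\tilde\gamma=\psi_t\circ\gamma$, is a future-directed timelike curve lying inside $H_t$, and by Proposition \ref{vqa} transported through the homeomorphism $\psi_t$ it future- and past-accumulates at each of its own points.

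For (i) the closed timelike curve then comes out cleanly. Given $q\in\tilde\gamma$, I would pick a strictly later point $r=\tilde\gamma(s_0)\in I^+(q,U)$. Since $I^-(r,U)$ is open and contains $q$, and $\tilde\gamma(s_k)\to q$ for some $s_k\to+\infty$, for large $k$ one has $\tilde\gamma(s_k)\in I^-(r,U)$ with $s_k>s_0$; concatenating the future timelike arc from $r$ to $\tilde\gamma(s_k)$ along $\tilde\gamma$ with a future timelike curve from $\tilde\gamma(s_k)$ back to $r$ yields a closed timelike curve through $r$. Because every point of a generator is itself a future accumulation point (Proposition \ref{vqa}), every point of $H_t$ arises as such an $r$, so $H_t\subset\mathcal C$ for all small $t>0$, whence $H\subset\overline{\mathcal C}$.

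The single-class refinement $[p]_U=I^+(H,U)$ I would obtain by upgrading this loop. Each loop traverses a long arc $\tilde\gamma([s_0,s_k])$, so by openness of $I^\pm$ every point of $\overline{\tilde\gamma}$ is chronologically connected, in both directions, to $r$; two distinct orbit closures inside a slice $H_t$ are then joined by a short timelike arc transverse to the generators, available since $H_t$ is timelike and connected, and splicing the loops shows that any two points of $H_t$ are mutually chronologically related. One finally propagates this to all of $I^+(H,U)$: any $x\in I^+(H,U)$ lies above some $p\in H$, hence near a slice $H_t$, and the loops allow one to descend from $x$ to the slice and return to any prescribed target. The hard part will be precisely here: one must choose $U$ thin enough that all these spliced curves stay inside $U$ \emph{uniformly}. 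I expect this to be the main obstacle, and I would handle it by invoking the compactness of $H$ together with the continuity in $p$ of the constructions underlying Theorem \ref{vcw} to extract uniform sizes.

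For (ii) I would work in a neighborhood $U$ where $H$ is achronal (Proposition \ref{ptg}), small enough that Theorem \ref{vcw} and Proposition \ref{gol} apply, and fix $t_0<0$. Acausality of the spacelike slice $H_{t_0}$ in $U$ is immediate: a causal curve meeting $H_{t_0}$ twice cannot remain in $\{t<0\}$, where $t$ is a temporal hence strictly monotone function (Proposition \ref{gol}), and it cannot leave $\{t<0\}$ either, since reaching $\{t\ge0\}$ and returning would force two intersections with the null hypersurface $H=H_0$, contradicting achronality. For the developments, Proposition \ref{gol} guarantees that every inextendible causal curve meeting one slice meets all $H_s$ with $s<0$; thus for $x\in H_s$ with $t_0<s<0$ every past-inextendible causal curve in $U$ meets $H_{t_0}$, giving $\bigcup_{s\in[t_0,0)}H_s\subseteq D^+(H_{t_0},U)$. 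The reverse inclusion and the identification $H^+(H_{t_0},U)=H$ follow because the past generator through any $p\in H$ is past-inextendible, past-complete (as established above for non-degenerate $H$), and stays in $H$, hence never meets $H_{t_0}$; so $p\notin D^+(H_{t_0},U)$ while $p\in\overline{D^+(H_{t_0},U)}$, placing $p$ on the Cauchy horizon. Combining (i) and (ii) yields the final assertion that $H$ is a Cauchy horizon bounded on one side by a single chronology-violating class.
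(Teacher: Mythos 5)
Your constructions for (i) and (ii) follow essentially the paper's route: push the recurrent generators (Prop.\ \ref{vqa}) up to the timelike slices $H_t$, $t>0$, and close them into timelike loops using openness of $\ll$; on the spacelike side use the temporal function of Prop.\ \ref{gol} together with achronality of $H$ in $U$. The genuine gap is in the refinement $[p]_U=I^+(H,U)$. You propose to prove it by splicing loops so as to join arbitrary pairs of points of $I^+(H,U)$ by explicit timelike curves in $U$, and you correctly flag the uniform control of these spliced curves as ``the main obstacle'' --- but you do not resolve it, and the compactness/continuity appeal you sketch does not obviously deliver it: the recurrence times of the generators are not uniformly bounded, distinct orbit closures must still be joined by transverse arcs kept inside $U$, and none of this is quantified. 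The paper sidesteps the whole construction with a soft topological argument: the chronological classes $[q]_U$ are open and pairwise disjoint, their union is the chronology-violating set of $U$, and $\psi(H,(0,a))$ is connected (because $H$ is) and contained in that set by the first part of (i); a connected subset of a disjoint union of open sets lies in a single one of them, and $\psi(H,(0,a))$ is (essentially) $I^+(H,U)$. You should replace your splicing scheme by this observation.

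A secondary remark on (ii): the step ``reaching $\{t\ge 0\}$ and returning forces two intersections with $H$, contradicting achronality'' needs one more ingredient, since achronality only forbids \emph{timelike} connections between points of $H$, while your curve is merely causal. One must upgrade it, e.g.\ by appending the future-directed $T$-integral curve from $p_2$ back up to $H$, or by a push-up argument showing that a causal curve between two points of a locally achronal boundary which leaves the boundary yields a chronological relation. The paper's own proof is comparably terse at this point, so this is a shared omission rather than a defect specific to your approach.
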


Remember that a partial Cauchy hypersurface is an acausal edgeless  set \cite{minguzzi18b}. Here $H_t$ is just a `local' partial Cauchy hypersurface  as it is acausal just in $U$. However, `local' can be dropped if $H$ is achronal. Notice also that $H_t$ is diffeomorphic to $H$ as it is standard between a (local) partial  Cauchy hypersurface and its Cauchy horizon.

The versions of (i) and (ii) for past non-degenerate horizon can be obtained reversing the time orientation. The region of chronology violation will be found in the past of $H$, while the partial Cauchy hypersurfaces with past Cauchy horizon $H$ will be found in the future of $H$.

\begin{proof}
$(i)$. Every point of $\psi_t(H)$ is of the form $p':=\psi_t(p)$ for some $p\in H$. Let $r<p$ be a point slightly before $p$ in the generator $\gamma$ passing through $p$. The curve $\gamma$ accumulates to the future to each point of $\gamma$, hence also to $r$. Since $\psi_t\vert_H$ is a diffeomorphism for sufficiently small $\vert t \vert$, the curve $\gamma':=\psi_t \circ \gamma$ accumulates to the future on $r':=\psi_t(r)$.
But for  $t>0$ since $\kappa<0$, we have that  $\dd \psi_t(n)$ is timelike, thus the curve $\gamma'$ is timelike, which implies $r'\ll p'$. Moreover, since $\gamma'$ passes through $p$ and accumulates to $r'$, we have, by the openness of the chronological relation that there is a closed timelike curve passing through $p'$ which gives the desired result.

The chronological classes are open and their union gives the chronology violating set. Since $H$ is connected, $\psi(H, (0, a))$ is connected and contained in the chronology violating class of $U=\psi(H, (-a , a))$ for sufficiently small $a>0$, hence it is entirely contained in one chronology violating class of $U$.

$(ii)$. By Prop.\ \ref{gol} we know that every inextendible causal curve $\gamma$ passing through $q\in H_{t'}$, intersects $H_{t}$, $t<t'<0$. This fact and point (i) show that  $D^+(H_t, U)=\psi(H,[t,0))=\cup_{s\in [t,0)} H_s$ and $H^+(H_t,U)=H$.
Finally, observe that if there are
$p_1, p_2\in H_t$ connected by a future-directed causal curve $\gamma\subset U$  then $H$ is  non-achronal in $U$, due to the fact that  $\gamma$ intersects $H$ and the timelike integral curve of $T$ passing through $p_2$ intersects $H$.
$\square$ \end{proof}

\subsection{Classification in the 3+1 dimensional case}

By using the invariance of the Petersen metric $\sigma$ under the flow $\varphi$, Bustamante and Reiris  obtained a classification for the topology and for the orbital types
of the null generators  of the compact non-degenerate Cauchy horizons in smooth vacuum 3 + 1-spacetimes \cite{bustamante21}. This classification improved a previous classification in \cite{moncrief20}. As they also stress in their paper, the proof relies only on the fact that the surface gravity can be normalized to $-1$. Indeed, the proof uses  results on isometric actions for  Riemannian 3-dimensional manifolds, there applied to the case of $(H,\sigma)$.

According to our previous results, their theorems generalize to the non-vacuum case as follows (we introduce all the assumption on our horizon $H$ for  clarity)

\begin{theorem}
Let $(M,g)$ be a 3+1-dimensional spacetime which satisfies  the dominant energy condition.
Let $H$ be a smooth compact totally geodesic  non-degenerate horizon (hence generated by lightlike geodesics).
Then
the classification of Theorem 1.1 and Corollary 1.2 of \cite{bustamante21} applies to $H$.
\end{theorem}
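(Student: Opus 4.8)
The plan is to reduce the statement to a pure verification: to show that the geometric hypotheses underlying Theorem 1.1 and Corollary 1.2 of \cite{bustamante21} are met in our setting, and then to invoke that classification as a black box. As emphasized in the discussion preceding this theorem, the Bustamante--Reiris classification does not use the vacuum Einstein equations directly; it uses only (a) that $H$ is a smooth compact totally geodesic null hypersurface in a $3+1$-dimensional spacetime, so that the quotient $TH/n$ carries a positive definite metric and $H$ is a closed $3$-manifold; (b) that the surface gravity can be normalized to $\kappa=-1$ by a smooth rescaling of $n$; and (c) that the associated Petersen metric $\sigma=\tilde g+\omega\otimes\omega$ is invariant under the flow $\varphi_s$ of $n$. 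Their entire analysis then proceeds as Riemannian geometry of the compact manifold $(H,\sigma)$ together with the isometric one-parameter group generated by $n$, the conclusions on topology and orbit types following from structure results for isometric $\mathbb{R}$- and $S^1$-actions on closed $3$-manifolds. Thus it suffices to check that (a)--(c) hold here.

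First I would note that (a) is immediate: $H$ satisfies ($\star$) by hypothesis, and the dimensional assumption makes $H$ a closed $3$-manifold. Next, the dominant energy condition furnishes assumption ($\star\star$), namely $\textrm{Ric}(n,\cdot)\vert_{TH}=0$ (Lemma \ref{viw}). Since $H$ is future non-degenerate (Definition \ref{der}), Theorem \ref{ppq} supplies a smooth future-directed null tangent field $n$ with $\kappa=-1$, $\Lambda=1$ and $L_n\omega=0$; this is (b). Finally, from Lemma \ref{hnw} and $L_n\omega=0$ we get $L_n\sigma=L_n\tilde g+(L_n\omega)\otimes\omega+\omega\otimes(L_n\omega)=0$, which is (c). As $H$ is compact, $n$ is complete, so it is a complete Killing vector field of the compact Riemannian manifold $(H,\sigma)$ generating a one-parameter group of isometries $\varphi_s$, exactly the structure exploited in \cite{bustamante21}.

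With (a)--(c) in hand, the proof of \cite[Thm.\ 1.1, Cor.\ 1.2]{bustamante21} applies with $(H,\sigma,\varphi)$ in place of its vacuum counterpart, yielding the stated classification. The one genuine point to be settled, and the main obstacle, is precisely the claim that vacuum enters their argument nowhere beyond the normalization $\kappa=-1$ and the invariance $L_n\sigma=0$. I would verify this by inspecting each step of their proof: the construction of the Petersen metric, the passage to the orbit space, and the appeal to the classification of isometric actions on closed $3$-manifolds are all manifestly metric-geometric and insensitive to the field equations, the vacuum condition having been used there only to secure the rescaling to $\kappa=-1$, which we have now obtained from the dominant energy condition instead. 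Once this is confirmed, no further computation is required and the theorem follows.
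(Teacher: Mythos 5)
Your proposal is correct and follows essentially the same route as the paper: establish $\kappa=-1$ via Theorem \ref{puy}/\ref{ppq} (using the dominant energy condition in place of vacuum), deduce $L_n\sigma=0$ for the Petersen metric from $L_n\tilde g=0$ and $L_n\omega=0$, and then invoke the Bustamante--Reiris classification, which depends only on the isometric flow on $(H,\sigma)$. The paper's own argument is exactly this observation, stated more briefly in the paragraph preceding the theorem.
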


\section{Conclusions}
In this work we explored properties related to surface gravity of ($\star$) a compact connected smooth totally geodesic null hypersurface $H$ under ($\star\star$) the dominant energy condition (or some weaker condition).
These hypersurfaces arise naturally as compact Cauchy horizons on spacetimes satisfying ($\star\star$).
We stressed that the dominant energy condition, rather than  the vacuum assumption, is sufficient in this connection for many purposes. For instance, the results (a)  `a horizon $H$ that admits an incomplete generator admits a lightlike tangent field $n$ whose surface gravity is a constant' (non-degenerate horizon), and (b) `every non-degenerate horizon is a (local) Cauchy horizon bounded by a region of chronology violation', can be obtained without a  vacuum assumption. Much of the work was in fact devoted to proving (a) with a strategy largely independent of that of \cite{moncrief08,reiris21}. Our method of proof gives new  insights into this type of normalization problems.

\section*{Appendix: Existence and regularity of the homotopy} \label{hom}

In this appendix we prove that ribbons with arbitrary starting horizontal curve $\sigma_0$ exist, that they are smooth and that they can be arbitrarily extended in the longitudinal direction. Furthermore, these ribbons are foliated by horizontal curves that establish a bijection between the longitudinal sides $\gamma_0$ and $\gamma_1$.

Let $\sigma_0:[0,1] \to H$ be a horizontal curve, and let us consider the map
\[
\chi: [0,1]\times \mathbb{R} \to H, \qquad \chi(r,u)=\varphi(\sigma_0(r), u),
\]
where $\varphi$ is the flow of $n$, so that $\dd \varphi(r, u)(\p_u)=n$.

The 1-form on $[0,1]\times \mathbb{R}$ given by $\beta= \varphi^*(n^*)$, satisfies $\beta(\p_u)=n^*(\dd \varphi(\p_u))=n^*(n)=1$ and hence has the form
\[
\beta=\dd u-U(r,u) \dd r
\]
where $U$ is some smooth function. Since $\sigma_0$ is horizontal, we have $U(r,0)=0$. The distribution $\ker \beta$ is integrable on  $[0,1]\times \mathbb{R}$, and the leaves are graphs of maps $r\mapsto (r,s(r))$ with
\[
\frac{\dd s}{\dd r}=U(r,s).
\]
For each starting point $(0,s(0))$ we have one integral leaf. Since the solutions to the above ODE exist and are unique, distinct leaves do not intersect. The image of the leaves under the map $\chi$ are the horizontal curves that foliate the ribbon.

We want to prove that every leaf starting from  $\{0\} \times \mathbb{R}$ reaches $\{1\}\times \mathbb{R}$, in other words, each leaf, regarded as a graph, has an $r$-domain which coincides with $[0,1]$. Since the solutions to the ODE are unique the integral leafs would end up establishing a bijection $s(0)\leftrightarrow s(1)$. Notice that each leaf would be $\varphi$-mapped to a horizontal curve connecting $\gamma_0$ to $\gamma_1$.


Observe that the map $\mu: r\mapsto \chi(r,s(r))=\varphi(\sigma_0(r),s(r))$ has as image a horizontal curve, which means $n^* (\dd \mu (\p_ r))=0$. Since
\[
\dd \mu (\p_r)=\left(\dd \varphi_{s(r)}\right) (\sigma_0'(r))+ s'(r) n ,
\]
we arrive at the ODE
\begin{equation} \label{ode}
s'(r)=-n^*\left(\dd \left(\varphi_{s(r)}\right) (\sigma_0'(r))\right),
\end{equation}
which gives a precise form to the function $U$ introduced above.

\begin{lemma}
Let $\sigma_0:[0,1]\to H$ be a horizontal curve. For every $\rho \in \mathbb{R}$, the maximal solution to the ODE (\ref{ode}) with initial condition $s(0)=\rho$ is defined on $[0,1]$.
\end{lemma}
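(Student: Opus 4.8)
The plan is to prove that the maximal solution of the ODE \eqref{ode} with initial condition $s(0)=\rho$ cannot blow up before reaching $r=1$, so that by the standard continuation theorem for ODEs it must be defined on the whole interval $[0,1]$. The only way the maximal solution could fail to reach $r=1$ is if $s(r)$ diverges (to $+\infty$ or $-\infty$) as $r$ approaches some $r_*\le 1$ from the left; the vector field $U(r,s)=-n^*\!\left(\dd(\varphi_{s})(\sigma_0'(r))\right)$ is smooth on the strip $[0,1]\times\mathbb{R}$, so no finite-time escape through a singularity of the right-hand side is possible. Thus I would establish an \emph{a priori} bound of the form $|s(r)|\le \Phi(r)$ valid on the interval of existence, where $\Phi$ is a continuous function defined on all of $[0,1]$; this rules out the escape scenario and forces the domain to be $[0,1]$.

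The a priori bound is exactly what Lemma \ref{nwx} provides. Indeed, taking $X=\sigma_0'(r)$ in that lemma, and recalling $d\varphi_s=\dd(\varphi_s)$, we have
\[
\left|n^*\!\left(\dd(\varphi_{s})(\sigma_0'(r))\right)\right|\le K\sqrt{g(\sigma_0'(r),\sigma_0'(r))}\,|s|+\left|n^*(\sigma_0'(r))\right|.
\]
Since $\sigma_0$ is $n^*$-horizontal we have $n^*(\sigma_0'(r))=0$, and since $\sigma_0$ is a fixed $C^1$ curve on the compact interval $[0,1]$ the quantity $M:=\max_{r\in[0,1]}\sqrt{g(\sigma_0'(r),\sigma_0'(r))}$ is finite. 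Therefore the right-hand side $U$ of the ODE satisfies the linear-growth estimate $|U(r,s)|\le KM\,|s|$ uniformly in $r\in[0,1]$.

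With this in hand the proof closes by a Grönwall-type argument. From $|s'(r)|=|U(r,s(r))|\le KM\,|s(r)|$ I would deduce $|s(r)|\le |\rho|\,e^{KM r}\le |\rho|\,e^{KM}$ on any subinterval $[0,r_*)$ of the maximal interval of existence contained in $[0,1]$. This bound is independent of $r_*$, so the solution remains in the compact set $\{|s|\le |\rho| e^{KM}\}$ and cannot escape to infinity; by the continuation theorem the maximal solution extends to all of $[0,1]$. The only nontrivial ingredient here is the affine bound of Lemma \ref{nwx}, whose proof relies on the invariance $L_n\tilde g=0$ (Lemma \ref{hnw}) to control the horizontal part of $\dd\varphi_s(X)$; once that lemma is granted, the present argument is a routine application of Grönwall's inequality and is not the main obstacle of the paper. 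The substance lies entirely in having secured the uniform linear-growth estimate, which is precisely the point emphasized after Lemma \ref{nwx} as distinguishing this treatment from earlier approaches.
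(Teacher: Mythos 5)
Your proposal is correct and follows essentially the same route as the paper: both proofs apply Lemma \ref{nwx} with $n^*(\sigma_0')=0$ to obtain the linear-growth bound $|s'(r)|\le C\,|s(r)|$ and then close with a Gr\"onwall-type estimate (the paper phrases it as monotonicity of $s(r)e^{-Cr}$ after reducing to the sign-definite case) before invoking the ODE continuation theorem. No gaps.
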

Observe that since $\sigma_0$ is horizontal the zero map $s(r)=0$ is a solution, and since no two solutions intersect, we have that $\rho>0$ implies $s(r)>0$ for every $r$ in the domain of the solution, and similarly for $\rho<0$.

\begin{proof}
Let us consider the case $\rho>0$, the other case being analogous.
Let $s: I \to \mathbb{R}$ be the maximal solution of the ODE, with $I=[0,\alpha)$, $\alpha \le 1$, the maximal interval.
By Lemma \ref{nwx} there is $K>0$ such that
\[
s'(r)\le \vert s'(r)\vert= \big\vert n^*\left(\dd \left(\varphi_{s(r)}\right) (\sigma_0'(r))\right) \big\vert \le K \sqrt{g(\sigma_0'(r),\sigma_0'(r))} s(r)\le C s(r)
\]
where $C>0$ is a suitable constant.
Then  the solution is bounded\footnote{To see this, define $h(r)=s(r) e^{-Cr}$. We have $h'=(s'-Cs) e^{-Cr}\le 0$, thus $h$ is decreasing, hence $h(r)\le h(0)=s(0)$. Thus $s(r)\le s(0) e^{Cr}\le \rho e^{Cr}$.} on
$I$ and hence, by ODE theory \cite[Cor.\ 3.1]{hartman02}, $I=[0,1]$.
\end{proof}




\subsection{Local horizontal lift}
In this section we give some details on the notion of horizontal lift.

Let $p\in H$ and, on the manifold $H$, let $\{x^a, a=0,\cdots, n-1\}$ be the coordinates of a cylinder coordinate neighborhood $C$ of $p$ such that $\p_0=n$ and $x^0=0$ is a spacelike codimension-2 manifold.
Remember that $n^*$ is a 1-form on $H$ such that $n^*(n)=1$, thus its kernel is a subspace of $TH$ transverse to $n$.
This means that in local coordinates
\[
n^*=\dd x^0+A_i(x^0, x^i) \dd x^i
\]
where $i=1,\cdots, n-1$. This is the typical connection of generalized gauge theories \cite{mangiarotti84,modugno91}
 (in standard gauge theories $A_i$ would not depend on $x^0$, that is $n^*$ would be invariant under the flow $\varphi$ of $n$).


The condition of $n^*$-horizontality for a curve $r\mapsto x(r)$ on $H$ is
\[
\frac{\dd x^0}{\dd r}=-A_i(x^0, x^i) \frac{\dd x^i}{\dd r}
\]
which is a first order ODE for $x^0(r)$ once the map $r\mapsto x^i(r)$ has been assigned. The curve  $\alpha: r\mapsto x^i(r)$ lives on the quotient space $Q:=C/n$ of the cylinder by the flow, and what we are defining would be called {\em horizontal lift} in generalized gauge theories.

For each  $q_1\in Q$, and curve  $\alpha$ with starting point $q_1$ in $Q$,
we have, for each point in the fiber of $q_1$, one and only one horizontal lift. This existence and uniqueness
follows from the existence of solutions to the previous ODE \cite{hartman02}. The smooth dependence of the horizontal lift on the choice of point in the fiber follows from regularity results on the dependence of ODE from the initial conditions \cite{hartman02}.  Since solutions to the ODE are unique, different horizontal lifts of the same curve do not intersect.

\section*{Acknowledgments}
S.G.\ thanks the Department of Mathematics of Firenze for kind hospitality.
E.M. was partially supported by GNFM of INDAM.




%

\end{document}